\documentclass[a4paper,11pt]{article}

\usepackage[english]{babel}
\usepackage[utf8]{inputenc}
\usepackage[T1]{fontenc}

\usepackage{graphicx} 
\usepackage[margin=1in]{geometry}
\usepackage{algorithm}
\usepackage[noend]{algpseudocode}
\usepackage{amsthm}
\usepackage{amssymb,amsmath}  
\usepackage{thmtools}  
\usepackage{mathtools}
\usepackage[dvipsnames]{xcolor}
\definecolor{VUBlue}{rgb}{0,0.466,0.702}
\usepackage[colorlinks=true, allcolors=VUBlue]{hyperref}
\usepackage[capitalize]{cleveref}
\usepackage{todonotes}
\usepackage{placeins}
\usepackage{subcaption}
\usepackage{booktabs}
\usepackage{microtype}
\usepackage[normalem]{ulem}
\usepackage{newtxtext}
\usepackage{newtxmath}

\usepackage{enumitem}
\setlist[1]{labelindent=\parindent}
\setlist[enumerate]{label=(\arabic*)}
\setlist[itemize]{noitemsep}

\usepackage{authblk}

\author[1]{Sebastian Angrick}
\author[2,3]{Ben Bals}
\author[4]{Paweł Gawrychowski}
\author[5]{Solon P. Pissis}
\author[6]{Yuki Yonemoto}
\affil[1]{Karlsruhe Institute of Technology, Germany}
\affil[2]{CWI, Amsterdam, The Netherlands}
\affil[3]{Vrije Universiteit, Amsterdam, The Netherlands}
\affil[4]{University of Wrocław, Poland}
\affil[5]{The Cyprus Institute, Nicosia, Cyprus}
\affil[6]{Kyushu University, Japan}
\date{\vspace{-.5cm}}

\declaretheorem[numberwithin=section]{theorem}

\declaretheorem[sibling=theorem]{corollary}
\declaretheorem[sibling=theorem]{lemma}
\declaretheorem[sibling=theorem]{definition}

\declaretheorem[sibling=theorem]{remark}

\declaretheorem[sibling=theorem]{observation}

\declaretheorem[sibling=theorem,title=Open Question]{openquestion}

\declaretheorem[numbered=no,title=Intuitive Statement of \Cref{thm:td-lb}]{informalTheoremBMM}
\declaretheorem[numbered=no,title=Strong Exponential Time Hypothesis (SETH)~\cite{ImpagliazzoP01}]{seth}
\declaretheorem[numbered=no,title=Orthogonal Vectors Hypothesis (OVH)]{ovh}
\declaretheorem[numbered=no,title=Triangle Detection Hypothesis]{tdh}

\title{String Matching in (Block) Graphs: \\ A Full Classification by Walk Length} 

\usepackage{amsthm,amssymb,amsmath}  
\usepackage{mathtools}
\usepackage{graphicx}
\usepackage{xspace}
\usepackage{thmtools}
\usepackage{url}
\usepackage{comment}
\usepackage{subcaption}
\usepackage{bm}
\usepackage{multirow}
\usepackage{listings}

\usepackage{enumitem}
\setlist[1]{labelindent=\parindent}
\setlist[enumerate]{label=(\arabic*),noitemsep}

\usepackage{algorithm}
\usepackage[noend]{algpseudocode}

\def\dd{\mathinner{.\,.}}
\newcommand{\cO}{\mathcal{O}}
\newcommand{\cA}{\mathcal{A}}
\newcommand{\Oish}{\widetilde{\cO}}
\newcommand{\Omegaish}{\widetilde{\Omega}}
\newcommand{\Occ}{\mathrm{Occ}}

\newcommand{\per}{\textsf{per}}
\newcommand{\border}{\textsf{border}}
\newcommand{\setsize}[1]{\left|#1\right|}
\DeclareMathOperator{\degreeOp}{\delta}
\DeclareMathOperator{\indegreeOp}{\degreeOp^{-}}
\DeclareMathOperator{\outdegreeOp}{\degreeOp^{+}}
\newcommand{\indegree}[1]{\indegreeOp\hspace{-0.2em}\left(#1\right)}
\newcommand{\outdegree}[1]{\outdegreeOp\hspace{-0.2em}\left(#1\right)}
\newcommand{\N}{\mathbb{N}}

\newcommand{\matchVec}[1]{\vec{M_{#1}}}
\newcommand{\interMat}[1]{T_{#1}}
\newcommand{\matchMat}[1]{A_{V_{#1}}}
\newcommand{\resultMat}[1]{A'_{V_{#1}}}
\newcommand{\adjMat}[2]{\mathrm{Adj}_{V_{#1\to{}#2}}}

\newcommand{\smol}{\varepsilon}
\newcommand{\polylog}{\mathrm{polylog}}

\newcommand{\defproblem}[3]{
\vspace{2mm}
\noindent\fbox{
   \begin{minipage}{0.96\columnwidth}
   \textsc{#1}\\
   {\bf{Input:}} #2  \\
   {\bf{Output:}} #3
   \end{minipage}
   }
   \vspace{2mm}
}

\newcommand{\SMLG}{SMLG\xspace}
\newcommand{\hSMLG}{$h$-SMLG\xspace}
\newcommand{\SMBG}{SMBG\xspace}
\newcommand{\bSMBG}{$b$-SMBG\xspace}

\newcommand{\ie}[1]{(i.e.,~#1)}
\newcommand{\eg}[1]{(e.g.,~#1)}

\begin{document}

\maketitle

\begin{abstract}
We consider directed graphs in which the nodes are labeled with strings.
A walk in such a graph naturally corresponds to the concatenation of the
visited nodes' labels. These graphs are widely used in bioinformatics to compactly describe large collections
of highly similar genomes. Given such a graph $G=(V,E)$ and a pattern of length $m$,
we seek a walk whose corresponding string has an occurrence of the pattern.
We call this the \SMLG problem. Amir et al.~[J. Algorithms, 2000] showed that 
\SMLG can be solved in $\cO(m\setsize{E} + N)$ time, where $N$ is the total
length of all node labels. Equi et al.~[ACM Trans.~Algorithms, 2023] showed that this is essentially optimal (under SETH).

The existing lower bound assumes that the sought walk is of length $\Theta(\setsize{V})$.
Thus, we might be able to bypass this lower bound by restricting the walk length to $b-1$, which naturally reduces to having as input a directed graph whose set of nodes is partitioned into $b$ \emph{blocks}. Then, we seek a walk in this graph that starts in the first block and ends in the last block. We call this the \bSMBG problem.
Equi et al.~[Algorithmica, 2023] showed that, if we impose no restriction on $b$, the existing algorithm of Amir et al.~is essentially optimal for \bSMBG (again under SETH).
We provide a more fine-grained classification that essentially settles
the complexity of \bSMBG parameterized by $b$:
\begin{enumerate}
    \item For $b=2$, Pissis [SOSA 2025] already provided a simple $\cO(m + \setsize{E}+N)$-time algorithm.
    \item We design a new $\Oish(m + \setsize{E} + N)$-time algorithm for $b=3$. As a direct implication of this result, the \SMLG problem for $b\leq 3$ (walks of length at most $2$) also admits near-linear-time complexity.
    \item There is no $\cO((m\setsize{E})^{1-\smol} + N)$-time \emph{combinatorial} algorithm, for any $b \geq 4$ and $\smol>0$.
    \item There is an algorithm working in $\cO\left(\max(\setsize{V}, m)^\omega+N\right)$ time, where $\omega$ is the matrix multiplication exponent, which is conditionally optimal for graphs with $b \geq 4$ blocks.
    \item  Under SETH, no
    $\cO((m\setsize{E})^{1-\smol} + N)$-time algorithm exists, for any $b=\omega(\log\setsize{V})$ and $\smol > 0$.
\end{enumerate}

Although our motivation is primarily of a theoretical nature, we stress that our algorithms are simple to implement. As such, they may contribute to practical advancements in applications where the \SMLG problem is an important primitive, such as in the analysis of pangenome graphs.
\end{abstract}

\clearpage

\section{Introduction}
\label{sec:problem-def}

String matching is the classical problem of finding the occurrences of a \emph{pattern} $P=P[1\dd m]$ of length $m$ in a \emph{text} $T=T[1\dd n]$ of length $n$.
The problem is by now well-understood: there exist $\cO(n)$-time algorithms to solve the problem (e.g.,~\cite{DBLP:journals/siamcomp/KnuthMP77,DBLP:books/daglib/0020103}), even when using only $\cO(1)$ extra space (e.g.,~\cite{DBLP:journals/jacm/CrochemoreP91,DBLP:journals/tcs/BreslauerGM13}). Since much of today's data is interconnected, it is natural to study string matching not only in linear texts but also in \emph{labeled graphs}.
Indeed, large-scale labeled graphs are now prevalent across diverse areas, including graph databases~\cite{DBLP:journals/csur/AnglesG08}, graph mining~\cite{DBLP:journals/csur/ChakrabartiF06}, and, more recently, bioinformatics~\cite{DBLP:journals/nc/BaaijensBDVP22}. While most applications
require sophisticated operations on these graphs, they often rely on primitives that locate graph walks whose node labels match a given pattern~\cite{DBLP:journals/bioinformatics/RautiainenMM19,Rautiainen2020,DBLP:journals/almob/AsconeBCEGGP26}.\footnote{Indeed, string matching in node-labeled graphs is a central topic of large bioinformatics consortia such as ALPACA (\url{https://alpaca-itn.eu/}) or PANGAIA (\url{https://www.pangenome.eu/}).} See \Cref{fig:VG} for an example.

\begin{figure}[ht]
    \centering
    \includegraphics[width=0.8\linewidth]{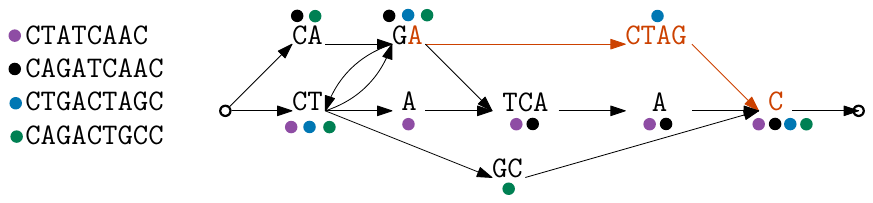}
    \caption{The DNA sequences on the left are represented by the node-labeled graph on the right. The pattern $P=P[1\dd 6]=\texttt{ACTAGC}$ is matched in a walk of length $2$ (in orange). This type of graph is known as \emph{variation graph}~\cite{DBLP:journals/nc/BaaijensBDVP22}, and it is a widely used representation for analyzing pangenomes.}
    \label{fig:VG}
\end{figure}
\subsection{String Matching in Graphs and the State of the Art}

Let $\Sigma=\{1,2,\ldots,\sigma\}=[1,\sigma]$ be an \emph{alphabet} of $\sigma$ elements that we call \emph{characters}. A \emph{node-labeled graph} $G = (V , E, \lambda)$ is a directed graph $(V , E)$ equipped with a \emph{labeling function} $\lambda : V \rightarrow \Sigma^*$ that defines which string over $\Sigma$ is assigned to each node as a \emph{label}. 
We set $N_G\coloneqq\sum_{v\in V}|\lambda(v)|$, the total length of the node labels in $G$; we may drop the subscript $G$ from $N_G$ when the context is clear. 
Given a pattern $P=P[1\dd m]$ of length $m$ over $\Sigma$, we say that $P$ has a \emph{match} in $G$ if there is a walk $v_1,\ldots, v_h$
in $G$ such that $P = \alpha\cdot \lambda(v_2) \cdots \lambda(v_{h-1}) \cdot\beta$, $\alpha$ is a suffix of $\lambda(v_1)$ and $\beta$ is a prefix of $\lambda(v_h)$; we also say that $P$ \emph{occurs} in $G$, and that $v_1,\ldots, v_h$ is an \emph{occurrence} of $P$ in $G$.

We next define the problem of string matching in node-labeled graphs (cf.~\cite{DBLP:journals/jal/AmirLL00,DBLP:journals/talg/EquiMTG23}).

\defproblem{String Matching in Labeled Graphs (\SMLG)}
{A node-labeled graph $G = (V , E, \lambda)$ and a pattern $P=P[1\dd m]$.}
{\emph{True} if and only if there is at least one occurrence of $P$ in $G$.}

The following upper bound is known due to Amir, Lewenstein, and Lewenstein~\cite{DBLP:journals/jal/AmirLL00}.

\begin{theorem}[\cite{DBLP:journals/jal/AmirLL00}]\label{the:SMLG-UB}
   The \SMLG problem can be solved in $\cO(m\setsize{E} + N)$ time.
\end{theorem}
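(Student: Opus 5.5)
We decide SMLG by a dynamic program over pairs (node, pattern position), propagated along edges in the style of a product automaton. The key object is the set
\[
S = \{(v,i) : P[1\dd i] \text{ is a suffix of the string spelled by some walk ending at } v,\ \text{and the walk's last label is read completely}\}.
\]
More precisely, $(v,i)\in S$ means there is a walk $v_1,\dots,v_k=v$ with $P[1\dd i]=\alpha\cdot\lambda(v_2)\cdots\lambda(v_k)$, where $\alpha$ is a nonempty suffix of $\lambda(v_1)$. For $k=1$ this just says that $P[1\dd i]$ is a suffix of $\lambda(v)$. The pattern occurs if and only if either $P$ occurs inside a single label, or some edge $(u,v)$ and some $(u,i)\in S$ with $i<m$ satisfy that $P[i+1\dd m]$ is a prefix of $\lambda(v)$. (Here $\lambda(v)$ may be shorter or longer than $P[i+1\dd m]$; if shorter, we instead continue the propagation.)

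\textbf{Preprocessing.} First I would handle occurrences inside a single label by running KMP on each label, which costs $\cO(N+m)$ in total. Then I would build a structure answering in $\cO(1)$ the query ``does $\lambda(v)$ occur in $P$ at position $j$'' and ``is $P[j\dd m]$ a prefix of $\lambda(v)$''. The suffix tree or suffix array of $P\#\lambda(v_1)\#\cdots$, built in $\cO(N+m)$ time with LCE queries in $\cO(1)$, gives exactly this. The starting pairs, namely all $(v,i)$ such that $P[1\dd i]$ is a suffix of $\lambda(v)$, are read off from the border chain of KMP matching of $P$ against $\lambda(v)$ restricted to the final position. Enumerating them costs $\cO(m)$ per node, and $\cO(m|V|)\subseteq\cO(m|E|+N)$ if we assume every node is incident to an edge; isolated nodes only need the in-label check.

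\textbf{Propagation.} For each edge $(u,v)$ and each $i$ with $(u,i)\in S$ we do the following. If $|\lambda(v)|\ge m-i$, we test with one LCE query whether $P[i+1\dd m]$ is a prefix of $\lambda(v)$; if so, we answer True. Otherwise we test whether $\lambda(v)=P[i+1\dd i+|\lambda(v)|]$, and if so we add $(v,i+|\lambda(v)|)$ to $S$. Each pair $(v,i)$ is inserted at most once, and each processed pair scans the out-edges of its node once. Summed over all $i$, this gives $\cO(m|E|)$ constant-time steps. The order of processing does not matter, since we use a worklist, and cycles are harmless because states are deduplicated. Correctness follows by induction on walk length: a pair enters $S$ exactly when a witnessing walk exists, and an occurrence $v_1,\dots,v_h$ is detected when the edge $(v_{h-1},v_h)$ is processed with $i=m-|\beta|$.

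\textbf{Main obstacle.} The main difficulty is keeping each edge transition at $\cO(1)$ rather than $\cO(|\lambda(v)|)$, because naive comparison would cost $m\sum_{(u,v)}|\lambda(v)|$. The LCE structure on the concatenation of $P$ and all labels resolves this, giving the claimed $\cO(m|E|+N)$ total.
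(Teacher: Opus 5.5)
Your proposal is correct and follows the same scheme as the Amir et al.\ algorithm, which the paper only cites for this theorem and recaps as \Cref{alg:sota-overview}: keep per node the set of prefix lengths of $P$ matched by walks ending there, seed it from suffix--prefix overlaps of labels with $P$, extend along each edge by checking that the successor's label occurs at the right position, and finish with a prefix check on the last label, for $\cO(m)$ work per edge. The differences are minor: your deduplicated worklist is what lets the propagation run on a general graph with cycles, whereas \Cref{alg:sota-overview} simply sweeps the blocks in order; and where you use $\cO(1)$ LCE queries on a generalized suffix tree, \Cref{alg:sota-overview} precomputes each label's occurrences in $P$ with KMP in $\cO(m+\setsize{\lambda(v)})$ time, which already fits the bound, so the ``main obstacle'' does not actually require a suffix tree.
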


The following lower bound, conditioned on the
Strong Exponential Time Hypothesis (SETH)~\cite{ImpagliazzoP01}, is known due to Equi, Mäkinen, Tomescu, and Grossi~\cite{DBLP:journals/talg/EquiMTG23}.

\begin{theorem}[\cite{DBLP:journals/talg/EquiMTG23}]\label{the:SMLG-LB}
   Under SETH,
   the \SMLG problem 
   cannot be solved in $\cO((m\setsize{E})^{1-\smol} + N)$ time, for any constant $\smol > 0$, even if every node label is a single character of a binary alphabet 
   and $G$ is a DAG, where the sum of the out- and in-degree of any node is at most 3. 
\end{theorem}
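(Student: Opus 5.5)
The plan is a reduction from Orthogonal Vectors, following Backurs and Indyk's template for edit distance. Under SETH, given sets $A,B\subseteq\{0,1\}^d$ of $n$ vectors each with $d=\omega(\log n)$, we cannot decide in $\cO(n^{2-\smol}\mathrm{poly}(d))$ time whether some $a\in A$ and $b\in B$ are orthogonal. We will build a pattern $P$ from $A$, of length $m=\Oish(nd)$, and a DAG $G$ from $B$, with $\setsize{E}=\Oish(nd)$ and single binary-character labels. The construction will guarantee that $P$ occurs in $G$ if and only if an orthogonal pair exists. Then $m\setsize{E}=\Oish(n^2)$, so an $\cO((m\setsize{E})^{1-\smol}+N)$ algorithm would refute OVH, and therefore SETH; note that $N=\setsize{V}$ is linear here.

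\textbf{Vector gadgets.} Each $a\in A$ becomes a pattern word $P_a$ by encoding each coordinate with a fixed-length binary codeword, say $a[j]=0\mapsto \texttt{ab}$ and $a[j]=1\mapsto\texttt{aa}$ over a binary alphabet. Each $b\in B$ becomes a graph gadget $G_b$: a chain of $d$ small diamond subgadgets. For coordinate $j$, the subgadget offers both codewords when $b[j]=0$, and only the codeword for $0$ when $b[j]=1$. The key property is that $P_a$ can be spelled through $G_b$ if and only if $a\cdot b=0$. Each gadget is padded with separator strings, e.g.\ long runs $\texttt{b}^k$, so that a walk spelling $P_a$ must traverse an entire gadget aligned from its start. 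Degrees stay bounded by $3$ because each diamond is a binary split and merge.

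\textbf{Global structure.} The pattern is $P=\texttt{x}\,P_{a_1}\texttt{x}\,P_{a_2}\cdots \texttt{x}\,P_{a_n}\texttt{x}$, with a separator $\texttt{x}$ encoded in binary. The graph must let the pattern ``skip'' through most vectors and check exactly one $P_{a_i}$ against some $G_b$. Following Equi et al., we use three parallel tracks:
\begin{itemize}
\item an upper universal track, a loop-free chain of length about $n$ copies of a universal gadget that spells any $P_a$;
\item a middle row consisting of the parallel gadgets $G_{b_1},\dots,G_{b_n}$;
\item a lower universal track.
\end{itemize}
Transitions go from the upper track into the middle row, and from the middle row into the lower track, placed at separator positions. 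Any full match of $P$ must descend exactly once, at some index $i$, through some $G_{b_j}$, and the universal gadgets absorb all other $P_{a_\ell}$. To keep the graph a DAG with $\Oish(nd)$ edges, we realise the middle row as a single layer: all $G_{b_j}$ share one entry hub and one exit hub. Hubs of large degree are then replaced by binary trees, which keeps the degree bound at $3$ at a logarithmic cost in size.

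\textbf{Main obstacle.} The hardest step is the soundness direction. We must show that no spurious match can start or end mid-gadget, or shift alignment by a few characters. This will be handled by choosing separators and codewords with a synchronization property: $\texttt{x}$ occurs in the encoded strings only at gadget boundaries, e.g.\ $\texttt{x}=\texttt{a}^{L}$ with $L$ longer than any run of $\texttt{a}$'s inside codewords. We then argue by a case analysis on where the pattern's first separator lands in the graph. Given synchronization, the lengths check out, since the universal tracks have length $\Theta(nd)$.
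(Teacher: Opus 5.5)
The paper gives no proof of this statement; it quotes it from Equi et al., and your outline follows the architecture of their OV reduction. The gap is in the step you yourself flag as the crux: the synchronization you commit to is false.

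\textbf{The synchronization fails.} With $c(0)=\texttt{ab}$, $c(1)=\texttt{aa}$ and $\texttt{x}=\texttt{a}^L$, every codeword begins with $\texttt{a}$. So each separator merges with the adjacent codewords into one longer $\texttt{a}$-run, and nothing determines where inside that run the separator lies. Choosing $L$ longer than the runs inside codewords, or even inside whole encoded vectors, does not help.

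Concretely, let $n=2$, $d=3$, $a_1=a_2=(1,0,1)$, $b_1=b_2=(0,0,1)$. This is a NO instance. Let the hubs spell $\texttt{x}$, and consider the following walk:
\begin{itemize}
\item it reads the last $L-2$ characters of the upper separator before $U_1$;
\item it lets $U_1$ spell $\texttt{aa}\,\texttt{aa}\,\texttt{ab}$ and crosses the entry hub;
\item it lets $G_{b_1}$ spell $\texttt{aa}\,\texttt{aa}\,\texttt{ab}$, which is legal since $b_1[1]=b_1[2]=0$;
\item it crosses the exit hub and stops after the first codeword $\texttt{aa}$ of the first lower gadget.
\end{itemize}
It spells $\texttt{a}^{L+3}\texttt{b}\,\texttt{a}^{L+5}\texttt{b}\,\texttt{a}^{L+2}$, which equals $P=\texttt{x}\,\texttt{aaabaa}\,\texttt{x}\,\texttt{aaabaa}\,\texttt{x}$ for every $L\ge 2$. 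Every separator is displaced by one codeword, so $G_{b_j}$ ends up testing $a_i[1\dd d-1]$ against $b_j[2\dd d]$ rather than $a_i$ against $b_j$.

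\textbf{How to repair it.} Each separator must be delimited by the other letter.
\begin{itemize}
\item Your earlier suggestion of $\texttt{b}$-runs as padding actually does this. No codeword contains $\texttt{bb}$ and every codeword starts with $\texttt{a}$, so a separator $\texttt{b}^k$ with $k\ge 2$ ends exactly where its maximal run ends.
\item A variant that also fits your diamonds is $0\mapsto\texttt{bbb}$, $1\mapsto\texttt{bab}$: split node $\texttt{b}$, middle node $\texttt{b}$ or $\texttt{a}$, merge node $\texttt{b}$, dropping the $\texttt{a}$ branch where $b[j]=1$. Take $\texttt{x}=\texttt{a}^L$, where every root-to-leaf path of the hub trees consists of exactly $L\ge 2$ nodes labelled $\texttt{a}$. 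Then the maximal $\texttt{a}$-runs of length at least $2$ are exactly the separators, each flanked by $\texttt{b}$, and alignment is forced.
\end{itemize}
Length-$2$ codewords also break your degree argument. A chain of such diamonds must split and merge at the same node, which then has degree $4$.

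\textbf{Track length and the soundness count.} ``About $n$'' universal gadgets per track is not enough. With $n$ separator-delimited copies, $P$ already occurs in the upper track plus the entry hub without touching any $G_b$. Each track must hold exactly $n-1$ gadgets.

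Soundness is then a counting argument, not the length estimate you give. Once alignment is forced, each track together with its hub offers only $n$ separator slots, while $P$ has $n+1$ separators. Hence every occurrence crosses both hubs and spells a complete $P_{a_i}$ through some $G_{b_j}$. Completeness uses the last $i-1$ upper and the first $n-i$ lower universal gadgets.
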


\subsection{Our Motivation and Parameterization}

Given \Cref{the:SMLG-UB} and \Cref{the:SMLG-LB}, we were thus motivated to ask whether \SMLG instances restricted to walks of \emph{bounded length} can be solved significantly faster than $\cO(m\setsize{E} + N)$ time. More concretely, what if we are looking for occurrences of $P$ in $G$ of the form $v_1,\ldots,v_{h}$ for a fixed $h\in \N$? Namely, we parameterize the \SMLG problem by walk length $(h-1)$. Pissis showed that this problem can be solved in $\cO(m+\setsize{E}+N)$ time for $h=2$~\cite{DBLP:conf/sosa/Pissis25} (for an earlier, slower solution for $h=2$, see~\cite[Lemma 8]{DBLP:journals/jda/Thachuk13}). Let us now formalize the problem.

\defproblem{$h$-String Matching in Labeled Graphs (\hSMLG)}
{A node-labeled graph $G = (V , E, \lambda)$ and a pattern $P=P[1\dd m]$.}
{\emph{True} if and only if there is at least one occurrence of $P$ in $G$ of the form $v_1,\ldots,v_h$.}

Furthermore, real--world instances of this problem \eg{arising from pangenome graphs~\cite{DBLP:journals/nc/BaaijensBDVP22}} often \underline{include large node labels}. For instance, a recent survey~\cite{Venkataramana} shows that the average node label length
is in the order of $10^3$ and for some graphs it is close to $10^6$.
Intuitively, this is because these graphs are constructed over a collection of \emph{highly similar} genomes. Thus, when the pattern $P$ is relatively short, it is reasonable to expect that an occurrence of $P$ may span only a few nodes in $G$ \ie{that the sought walk is somewhat short}. Note that patterns may be short in practice for two reasons: (i) they are either short DNA sequencing reads; or (ii) they are short DNA fragments used as anchors for alignments.

For algorithmic and application domain reasons, it is common to consider the more structured \emph{block graphs}. 
Let us give a formal definition from~\cite{DBLP:conf/wabi/MakinenCENT20}; see \Cref{fig:fgpm} for an example. 
The definitions and notation introduced for \SMLG carry over to block graphs. 

\begin{definition} \label{def:block-graph}
A \emph{block graph} $G=(V,E,\lambda)$
is a node-labeled graph satisfying:
\begin{enumerate}
    \item The set $V$ can be partitioned into a sequence $V_1,V_2,\ldots,V_{b}$ of \emph{blocks};
    \item If $(u,v)\in E$ then $u\in V_i$ and $v\in V_{i+1}$, for some $i\in[1,b)$.
\end{enumerate}   
When in addition to properties (1) and (2),
we have the property that, if for all $u,v\in V_i$, $i\in[b]$, we have $\lambda(u)\neq \lambda(v)$ for $u\neq v$, we call $G$ a \emph{block graph with unique labels}.
\end{definition}

\begin{figure}[ht]
    \centering
    \includegraphics[width=.35\linewidth]{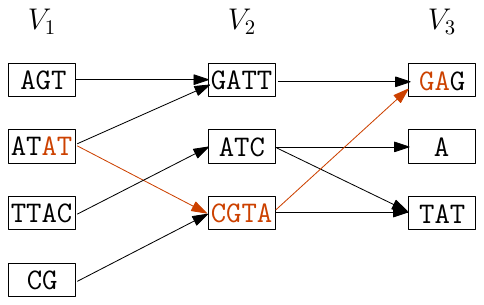}
    \caption{A block graph with an occurrence of $P=P[1\dd 8]=\texttt{ATCGTAGA}$ (in orange).}
    \label{fig:fgpm}
\end{figure}

\defproblem{String Matching in Block Graphs (\SMBG)}
{A block graph $G = (V , E, \lambda)$ with $b$ blocks and a pattern $P=P[1\dd m]$.}
{\emph{True} if and only if there is at least one occurrence of $P$ in $G$.}

When $h$ is bounded, we can efficiently reduce \hSMLG on $G=(V,E,\lambda)$ to string matching in a block graph
$G'=(V',E',\lambda')$ with $b\coloneqq h$ blocks of nodes; that is, $V'=V'_1\sqcup\ldots \sqcup V'_b$. Informally, each block $V'_i$ is set to $V$ ($G'$ gets $h$ copies of $V$), there is an edge in $E'$ from $v \in V'_i$ to $v' \in V'_{i+1}$ if and only if $(v,v')\in E$ ($G'$ gets $h-1$ copies of $E$), and the $\lambda'$ values are inherited from $\lambda$. Then, an occurrence $v_1,\ldots,v_h$ of $P$ exists in $G$ if and only if an occurrence $u'_1,\ldots,u'_b$ of $P$ exists in $G'$. 

\SMBG can be solved in $\cO(m\setsize{E} + N)$ time using \Cref{the:SMLG-UB}. Ascone et al.~\cite{DBLP:journals/almob/AsconeBCEGGP26} showed how to solve \SMBG
in $\cO(bm + N + \setsize{E})$ time for a specific class of block graphs: when all labels in each $V_i$ have the same length $k_i$. Another specific class of block graphs has also been investigated:
when, for all pairs $(V_i,V_{i+1})$, all possible edges are present. In this case,
the set $E$ becomes implicit, and the resulting graph is known as an \emph{elastic-degenerate string}~\cite{DBLP:journals/iandc/IliopoulosKP21}.
Many algorithms have been proposed for \SMBG on this class of block graphs~\cite{DBLP:journals/siamcomp/BernardiniGPPR22,DBLP:conf/cpm/AoyamaNIIBT18,DBLP:conf/cpm/GrossiILPPRRVV17}.
Similar to \Cref{the:SMLG-LB}, Equi et al.~\cite{DBLP:journals/algorithmica/EquiNACTM23} showed that \SMBG cannot be solved in $\cO((m\setsize{E})^{1-\smol}+N)$ time, for any $\smol>0$, unless SETH fails. 

As we are interested in the parameterized version $h$-\SMLG of the more general \SMLG problem, we define the corresponding version of string matching in block graphs.

\defproblem{$b$-String Matching in Block Graphs (\bSMBG)}
{A block graph $G = (V , E, \lambda)$ with $b$ blocks and a pattern $P=P[1\dd m]$.}
{\emph{True} if and only if there is at least one occurrence of $P$ in $G$ of the form $v_1,\ldots,v_b$.}

Since we seek an occurrence of $P$ in $G$ of the form $v_1,\ldots,v_b$, it must form a walk in $G$ of length exactly $(b-1)$, which means that the occurrence $v_1,\ldots,v_b$ spans all $b$ blocks of $G$.
Our algorithms and lower bounds can be adapted to lift this restriction to allow occurrences using some of the blocks. To simplify the presentation, we stick to this version of the problem.

Recall that \hSMLG for $h=2$ can be solved in linear time~\cite{DBLP:conf/sosa/Pissis25}.
This algorithm solves \bSMBG for $b=2$ in linear time and then applies the above reduction.
The contrast between this result and the general $(m\setsize{E})^{1-\smol}$ lower bound brings us to the following basic question:

\begin{center}
    \emph{Can we solve \bSMBG in $\cO((m\setsize{E})^{1-\smol}+N)$ time for some fixed $b\ge 3$ and some $\smol>0$?}
\end{center}

\subsection{Our Contributions and Techniques}

If we consider the set of all possible instances of the \bSMBG problem,
the algorithm by Amir et al.~\cite{DBLP:journals/jal/AmirLL00} is essentially optimal under SETH (up to subpolynomial improvements)~\cite{DBLP:journals/algorithmica/EquiNACTM23}. 
We thus take a finer look at the complexity landscape of \bSMBG.
Notably, the instances underlying this hardness~\cite{DBLP:journals/algorithmica/EquiNACTM23} are (1) \textbf{wide} (the number of blocks is in $\Theta(\setsize{V})$) and (2) \textbf{sparse} (they have a constant number of edges per node).
Thus, it is natural to ask whether we can improve on the state of the art for (1) \textbf{narrower}, or (2) \textbf{denser} instances.
Towards these directions, a simple linear-time algorithm is known for $b=2$ blocks~\cite{DBLP:conf/sosa/Pissis25}. We extend this regime by showing a near-linear-time algorithm for $b=3$ blocks.

\begin{restatable*}{theorem}{thmThreeBlocks}\label{thm:three-blocks}
The \bSMBG problem for $b=3$ can be solved in $\Oish(m + \setsize{E} + N)$ time. This directly implies that the \hSMLG problem for $h=3$ can be solved in $\Oish(m + \setsize{E} + N)$ time.    
\end{restatable*}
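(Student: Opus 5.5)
We must decide whether there exist $u\in V_1$, $v\in V_2$, $w\in V_3$ with $(u,v),(v,w)\in E$ and $P=\alpha\cdot\lambda(v)\cdot\beta$. Here $\alpha$ is a suffix of $\lambda(u)$ and $\beta$ is a prefix of $\lambda(w)$. The middle label must occur in $P$ entirely, so $|\lambda(v)|\le m$ for every relevant $v$. The plan is to reduce to a batch of ``suffix/prefix compatibility'' queries, answered with suffix-tree and two-dimensional range machinery.

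\textbf{Step 1: preprocessing.} Build a generalized suffix tree (or suffix array with LCP) for $P$, its reverse, and all labels. For each $u\in V_1$, compute $\ell_u$, the length of the longest suffix of $\lambda(u)$ that is a prefix of $P$; the admissible values of $|\alpha|$ are then the borders of $P[1\dd\ell_u]$, i.e.\ the ancestors of the node of $P[1\dd\ell_u]$ in the border (failure) tree of $P$. Symmetrically, for each $w\in V_3$ compute $r_w$, the longest prefix of $\lambda(w)$ that is a suffix of $P$; admissible $|\beta|$ are the ancestors of $r_w$ in the border tree of the reversed $P$. For $v\in V_2$, the occurrences $i$ of $\lambda(v)$ in $P$ are required, with $\alpha=P[1\dd i-1]$ and $\beta=P[i+|\lambda(v)|\dd m]$. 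Then $v$ succeeds iff some occurrence start $i$ has an in-neighbour $u$ for which $i-1$ is an ancestor-border of $\ell_u$, and an out-neighbour $w$ for which $m-i-|\lambda(v)|+1$ is an ancestor-border of $r_w$.

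\textbf{Step 2: handling many occurrences.} A single $v$ can have $\Theta(m)$ occurrences, so these cannot be enumerated naively. Instead, process each $v$ separately, which totals $\sum_v(\deg(v)+\ldots)=\cO(|E|)$ work up to the occurrence structure. In the border tree $\mathcal{B}$ of $P$, mark the nodes $\ell_u$ for in-neighbours $u$. The set $L_v$ of feasible $\alpha$-lengths is the union of root-to-node paths. Likewise, the feasible $\beta$-lengths form a union $R_v$ of root paths in the reversed border tree $\mathcal{B}'$. We must test whether some $i\in\Occ(\lambda(v))$ has $i-1\in L_v$ and $m-i-|\lambda(v)|+1\in R_v$.

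Map each $i$ to a point (pre-order of $i-1$ in $\mathcal{B}$, pre-order of $m-i-|\lambda(v)|+1$ in $\mathcal{B}'$). The condition ``$x$ is an ancestor of the marked node $a$'' is inconvenient, so invert it: $x$ is an ancestor of $a$ iff $a$ lies in the subtree interval of $x$. This is a point/rectangle-stabbing relation in the reverse direction. The problem then becomes: given points $P_i=(x_i,y_i)$ (from the occurrences) with associated intervals $I(x_i)\times I'(y_i)$, and marked points $a\in A_v$, $c\in C_v$, is there an $i$ with $I(x_i)\cap A_v\ne\emptyset$ and $I'(y_i)\cap C_v\ne\emptyset$? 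For a fixed $v$ this is a two-sided condition, but the two sides are independent. One option is to compute, for each occurrence, whether its left side is satisfiable (which needs $|\Occ|$ range-emptiness queries) and then check the right side. The difficulty is $|\Occ|$ per $v$.

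\textbf{Step 3, the main obstacle: bounding the occurrences.} We use periodicity: the occurrences of $\lambda(v)$ in $P$ form $\cO(m/|\lambda(v)|)$ arithmetic progressions. Within a progression with period $p$, both $\alpha$ and $\beta$ vary along runs, and the border-tree nodes involved are structured; borders of $P$ themselves form $\cO(\log m)$ progressions. A cleaner alternative that I would try first is to charge occurrences: let $\Occ$ denote the \emph{distinct} maximal candidates obtained only from in-neighbour data. For each $u$, the starts $i$ that are compatible with $u$ satisfy $i-1\in\mathrm{borders}(\ell_u)$, which consist of $\cO(\log m)$ progressions. For each progression, use the suffix structure to compute (via periodicity, in $\cO(1)$ LCE queries) the positions in it where $\lambda(v)$ occurs next; these again form a single progression or $\cO(1)$ points. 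This yields, per edge $(u,v)$, $\cO(\log m)$ progressions of valid $(i,\beta)$ pairs. Symmetrically, per edge $(v,w)$ we obtain $\cO(\log m)$ progressions of valid $\beta$-lengths. Finally, test whether the union over in-edges intersects the union over out-edges. This is an intersection of unions of arithmetic progressions, where all progressions arising in one periodic run share the same difference, so sorting and merging runs in $\Oish(\deg(v))$. Summing over $v$ gives $\Oish(m+|E|+N)$. The stated corollary for $h=3$ follows from the reduction given before the theorem, whose size is linear.
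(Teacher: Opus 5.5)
Your pipeline matches the paper's up to the last step:
\begin{itemize}
\item $\cO(\log m)$ APs of admissible $|\alpha|$ per $u\in V_1$, and of admissible $|\beta|$ per $w\in V_3$;
\item a per-edge restriction of each left AP to the positions followed by $\lambda(v)$, using $\cO(1)$ LCE queries (this is \Cref{lem:ap-then-match});
\item for each $v\in V_2$, an emptiness test for the intersection of the union of $\cO(\indegree{v}\log m)$ left APs with the union of $\cO(\outdegree{v}\log m)$ shifted right APs.
\end{itemize}
That last test is the heart of the proof, and your one-line justification does not hold. You claim that progressions ``arising in one periodic run share the same difference'', so sorting and merging suffices.

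\emph{Within one side}, this is false without normalization. Take $Q=aaab^{96}c$ and let $P$ start with $QQa$. The exhaustive progressions $\{1,2,3\}$ (for $S_1=aaa$) and $\{1,101,201\}$ (for $S_2=QQa$) overlap, with differences $1$ and $100$. The paper first cuts all APs at the dyadic boundaries $2^j$ and turns two-element APs into singletons. Only then does \Cref{lem:long-ap-overlap} make each side synchronized.

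\emph{Across the two sides}, nothing relates the differences at all. Left differences are periods of prefixes of $P$, right differences are periods of suffixes. For example, take $P=(ca)^k\,c\,(bba)^r$ with $\lambda(u)=(ca)^k$, $\lambda(v)=c$, $\lambda(w)=(bba)^r$. The left AP $\{2,4,\dots,2k\}$ and the shifted right AP $\{2k,2k+3,\dots\}$ have differences $2$ and $3$ and meet only at $2k$. So there is no common difference to merge on. Intersecting the two families pairwise (via \Cref{lem:ap-intersection}) costs $\Theta(\indegree{v}\cdot\outdegree{v}\log^2 m)$. That is already quadratic in $|E|$ for a single middle node adjacent to all of $V_1$ and $V_3$.

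The missing idea is the paper's endpoint argument (\Cref{lem:ap-two-set-intersection}, built on \Cref{lem:ap-overlap-fix}):
\begin{enumerate}
\item Peel off the last element of every left AP and the first element of every shifted right AP as singletons.
\item Detect singleton--singleton hits by sorting.
\item Detect singleton--AP hits by a sweep that keeps the active non-singleton APs of the other side in a search tree keyed by residue modulo their difference. This is well defined precisely because each side is synchronized.
\item Prove that two surviving non-singleton APs from opposite sides cannot intersect unless a singleton hit exists.
\end{enumerate}
Step (4) needs the extra guarantee of \Cref{lem:ap-then-match}: every occurrence of $\lambda(v)$ running past the extended periodic region is emitted as a singleton. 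Given this, after peeling, an intersection forces the prefix region with period $x_1$ and the suffix region with period $x_3$ to overlap by at least $x_1+x_3$. The periodicity lemma then yields period $\gcd(x_1,x_3)$, so the peeled last element of the left AP is itself a valid right position.

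Without this (or an equivalent) argument, your per-node bound $\Oish(\indegree{v}+\outdegree{v})$, and with it the theorem, is not established.
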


We emphasize that obtaining a near-linear-time algorithm for $b=3$ was \emph{not} a straightforward extension of the $b=2$ case. First, note that, for $b=1$, we can simply employ any linear-time string matching algorithm \eg{\cite{DBLP:journals/siamcomp/KnuthMP77}}. Then, for $b=2$, the primary challenge presented in~\cite{DBLP:conf/sosa/Pissis25} was not to obtain a near-linear-time algorithm, but rather to obtain a \emph{linear-time} algorithm. In particular, we can first compute, for every node $v$ in the first block, the longest prefix of $P$ matching a suffix of $\lambda(v)$, and then compute the symmetric information for the second block. Finally, we can simulate the prefix-suffix ``concatenation'' per edge using existing data structures~\cite{DBLP:conf/soda/GuFB94,DBLP:journals/talg/Gawrychowski13,DBLP:conf/sosa/Pissis25}. Unfortunately, when we move from $b=2$ to $b=3$, the existence of the middle block renders this technique inapplicable. 

To overcome this challenge, we rely on a well-known combinatorial fact about the \emph{borders} of a string \ie{about prefixes that are also suffixes of a string}: even though there can be a linear number of borders in a string \eg{consider the string \texttt{aa}\dots\texttt{a}}, they can be compactly represented as a logarithmic number of \emph{arithmetic progressions} (APs)~\cite{DBLP:journals/siamcomp/KnuthMP77,DBLP:journals/jct/GuibasO81,DBLP:conf/soda/GuFB94,DBLP:journals/talg/BreslauerG14,DBLP:journals/siamcomp/KociumakaRRW24,DBLP:conf/cpm/Boneh025}.
These APs are sets of indices of the form $\ell, \ell + x, \ell + 2x, \dots, \ell + kx = r$; 
each set has a periodic structure and can thus be represented by a tuple $(\ell, r, x)$ of the first element $\ell$, the last element $r$, and the common difference $x$.

For each node in the first and last blocks, we compactly represent the prefixes (or suffixes) of the pattern that match a node label as APs.
For middle-block nodes, we compute candidate matches by intersecting the APs of their $k$ in-neighbors with all occurrences of the middle node's label, doing so \emph{implicitly}---without computing this set.
Next, using a carefully designed intersection operation, we intersect these candidates with the suffix APs from the $k'$ out-neighbors.
To achieve this efficiently, we exploit periodicity information encoded in the APs. Crucially, these steps are no longer \emph{linear} in $m$: they output only a logarithmic (in $m$) number of APs per edge. By combining AP-based compression, efficient intersection, and constant-time verification, the algorithm achieves near-linear time.

As a second main result, we show a lower bound by reducing from Boolean matrix multiplication (BMM) to \bSMBG with $b \geq 4$ blocks.
As a consequence, a near-linear-time algorithm for this setting exists only if one exists for BMM. This pinpoints the threshold number of blocks for which near-linear-time algorithms are likely. Additionally, under the popular BMM conjecture~\cite{DBLP:conf/soda/AbboudWY15}, there can be no combinatorial algorithm polynomially faster than the state-of-the-art \SMBG algorithm.

\begin{informalTheoremBMM}\label{thm:lb-bmm}
    Unless the combinatorial BMM conjecture is false,
    we cannot solve \bSMBG in $\cO((m\setsize{E})^{1-\smol} + N)$ time, for any $b \geq 4$ and constant $\smol>0$ using a combinatorial algorithm.
\end{informalTheoremBMM}

For block graphs with unique labels, our lower bound holds when there are at least six blocks, thus leaving the corresponding problem open for the cases $b=4$ and $b=5$.

Towards (2), our goal for developing more efficient algorithms for dense graphs, we give a matrix-multiplication-based algorithm that can be much faster in this setting. 

\begin{restatable*}{corollary}{thmMMEasy}\label{thm:mm-easy}
  The \bSMBG problem can be solved in $\cO\left(\max(\setsize{V}, m)^{\omega}+N\right)$ time, where $\omega$ is the matrix multiplication exponent. 
\end{restatable*}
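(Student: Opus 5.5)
We reduce the problem to a chain of Boolean matrix products over a state space of size $\cO(m)$ per block. For a block $V_i$, define a Boolean matrix $\matchMat{i}$ whose rows are indexed by pattern positions $j\in[0,m]$ (``the prefix $P[1\dd j]$ has been matched so far'') and whose columns are indexed by the nodes $v\in V_i$. Symmetrically, define a matrix whose rows are nodes and whose columns are positions $j'$ (``after reading $\lambda(v)$ we are at position $j'$''). The transition through block $i$ is then $T_i \coloneqq \matchMat{i}\cdot \adjMat{i}{i+1}$. Here $\adjMat{i}{i+1}$ is the $|V_i|\times|V_{i+1}|$ adjacency matrix, padded to dimension $\max(|V|,m)$. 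Each product costs $\cO(\max(|V|,m)^\omega)$.

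\textbf{First block.} Compute for each $v\in V_1$ the set of $j$ such that $P[1\dd j]$ is a suffix of $\lambda(v)$ and $j<m$. Place the nonempty-$\alpha$ case into an initial Boolean vector/matrix $M_1$ (rows $v\in V_1$, columns $j$). This uses KMP/Aho--Corasick on $P$: the suffixes of $\lambda(v)$ that are prefixes of $P$ form the border chain of the longest one. Writing them all explicitly costs $\cO(m)$ per node, so $\cO(m|V|)\le\cO(\max(|V|,m)^2)$, which is within budget.

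\textbf{Middle blocks.} For $v\in V_i$, let $S_i$ be the $(m+1)\times|V_i|$ matrix with $S_i[j,v]=1$ iff $\lambda(v)$ occurs in $P$ at position $j+1$ and $j+|\lambda(v)|<m$; the new state is then $j+|\lambda(v)|$. Computing all occurrences of all labels in $P$ by naive comparison would be too slow. Instead we use a suffix tree of $P$ (or a KMP of $\lambda(v)$ against $P$) together with LCE queries: for each pair $(v,j)$ test whether $\lambda(v)=P[j+1\dd j+|\lambda(v)|]$ in $\cO(1)$ after $\cO(N+m)$ preprocessing via fingerprints/LCE on $P\$\lambda(v_1)\cdots$. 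This gives $\cO(m|V|+N)$ total.

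The propagation step is the main obstacle, because the state shift depends on $v$. Given the state matrix $X_i$ (rows $j$, columns $u\in V_{i-1}$ meaning ``at position $j$ after ending in $u$''), we proceed in two steps:
\begin{enumerate}
\item Compute $Y=X_i\cdot \adjMat{i-1}{i}$, so that $Y[j,v]=1$ iff we are at position $j$ and may enter $v$.
\item Form $X_{i+1}[j+|\lambda(v)|,v]=Y[j,v]\wedge S_i[j,v]$ entrywise.
\end{enumerate}
The shift is a per-column reindexing, which is $\cO(m|V|)$ work, so no product is needed. Hence each block costs one matrix product plus quadratic work.

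\textbf{Last block.} Compute $Y$ for $V_b$ as above and check whether some $Y[j,v]=1$ with $P[j+1\dd m]$ a prefix of $\lambda(v)$, again by $\cO(1)$ LCE queries. The case $b=1$ is plain string matching, and for $b=2$ there is no middle step.

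The total running time is $\cO\bigl(b\cdot\max(|V|,m)^\omega+N\bigr)$. To remove the factor $b$, note that $\sum_i |V_i|=|V|$, so each product can be done with rectangular dimensions $m\times|V_{i-1}|$ by $|V_{i-1}|\times|V_i|$. This costs $\cO(\max(m,|V_{i-1}|,|V_i|)^{\omega-1}(|V_{i-1}|+|V_i|+m))$ by splitting into square blocks, and summing over $i$ gives $\cO(\max(|V|,m)^{\omega-1}(|V|+bm))$. Since nonempty labels force $b\le|V|$, the remaining term $bm\cdot\max^{\omega-1}$ still needs care. We handle it by tiling the $m$ side only against $|V_{i-1}|$: when $|V_{i-1}|,|V_i|\le m$, the product costs $\cO(m\,|V_{i-1}|^{\omega-2}|V_i|)$-ish, and a direct combinatorial $\cO(m|E_i|)$ alternative bounds these light blocks. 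Balancing light blocks, where we use Amir et al.'s $\cO(m|E_i|)$ propagation with $|E_i|\le|V_{i-1}||V_i|$, against heavy blocks, where we use fast multiplication, is where we expect the bookkeeping to be delicate. We would first try to argue that $\sum_i m|V_{i-1}||V_i|\le m|V|^2\le\max(|V|,m)^3$, and refine this with the square-tiling bound.
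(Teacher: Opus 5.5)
Your algorithm matches the paper's: one Boolean product $X_i\cdot\adjMat{i-1}{i}$ per block, then a per-column shift and an entrywise AND with the occurrence matrix, with $\cO(m\setsize{V}+N)$ preprocessing. However, there is a genuine gap at exactly the step that makes the statement about $\max(\setsize{V},m)^\omega$ rather than $b\cdot\max(\setsize{V},m)^\omega$: you never obtain the aggregate bound. Your per-block estimate $\cO(\max(m,\setsize{V_{i-1}},\setsize{V_i})^{\omega-1}(\setsize{V_{i-1}}+\setsize{V_i}+m))$ is at least $m^{\omega}$ for every block. It is therefore no better than padding each product to a square, and this is why the $bm\cdot\max(\setsize{V},m)^{\omega-1}$ term appears. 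The light/heavy balancing you sketch does not rescue it. The bound $\sum_i m\setsize{V_{i-1}}\setsize{V_i}\le m\setsize{V}^2$ only gives $\max(\setsize{V},m)^3$, and you leave the refinement unspecified.

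The missing idea is to tile so that the exponent $\omega-1$ falls on the block sizes and only a linear factor falls on $m$. Let $n\coloneqq\max(\setsize{V},m)$ and $s_i\coloneqq\max(\setsize{V_{i-1}},\setsize{V_i})$. Pad both node dimensions to $s_i$ and cut the $m$ rows into $\lceil m/s_i\rceil$ slabs. The product then costs $\cO(\lceil m/s_i\rceil s_i^{\omega})=\cO((m+s_i)s_i^{\omega-1})=\cO(n\,s_i^{\omega-1})$, since both $m$ and $s_i$ are at most $n$. Now sum over blocks, using $s_i^{\omega-1}\le\setsize{V_{i-1}}^{\omega-1}+\setsize{V_i}^{\omega-1}$ and superadditivity of $t\mapsto t^{\omega-1}$ (valid because $\omega-1\ge1$):
\[
\sum_{i\in[2,b]} n\,s_i^{\omega-1}\;\le\;2n\sum_{j\in[b]}\setsize{V_j}^{\omega-1}\;\le\;2n\Bigl(\sum_{j\in[b]}\setsize{V_j}\Bigr)^{\omega-1}\;=\;2n\setsize{V}^{\omega-1}\;\le\;2n^{\omega}.
\]
This is the convexity argument in the paper's proof. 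Combined with your $\cO(m\setsize{V}+N)=\cO(n^2+N)$ bookkeeping, it yields the claimed $\cO(n^{\omega}+N)$, with no need for a combinatorial fallback on light blocks.
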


As our previous lower bound reduces to BMM, this algorithm is optimal up to subpolynomial factors.
Thus, our upper and lower bounds together essentially settle the complexity landscape of \bSMBG parameterized by the block number $b$. For more blocks, this algorithm can be analyzed more subtly (see \Cref{thm:mm-generality}). For example, if there are $\sqrt{\setsize{V}}$ blocks, each having equally many nodes, and $m \in \cO(\setsize{V})$, then the problem can be solved in $\cO(\setsize{V}^{2.126})$ time, using the current best bound $\omega \leq 2.371552$ on the matrix multiplication exponent $\omega$~\cite{DBLP:conf/soda/WilliamsXXZ24}, thus in less than $\setsize{V}^\omega \approx \setsize{V}^{2.372}$ time.

Interestingly, by relying on rectangular and sparse matrix multiplication, the algorithm also efficiently handles wider and sparser instances and in the worst case matches the state-of-the-art $\cO(m\setsize{E})$ running time.
For example, for the lower bound instances by Equi et al.~\cite{DBLP:journals/algorithmica/EquiNACTM23}, since they are so wide and sparse, there will be only a constant number of edges between any two consecutive blocks, thus leaving no room for algebraic improvement of this part of the algorithm.
Also note that in these instances, $\setsize{E} \in \cO(\setsize{V})$, meaning they provide little insight on the optimal dependence on $\setsize{E}$ versus $\setsize{V}$.
In some sense, this suggests the state of the art is the \emph{correct} algorithm, at least as long as we only consider the parameters $m$, $\setsize{E}$ and ignore $\setsize{V}$, $b$, which is the setting our work improves on.

We provide upper and lower bounds expressed in terms of $\setsize{V}$ or $\setsize{E}$.
Because the interplay between these parameters depends on graph density, direct comparisons between different bounds---and with the state of the art---can be nuanced.
Often, we can rephrase our results dependent on the density \eg{by using a sparse matrix multiplication routine or relying on the Strong Triangle conjecture~\cite{abboud2014PopularConjecturesImply}}.
We prioritize readability by focusing on the endpoints of the density spectrum: $\setsize{E} \in \Oish(\setsize{V})$ and $\setsize{E} \in \Omegaish(\setsize{V}^2)$. We highlight broader density-dependent generalizations only where they are particularly salient.

Finally, we show that solving \bSMBG in $\cO((m\setsize{E})^{1-\smol}+N)$ time
for a polylogarithmic number of blocks is not possible
unless the Orthogonal Vectors Hypothesis (OVH), which is implied by SETH~\cite{williams2005new}, is refuted.
This lower bound holds for all algorithms (including non-combinatorial ones), illuminating the limits of algebraic techniques for this problem.

\begin{restatable*}{theorem}{thmLogLB}\label{thm:lb-seth}
    Unless the OVH fails, we cannot solve the \bSMBG problem in $\cO((m\setsize{E})^{1-\smol} + N)$ time for any $b=\omega(\log\setsize{V})$ and constant $\smol > 0$. 
\end{restatable*}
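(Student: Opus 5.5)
We reduce from Orthogonal Vectors to \bSMBG and follow the gadget framework of Equi et al.~\cite{DBLP:journals/algorithmica/EquiNACTM23}, but we compress their construction so that the number of blocks is only $\Theta(d)$, where $d$ is the dimension. Take an OV instance $A=\{a_1,\dots,a_n\}$, $B=\{b_1,\dots,b_n\}\subseteq\{0,1\}^d$ with $d=\omega(\log n)$; OVH says it cannot be solved in $\cO(n^{2-\smol}\mathrm{poly}(d))$ time. The encoding puts the vectors of $B$ into the pattern and the vectors of $A$ into the graph. This yields $m=\Oish(n)$ and $\setsize{E}=\Oish(n)$, and the graph has $b=\Theta(d)$ blocks while $\setsize{V}=\mathrm{poly}(n)$, so $b=\omega(\log\setsize{V})$ holds after choosing $d$ appropriately. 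Then an $\cO((m\setsize{E})^{1-\smol}+N)$ algorithm, with $N=\Oish(n)$, would give $\Oish(n^{2-2\smol})$ for OV, a contradiction.

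\textbf{Pattern.} Set $P=\texttt{x}^{k}\,\mathrm{enc}(b_1)\,\texttt{y}\,\mathrm{enc}(b_2)\,\texttt{y}\cdots\mathrm{enc}(b_n)\,\texttt{x}^{k}$, where $\mathrm{enc}(b_j)$ writes coordinate $t$ as a character $\texttt{0}$ or $\texttt{1}$, padded with separators.

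\textbf{Graph.} The block graph has $\Theta(d)$ \emph{layers}, one per coordinate plus separator layers. The coordinate layer $t$ contains two nodes labelled $\texttt{0}$ and $\texttt{1}$, with the paths arranged so that a vector $a_i$ with $a_i[t]=1$ may only use the $\texttt{0}$-node at layer $t$. A walk of length $b-1$ must consume a window of $P$ that spans many vectors. Long labels allow one node to absorb a whole prefix or suffix of $P$. So we use a start node whose label is $\texttt{x}^k$ followed by $\mathrm{enc}(b_1)\texttt{y}\cdots$, realized as the family of $n$ suffix-type nodes $\texttt{x}^k\mathrm{enc}(b_1)\texttt{y}\cdots\mathrm{enc}(b_{j-1})\texttt{y}$, one for each $j$. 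Symmetrically we use end nodes covering $\mathrm{enc}(b_{j+1})\cdots\texttt{x}^k$. The total label length would be quadratic, however. To avoid this, replace explicit long labels by a single node labelled with $P$'s prefix structure, using the fact that a match only needs $\alpha$ to be a \emph{suffix} of $\lambda(v_1)$. Thus one first-block node labelled $\texttt{x}^k\mathrm{enc}(b_1)\texttt{y}\cdots\mathrm{enc}(b_n)\texttt{y}$ lets the walk start after any $b_{j-1}$. Likewise one last-block node labelled $\mathrm{enc}(b_1)\texttt{y}\cdots\mathrm{enc}(b_n)\texttt{x}^k$ lets it stop at any point via a prefix. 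We must prevent misaligned starts, and the $\texttt{y}$ separators together with the $\texttt{x}^k$ anchors force alignment. The middle $\Theta(d)$ blocks then encode a choice of $a_i$: for each $i$ there is a path through the coordinate layers that reads exactly the strings compatible with $a_i$. To keep $\setsize{E}$ small, these $n$ paths are merged into the two-node-per-layer gadget with an index-selection gadget at the entry. The pattern's middle then matches iff some $b_j$ is orthogonal to some $a_i$.

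\textbf{Correctness and the main obstacle.} If $a_i\cdot b_j=0$, the walk takes the first-block node aligned at $b_j$, traverses $a_i$'s path, and ends at the last block; this is routine. The converse is the main obstacle. A walk with exactly $b$ nodes must fit a length-$\Theta(d)$ middle window between two arbitrary cut points of the long labels. We have to show that the cut points must fall exactly at $\texttt{y}$ boundaries and that the middle can only read one whole $\mathrm{enc}(b_j)$. I would enforce this by giving middle nodes labels that contain no $\texttt{y}$ except at fixed layers and by making separator layers use a fresh character. The harder issue is that the middle gadget must remember $i$ across $d$ layers without $nd$ edges per layer blowing up the size. We can afford $\setsize{E}=\cO(nd)$, so I would use $n$ disjoint paths of length $d$ (giving $\setsize{V}=nd$, with $b=\Theta(d)$ still $\omega(\log\setsize{V})$), which keeps $\setsize{E}$ and $N$ in $\Oish(n)$. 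Finally, the time calculation gives $(m\setsize{E})^{1-\smol}=\Oish(n^{2-2\smol})$, which contradicts OVH.
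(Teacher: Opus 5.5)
\textbf{The alignment step fails.} Your reduction breaks at exactly the step meant to keep $N$ near-linear. In an occurrence $v_1,\dots,v_b$, every intermediate node is read in full. The part of $P$ read in $v_1$ is a prefix $\alpha$ of $P$ that must be a \emph{suffix} of $\lambda(v_1)$. The part read in $v_b$ is a suffix of $P$ that must be a \emph{prefix} of $\lambda(v_b)$.

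To start the walk right after $\mathrm{enc}(b_{j-1})$, you need $\alpha_j=\texttt{x}^k\mathrm{enc}(b_1)\texttt{y}\cdots\mathrm{enc}(b_{j-1})\texttt{y}$ to be a suffix of $\lambda(v_1)$. In your label $\texttt{x}^k\mathrm{enc}(b_1)\texttt{y}\cdots\mathrm{enc}(b_n)\texttt{y}$, however, these strings occur as \emph{prefixes}. Since $\texttt{x}$ appears only at the front of that label, its only suffix starting with $\texttt{x}^k$ is the whole label, and that is not a prefix of $P$. So no alignment exists and completeness fails. Your last-block node has the mirror-image defect.

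No cleverer single label helps. The prefixes of $P$ that are suffixes of a fixed string are all borders of the longest one. The $\alpha_j$ are pairwise not borders of each other, because each begins with the unique anchor $\texttt{x}^k$. Hence one first-block node serves at most one alignment. Falling back to $n$ explicit nodes, as you note yourself, gives $N=\Theta(n^2d)$. Then the additive $N$ term is already quadratic, and an $\cO((m\setsize{E})^{1-\smol}+N)$-time algorithm yields nothing for OV.

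\textbf{The missing idea.} You need an alignment gadget that lets the $n$ alignments share label material by spreading the prefix over several blocks. The paper uses a Fenwick-tree (binary indexed tree) decomposition on $\cO(\log n)$ blocks. Nodes are labelled with dyadic chunks of $P=\$^x\beta_1\$^x\cdots\beta_n\$^x$, so that every prefix $\$^x\beta_1\$^x\cdots\beta_\ell\$^x$ with $0\le\ell<n$ is spelled by a walk through $\cO(\log n)$ chunks. Each position of $P$ lies in $\cO(\log n)$ chunks, so the total label length is $\cO(\setsize{P}\log n)$ and the gadget has near-linear size. The suffix gadget is built symmetrically. These $\cO(\log n)=o(d)$ extra blocks are absorbed into $b=\Theta(d)$.

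\textbf{A secondary fix in the vector gadget.} $n$ disjoint \emph{paths} cannot work, because a coordinate with $a_i[t]=0$ must accept both $0$ and $1$. Collapsing everything to two nodes per layer instead forgets $i$. Use, for each $a_i$, $d$ layers whose node labels are $\{0\}$ if $a_i[t]=1$ and $\{0,1\}$ if $a_i[t]=0$, with consecutive layers of the same gadget fully connected. This keeps $\setsize{V},\setsize{E}\in\cO(nd)$, after which your size and time accounting goes through.
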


In a nutshell, we obtain significant improvements for graphs that have a small number of blocks ($b \leq 3$), or are sufficiently dense. 
Even for dense graphs with a rather large number of blocks, our matrix-multiplication-based algorithm is still beneficial.
Given the $\sqrt{\setsize{V}}$-block example from above, assuming that consecutive blocks are fully connected, our algorithm runs in time $\cO(\setsize{V}^{2.126} + N)$, which is faster than $\Theta(m\setsize{E} + N) = \Theta(\setsize{V}^{2.5} + N)$.

Finally, we stress that our algorithms are simple to implement, and as such, they may contribute to practical advancements in applications where the \SMLG problem is an important primitive \eg{in pangenome analyses~\cite{DBLP:journals/bioinformatics/RautiainenMM19,Rautiainen2020}}.

\subparagraph{Complexity Landscape of \hSMLG.} As long as $h$ is subpolynomial \eg{constant or logarithmic}, the above reduction from \hSMLG to \bSMBG shows that the complexity results we have for \bSMBG directly translate to \hSMLG (up to subpolynomial factors). As the results required to establish the landscape of \bSMBG require only $b \in \Theta(1)$ or $b \in \omega(\log \setsize{V})$, we also settle the landscape for \hSMLG.
Namely, there is a near-linear-time algorithm for $h = 3$; starting at $h=4$ the problem becomes hard under the BMM conjecture; and at $h \in \omega(\log \setsize{V})$ the problem becomes hard under SETH.
Also, there is a matrix-multiplication-based algorithm that is conditionally optimal for all $h$ at least 4 and at most subpolynomial. In short, while the reduction from \hSMLG to \bSMBG is efficient only for small values of $h$, surprisingly, the two problems still have the same (up to subpolynomials) complexity when parameterized by $h$ and $b$, respectively. 
This is largely due to the fact that our lower bounds show that \bSMBG (and thus \hSMLG) become hard before the reduction breaks down. After the reduction breaks down, both problems can be solved in the same $\cO(m\setsize{E} + N)$ time and not faster.

\subparagraph{Computational Model.} We assume the standard word RAM model with machine words consisting of $\Omega(\log n)$ bits, where $n$ is the input size~\cite{DBLP:books/daglib/0046305}. Basic operations on machine words, such as indirect addressing and arithmetic operations, are thus assumed to take $\cO(1)$ time.

\subparagraph{Paper Organization.} In \Cref{sec:prelims}, we provide the necessary notation and definitions as well as a few tools that we use in our algorithms. In \Cref{sec:SOTA}, we recap a version of the algorithm by Amir et al.~for solving \bSMBG. In \Cref{sec:three-blocks}, we present our near-linear-time algorithm for $b=3$ blocks (\Cref{thm:three-blocks}).
In \Cref{sec:mm-algo}, we present our algorithm employing fast matrix multiplication (\Cref{thm:mm-easy}).
Finally, in \Cref{sec:lower-bounds}, we present our conditional lower bounds for $b \geq 4$ and $b\in \omega(\log |V|)$ blocks (\Cref{thm:td-lb} and \Cref{thm:lb-seth}, respectively).

\section{Preliminaries}
\label{sec:prelims}
For all $n \in \N^+$, we set $[n] \coloneqq \{1, \dots, n\}$.
For two sets $X, Y$, we write $X \sqcup Y$ for their disjoint union.
For a third set $Z$, we also write $Z = X \sqcup Y$ to indicate that $X$ and $Y$ form a partition of $Z$.
The notation $\Oish(f(n))$ denotes $\cO(f(n) \cdot \polylog(n))$.

For a directed graph $G = (V,E)$ and a node $v \in V$, we write $\indegree{v}$ for its in-neighbors and $\outdegree{v}$ for its out-neighbors.
In the Landau notation, we also use $\indegree{v}$ and $\outdegree{v}$ for the respective sizes of these sets, that is, the in- and out-degree of node $v$.

\subparagraph{Arithmetic Progressions.}

We represent an \emph{arithmetic progression} (AP) by the tuple $(\ell, r, x) \in \N^3$; it represents the set $\{\ell + ix \mid i \in \N,\ell + ix \le r\}$.
Call $x$ its \emph{common difference}.
We will treat $(\ell,r,x)$ as a set when using the $\in$ operator (that is, when we write $i \in (\ell,r,x)$, we mean that $i$ is in the represented set, and not that $i$ is one of $\ell$, $r$, or $x$). In the word RAM model, any AP can be represented using a constant number of machine words.
The \emph{intersection} of two APs is the intersection of the sets they represent. 
In contrast, we say that two APs for sets $X$ and $Y$ \emph{overlap} if the intervals $[\min X, \max X]$ and $[\min Y, \max Y]$ intersect.
We say a set of APs is \emph{overlap-free} if its APs are pairwise non-overlapping. Similarly, we call a set of APs \emph{synchronized} if each pair of non-singleton APs is non-overlapping or shares the same common difference. Note that for singleton APs, the common difference is arbitrary \ie{they still describe the same singleton set regardless of which $x$ we choose}. Thus, for the purposes of synchronization, we disregard them.

We will use the following lemma.

\begin{restatable}{lemma}{APlemma}\label{lem:ap-intersection}
  The intersection of two APs is empty or a single AP. 
  It can be computed in $\cO(1)$ time after $\cO(m)$-time preprocessing if the largest index in any of the APs is at most $m$.
\end{restatable}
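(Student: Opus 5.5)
The plan is to reduce the intersection of two APs to a system of two linear congruences, solve it with the Chinese Remainder Theorem, and supply the required number-theoretic operations from $\cO(m)$-size lookup tables.

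Let the two APs be $A=(\ell_1,r_1,x_1)$ and $B=(\ell_2,r_2,x_2)$, with all indices in $[0,m]$. If either is a singleton, say $A=\{\ell_1\}$, the intersection is $\{\ell_1\}$ or empty. We decide which by checking $\ell_2\le \ell_1\le r_2$ and $x_2 \mid (\ell_1-\ell_2)$, which takes constant time.

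Otherwise, an integer $y$ lies in $A\cap B$ if and only if $y\in[L,R]$, where $L=\max(\ell_1,\ell_2)$ and $R=\min(r_1,r_2)$, and moreover
\[ y\equiv \ell_1 \pmod{x_1}, \qquad y\equiv \ell_2 \pmod{x_2}. \]
By the generalized Chinese Remainder Theorem, let $g=\gcd(x_1,x_2)$. The system is solvable if and only if $g\mid(\ell_2-\ell_1)$. In that case its solutions form exactly one residue class $y\equiv y_0 \pmod{\mathrm{lcm}(x_1,x_2)}$. Intersecting this class with $[L,R]$ yields either the empty set or the single AP $(y^*, R, \mathrm{lcm}(x_1,x_2))$, where $y^*$ is the smallest solution that is at least $L$; this AP may be a singleton. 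This proves the structural claim.

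To compute $y_0$, write $x_1=g x_1'$ and $x_2=gx_2'$. Then
\[ y_0=\ell_1+x_1\cdot t, \qquad t\equiv \frac{\ell_2-\ell_1}{g}\,(x_1')^{-1}\pmod{x_2'}. \]
Once $y_0$ is known, $y^*$ follows from $L$ by a single division with rounding, which is a constant-time word-RAM operation. If $\mathrm{lcm}(x_1,x_2)>m$, the intersection contains at most one element, so we need no large numbers.

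The main obstacle is computing $g$ and the modular inverse $(x_1')^{-1} \bmod x_2'$ in $\cO(1)$ time. Running the extended Euclidean algorithm costs $\cO(\log m)$, not constant. My plan is to precompute, in $\cO(m)$ time, whatever lets us answer gcd and inverse queries on numbers at most $m$ in constant time. A full $m\times m$ table is too large, so instead I would use the following facts:
\begin{enumerate}
\item A linear sieve gives the smallest-prime-factor table, and hence factorizations, but that does not yield constant-time gcd directly.
\item There is a known $\cO(m)$-preprocessing structure for constant-time gcd of numbers up to $m$. It decomposes each number into at most three factors, each of which is either at most $\sqrt m$ or prime, and uses a $\sqrt m\times\sqrt m$ gcd table.
\item The modular inverse can be obtained from the same structure. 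Alternatively, since in our application one AP typically has $x_1$ a period of $P$, I would fall back on the observation that we only need $t$, and precompute inverses modulo each modulus lazily.
\end{enumerate}
I expect to rely on the first two items and on the standard word-RAM assumption for the rest. If constant-time modular inverses prove problematic, I would restrict to the cases arising in the algorithm, where $x_1=x_2$ (the synchronized case) or one AP has length at most $\cO(1)$ within the overlap range. Both cases are trivial.
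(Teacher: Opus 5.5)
Your structural argument is the paper's. Both proofs reduce membership to the two congruences together with the window $[\max(\ell_1,\ell_2),\min(r_1,r_2)]$, apply the generalized Chinese remainder theorem (no solution iff $\gcd(x_1,x_2)\nmid \ell_2-\ell_1$, otherwise a single class modulo $\operatorname{lcm}(x_1,x_2)$), and clip to the window. For the time bound, the paper also only cites the gcd structure of Kociumaka, Radoszewski and Rytter ($\cO(m)$ preprocessing, $\cO(1)$ per query). It adds in a footnote that the $\cO(\log m)$ Euclidean algorithm is enough for its application.

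The step that is not justified, and which the paper's proof also passes over, is computing the first common element in $\cO(1)$ time.
\begin{itemize}
\item As you note, this needs $(x_1')^{-1}\bmod x_2'$, i.e., B\'ezout coefficients. The cited structure answers gcd queries only, so your third item merely asserts that it yields inverses.
\item The three-factor decomposition gets you only partway. To invert a factor $s\le\sqrt m$ modulo $q$, write $su=1+qw$ with $w\equiv -q^{-1}\pmod s$, and read $q^{-1}\bmod s$ from an $\cO(m)$-size table of inverses modulo all moduli at most $\sqrt m$. But inverting one prime larger than $\sqrt m$ modulo another leads back into a Euclid-type recursion.
\item Lazily building inverse tables costs $\Theta(q)$ per modulus $q$. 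Over many moduli this gives neither $\cO(m)$ preprocessing nor $\cO(1)$ per query.
\item Restricting to the cases $x_1=x_2$, or one AP having $\cO(1)$ elements in the overlap, proves a weaker statement than the lemma. Your justification for it is also off. Synchronization holds within $\cA^1$ and within $\cA^3$ separately. The paper, however, applies the lemma (when reporting all occurrences) to a prefix-type AP against a shifted suffix-type AP. Showing that only your two cases arise there would need its own periodicity-lemma argument, as in the proof of \Cref{lem:ap-two-set-intersection}.
\item The observation that $\operatorname{lcm}(x_1,x_2)>m$ does not remove the CRT step: you still have to locate the unique candidate.
\end{itemize}
The safe statement is $\cO(\log m)$ query time via extended Euclid, which is all the paper's algorithm needs. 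Claiming $\cO(1)$ for arbitrary APs requires a genuine constant-time argument for modular inverses.
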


\begin{proof}
Let $A_1 = (\ell_1,r_1,x_1)$ and $A_2 = (\ell_2,r_2,x_2)$ be the two APs.
By definition, an integer $k$ lies in $A_1 \cap A_2$ if and only if $k \ge \max(\ell_1, \ell_2)$, $k \le \min(r_1, r_2)$, $k \equiv \ell_1 \pmod{x_1}$, and $k \equiv \ell_2 \pmod{x_2}$. Consider the system of congruences $k \equiv \ell_1 \pmod{x_1}$ and $k \equiv \ell_2 \pmod{x_2}$; we have the following cases.

If $\gcd(x_1, x_2) \nmid (\ell_2 - \ell_1)$, then the system has no solution, and hence $A_1 \cap A_2 = \emptyset$.
Otherwise, by the Chinese remainder theorem, the set of solutions forms a single AP with common difference $x' = \operatorname{lcm}(x_1, x_2)$ and first solution $\ell' \ge \max(\ell_1, \ell_2)$. Then $r' = \ell' + x' \cdot \lfloor (\min(r_1,r_2) - \ell') / x' \rfloor$. Note that the intersection might still be empty (and returned as such) if $\ell' > \min(r_1,r_2)$.

Every step above is implementable in $\cO(1)$ time. In particular, after $\cO(m)$-time preprocessing we can construct a data structure that allows $\cO(1)$-time $\gcd(i,j)$ queries, for any $i,j\in[m]$~\cite{DBLP:journals/jcss/KociumakaRR17}.\footnote{For our needs and simplicity, we can afford a standard implementation of the Euclidean algorithm that takes $\cO(\log m)$ time.}
\end{proof}

\subparagraph{Strings.} An \emph{alphabet} $\Sigma$ is a finite set of $\sigma$ elements called \emph{characters}.
We consider an \emph{integer} alphabet $\Sigma=[1,\sigma]$.   
For a string $S=S[1]S[2]\dots S[n]$ over $\Sigma$, we denote its length $n$ by $|S|$ and its $i$-th character by $S[i]$.
A \emph{fragment} of $S$ starting at position $i$ and ending at position $j$ of $S$ is denoted by $S[i\dd j]$.  
A \emph{prefix} of $S$ is a fragment of the form $S[1 \dd j]$, and a \emph{suffix} of $S$ is a fragment of the form $S[i \dd n]$.
A fragment $S[i\dd j]$ of $S$ corresponds to a \emph{substring} $P$ of $S$: the string composed of $S[i]\ldots S[j]$.
A substring $P$ of $S$ may have many occurrences in $S$. 
We characterize an \emph{occurrence} of $P$ in $S$ by its \emph{starting position} $i\in[n]$; that is, $P=S[i\dd i+|P|-1]$.
The set of occurrences of $P$ in $S$ is denoted by $\Occ(P,S)$. 
For two strings $S$ and $T$, we write $ST$ or $S\cdot T$ for their concatenation.

A positive integer $p$ is called a \emph{period} of a string $S$, if $S[i] = S[i + p]$, for all $i \in [1, |S| - p]$. The length of a nonempty string is a period of this string, so every nonempty string has at least one period. We define \emph{the period} of a nonempty string $S$ as the smallest of its periods, denoted by $\per(S)$.  
A \emph{border} of a nonempty string $S$ is a substring $S'$ of $S$ with $|S'|<|S|$ that is both a prefix
and a suffix of $S$. We refer to the longest border of $S$ as \emph{the border} and denote it by $\border(S)$. 
A well-known fact (cf.~\cite{DBLP:books/daglib/0020103}) is the following duality between periods and borders: for any string $S$, $\per(S)+|\border(S)|=|S|$.

\begin{lemma}[Periodicity lemma~\cite{FW1965}]
 Let $S$ be a string with periods $p$ and $q$. If
$p + q - \gcd(p, q) \leq |S|$, then $\gcd(p, q)$ is also a period of $S$.   
\end{lemma}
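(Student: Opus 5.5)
This is the Fine--Wilf periodicity lemma; I would prove it by a Euclid-style induction on $p+q$, showing that $q-p$ is a period of a suitable prefix of $S$.

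\textbf{Setup.} Assume without loss of generality that $p\le q$ and set $d=\gcd(p,q)$. If $p=q$, then $d=p$ and there is nothing to prove. Likewise, if $p\mid q$, then $d=p$ is a period by assumption.

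\textbf{Inductive step.} Assume $p<q$ and $p\nmid q$. Let $S'=S[1\dd |S|-p]$, the prefix of $S$ of length $|S|-p$. I will show two things.
\begin{enumerate}
\item Both $p$ and $q-p$ are periods of $S'$. Clearly $p$ is a period of $S'$, since it is one of $S$. For $q-p$, take $i\in[1,|S'|-(q-p)]$. Then $i+q\le |S|$, so $S[i]=S[i+q]$ by period $q$. Also $i+q-p\ge 1$, so $S[i+q-p]=S[i+q]$ by period $p$. Combining, $S[i]=S[i+q-p]$.
\item $S'$ satisfies the length hypothesis for the pair $(p,q-p)$. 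We have $\gcd(p,q-p)=d$, and $p+(q-p)-d=q-d$. By assumption $|S|\ge p+q-d$, hence $|S'|=|S|-p\ge q-d$, as required.
\end{enumerate}
Since $p+(q-p)=q<p+q$, the induction hypothesis applies to $S'$ and gives that $d$ is a period of $S'$.

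\textbf{Lifting back to $S$.} This is the main obstacle, because $S'$ misses the last $p$ characters of $S$. Since $p\nmid q$ we have $d<p$, so $d\le p-1$ and therefore $|S'|\ge q-d\ge p+1>p$. Now take any $i\in[1,|S|-d]$ and reduce $i$ modulo $p$ to an index $j\le |S'|-d$; this is possible because $|S'|-d\ge p$ leaves room for a full residue window. Using period $p$ of $S$ to move both $i$ and $i+d$ back into $S'$, then period $d$ inside $S'$, gives $S[i]=S[i+d]$. Hence $d$ is a period of $S$.

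\textbf{Base case.} When the smaller of the two periods divides the larger, $d$ equals that smaller period and is a period by assumption. The induction therefore terminates.
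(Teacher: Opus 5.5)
Your Euclid-style reduction to $S'=S[1\dd |S|-p]$ with periods $p$ and $q-p$ is the standard argument, and your steps (1) and (2) are correct. The paper itself gives no proof; it only cites Fine and Wilf. The gap is in the lifting step.

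From $d\le p-1$ you cannot conclude $q-d\ge p+1$. Since $d$ divides both $p$ and $q$ and $q>p$, you only get $q\ge p+d$, so $|S'|\ge q-d\ge p$, and equality is possible. Your residue-window argument needs $|S'|-d\ge p$, which would not follow even from $|S'|\ge p+1$ unless $d=1$.

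The tight case breaks it. Take $p=2$, $q=3$, $|S|=4$. Then $|S'|=2$ and $|S'|-d=1<p$. For $i=2$ there is no $j\equiv i \pmod p$ with $1\le j\le |S'|-d$, so you cannot move both $i$ and $i+d$ into $S'$ by shifts of $p$. The situation $q=p+d$, $|S|=p+q-d$ is exactly the extremal Fine--Wilf case and arises along the Euclidean chain, so this is not a negligible corner.

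The missing ingredient is $d\mid p$, and with it $|S'|\ge p$ suffices. Since $S'$ has period $d$ and length at least $p$, the prefix $S[1\dd p]$ has period $d$. Because $d\mid p$, this prefix equals $(S[1\dd d])^{p/d}$. Since $p$ is a period of $S$ and $|S|\ge p$, $S$ is a prefix of $(S[1\dd p])^\infty=(S[1\dd d])^\infty$, so $d$ is a period of $S$.

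In your index form, the repair reads as follows. Take $j\in[1,p]$ with $j\equiv i\pmod p$.
\begin{itemize}
\item If $j+d\le p$, argue as you did.
\item Otherwise $i+d$ reduces modulo $p$ to $j+d-p\in[1,d]$. Then $S[j]=S[j+d-p]$ because both positions lie in $S[1\dd p]\subseteq S'$ and differ by $p-d$, a multiple of $d$.
\end{itemize}
With this fix the induction goes through.
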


\begin{lemma}[\cite{DBLP:journals/siamcomp/KnuthMP77}]\label{lem:KMP}
Let $P\in\Sigma^m$.
After $\cO(m)$-time preprocessing, given $S\in\Sigma^*$,
we can compute $\Occ(P,S)$ or the 
prefixes of $P$ that are a suffix of $S$
in $\cO(|S|)$ time.
\end{lemma}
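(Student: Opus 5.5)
The statement to prove is \Cref{lem:KMP}: after $\cO(m)$ preprocessing of $P$, for any $S$ we can compute $\Occ(P,S)$, or the prefixes of $P$ that are suffixes of $S$, in $\cO(|S|)$ time. I would follow the classical Knuth--Morris--Pratt approach.

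\textbf{Preprocessing.} Compute the border (failure) function $f[j]=|\border(P[1\dd j])|$ for all $j\in[m]$. Set $f[1]=0$. To obtain $f[j]$, start from $k=f[j-1]$ and repeatedly replace $k$ by $f[k]$ while $k>0$ and $P[k+1]\neq P[j]$; then set $f[j]=k+1$ if $P[k+1]=P[j]$, and $f[j]=0$ otherwise.

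Correctness uses the standard fact that the borders of $P[1\dd j-1]$ are exactly the iterates $f[j-1], f[f[j-1]],\dots$ (a border of a border is a border). Hence every border of $P[1\dd j]$ is some such border extended by $P[j]$.

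For the running time, use the potential $k$. Each iteration raises it by at most one, and each failure step strictly decreases it. The total number of failure steps is therefore at most $m$, and the preprocessing takes $\cO(m)$ time.

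\textbf{Scanning $S$.} Maintain $k$, the length of the longest prefix of $P$ that is a suffix of $S[1\dd i]$ (restricted to $k<m$ after a full match, via $k\gets f[m]$). To advance to $i+1$, apply the same failure loop with the character $S[i+1]$. Each time $k$ reaches $m$, report the occurrence starting at $i+1-m+1$ and then set $k\gets f[m]$.

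The same potential argument gives $\cO(|S|)$ total time: $k$ increases by at most one per character of $S$, and each failure step decreases it.

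\textbf{Prefixes of $P$ that are suffixes of $S$.} After the scan, the final value $k^\ast$ is the length of the longest prefix of $P$ that is a suffix of $S$. If a full match at the very end was reset, use $k^\ast=m$ when $m\le|S|$ matches at the end.

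All such prefixes are then the chain $k^\ast, f[k^\ast], f[f[k^\ast]],\dots$, each of which is a border-suffix. This chain can be output in time proportional to its length, which is at most $\min(m,|S|)$. Alternatively, it can be output as the $\cO(\log m)$ arithmetic progressions arising from the periodicity lemma, consistent with how the lemma is used later.

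\textbf{Main obstacle.} The main point needing care is the amortized bound, since a single step may trigger many failure transitions. The potential argument settles this, and I expect it to be the only nontrivial part.
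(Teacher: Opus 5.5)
Your proposal is correct. It is the standard Knuth--Morris--Pratt algorithm (failure-function preprocessing, a scan with the amortized potential argument, and the border chain $k^\ast, f[k^\ast], \dots$ to list all prefixes of $P$ that are suffixes of $S$), and since the paper gives no proof of its own but simply cites Knuth, Morris, and Pratt for exactly this algorithm, your argument matches the intended one.
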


\begin{lemma}[Suffix tree~\cite{DBLP:conf/focs/Weiner73}]\label{lem:suffix}
Let $P\in\Sigma^m$. After $\cO(m\log m)$-time preprocessing,
given $S\in\Sigma^{\ell}$, we can compute one of its occurrences in $P$ (if any exists) in $\cO(\ell \log \ell)$ time.
\end{lemma}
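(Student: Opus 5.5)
The plan is to build the suffix tree of $P$ once, with constant-time child lookup, and then answer a query for $S$ by a single descent from the root that reads $S$ character by character.

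\emph{Preprocessing.} First we build the suffix tree $\mathcal{T}$ of $P\$$, where $\$$ is a fresh sentinel. Weiner's algorithm~\cite{DBLP:conf/focs/Weiner73} does this in $\cO(m\log m)$ time if the children of each node are kept in a balanced search tree keyed by the first character of the edge label. Since $\Sigma=[1,\sigma]$ may be much larger than $m$, we also rename the characters of $P$ to ranks in $[1,m]$ by sorting, in $\cO(m\log m)$ time. A static deterministic dictionary (perfect hashing, e.g.\ Hagerup--Miltersen--Pagh) then maps each original character of $P$ to its rank. It has $\cO(m)$ keys, is built in $\cO(m\log m)$ time, and answers lookups in $\cO(1)$ time.

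Next, for every explicit node $u$ of $\mathcal{T}$ we replace its child search tree by a static deterministic dictionary keyed by the ranks of the first edge characters. The total number of keys over all nodes is $\cO(m)$, so all of these dictionaries together cost $\cO(m\log m)$ time. Finally, a single DFS stores at each node $u$ the starting position $\mathrm{pos}(u)$ of one leaf in its subtree, in $\cO(m)$ time.

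\emph{Query.} Given $S$, we descend from the root and process $S[1],S[2],\dots$ in order. At each character we first map it to its rank with the global dictionary; if it does not occur in $P$, we report that $S$ has no occurrence. When we stand at an explicit node, we look up the child edge starting with that rank. When we are inside an edge, we compare the character with the next edge character, which is stored implicitly as a position in $P$, in $\cO(1)$ time. Each character thus costs $\cO(1)$ time. If the whole of $S$ is consumed, we stop at a locus below which lies some node $u$, and we return $\mathrm{pos}(u)$, since every leaf below the locus of $S$ starts with $S$. This takes $\cO(\ell)\subseteq\cO(\ell\log\ell)$ time.

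The main obstacle is the child-lookup cost. A plain balanced-tree search costs $\cO(\log\min(\sigma,m))$ per step, which gives $\cO(\ell\log m)$ rather than $\cO(\ell\log\ell)$ when $\ell\ll m$. The step I would therefore verify most carefully is that deterministic static dictionaries can be built within the $\cO(m\log m)$ preprocessing budget. If that fails, the fallback is to accept $\cO(\ell\log m)$ query time. This fallback suffices in every use where $\ell=\Omega(m^{\smol})$, and $\log m$ and $\log\ell$ coincide up to constants whenever the query strings are polynomially related in length to $m$.
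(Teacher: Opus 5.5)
Your proof is correct. The paper gives no argument of its own for this lemma; it only cites Weiner's suffix tree. Your construction (suffix tree of $P$, a single descent from the root reading $S$, then returning a leaf position stored at the node below the locus) is the standard argument behind that citation.

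What you add is the deterministic static dictionary at each node. With the $\cO(k\log k)$-time construction of Hagerup, Miltersen, and Pagh, the total cost is $\sum_u k_u\log k_u=\cO(m\log m)$. Alternatively, you could build one global dictionary keyed by (node, rank) pairs, which fit in a word. Either way child lookups take $\cO(1)$ time, so you get $\cO(\ell)$ query time. That is stronger than the stated $\cO(\ell\log\ell)$. As you correctly observe, it is also what makes the bound independent of $m$, since a plain balanced-search-tree descent only gives $\cO(\ell\log\min(m,\sigma))$.

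Your fallback remark is more cautious than necessary. The paper uses this lemma only in the preprocessing of \Cref{lem:ap-then-match}, once per node label, and needs only a total of $\Oish(m+N_G)$ there. Hence $\sum_v \cO(|\lambda(v)|\log m)=\cO(N_G\log m)$ already suffices, whatever the relation between $\ell$ and $m$. Labels longer than $m$ can also be rejected immediately.
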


\begin{lemma}[Suffix tree + LCP queries~\cite{DBLP:journals/tcs/LandauV86}]\label{lem:lcp}
Let $P\in\Sigma^m$. After $\cO(m\log m)$-time preprocessing, given $i$ and $j$,
we can compute the longest common prefix of $P[i \dd m]$ and $P[j \dd m]$
in $\cO(1)$ time.
\end{lemma}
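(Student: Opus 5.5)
The plan is to reduce longest-common-prefix queries on suffixes of $P$ to lowest common ancestor (LCA) queries in the suffix tree of $P$. First we append a unique sentinel character $\$\notin\Sigma$ to $P$, so that every suffix $P[i\dd m]\$$ ends at a distinct leaf. We then build the suffix tree of $P\$$. Weiner's construction (as in \Cref{lem:suffix}), with children stored in balanced search trees, takes $\cO(m\log m)$ time; this fits the claimed preprocessing budget without relying on linear-time constructions for integer alphabets. During the construction we store, for every node $u$, its \emph{string depth} $d(u)$, the length of the string spelled from the root to $u$. We also store a pointer from each index $i\in[m]$ to the leaf $\ell_i$ representing $P[i\dd m]\$$.

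The key structural fact is that, for $i\neq j$, the longest common prefix of $P[i\dd m]$ and $P[j\dd m]$ has length exactly $d(\mathrm{LCA}(\ell_i,\ell_j))$. To see this, note that the root-to-$\ell_i$ and root-to-$\ell_j$ paths spell the two suffixes, and they coincide exactly up to their LCA. Below the LCA they continue into different children. Since edges leaving a node start with distinct characters, the next characters of the two suffixes differ. The sentinel guarantees that neither suffix is a proper prefix of the other within the tree, so the LCA is a proper branching node. Its string depth therefore never exceeds $\min(m-i+1,m-j+1)$, and the sentinel is never counted. The case $i=j$ is handled separately by answering $m-i+1$.

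It then remains to answer LCA queries in $\cO(1)$ time after linear preprocessing of the tree, which has $\cO(m)$ nodes. We will invoke a standard constant-time LCA structure, for instance Harel--Tarjan or Bender--Farach-Colton. The latter performs an Euler tour, reduces LCA to range-minimum queries on depths, and answers these with a sparse table on blocks plus in-block lookup tables. Even a plain sparse table over the Euler tour gives $\cO(m\log m)$ preprocessing and $\cO(1)$ queries, which already suffices here.

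Given a query $(i,j)$, we locate $\ell_i$ and $\ell_j$ and return $d(\mathrm{LCA}(\ell_i,\ell_j))$ in $\cO(1)$ time. The only step needing care is the correctness of the LCA/string-depth correspondence, which the sentinel handles. The rest consists of citing known constructions, so I do not expect a genuine obstacle.
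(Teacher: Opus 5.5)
Your proposal is correct and is exactly the standard argument behind the cited Landau--Vishkin result: build the suffix tree of $P$ with a unique terminator, store string depths, and answer each query as the string depth of the lowest common ancestor of the two leaves using a constant-time LCA structure. The paper gives no proof of its own beyond this citation. Your handling of the terminator and of the case $i=j$ is also fine.
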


\begin{lemma}[\cite{DBLP:conf/sosa/Pissis25}]\label{lem:b2}
The \bSMBG problem for $b=2$ can be solved in  $\cO(m+\setsize{E}+N)$ time.   
\end{lemma}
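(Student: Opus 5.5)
This is \Cref{lem:b2}, due to Pissis. We decide, for every edge $(u,v)$ with $u\in V_1$ and $v\in V_2$, whether some split $P=\alpha\beta$ has $\alpha$ a nonempty suffix of $\lambda(u)$ and $\beta$ a nonempty prefix of $\lambda(v)$. This is a problem about one cut point per edge, and we reduce it to queries on the border structure of $P$.

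\textbf{Step 1: the longest matches at each node.} We first preprocess $P$ and its reverse in $\cO(m)$ time. Using \Cref{lem:KMP}, for every $u\in V_1$ we compute $s_u$, the length of the longest prefix of $P$ that is a suffix of $\lambda(u)$. Symmetrically, for every $v\in V_2$ we compute $t_v$, the length of the longest suffix of $P$ that is a prefix of $\lambda(v)$. Over all nodes this costs $\cO(N)$. If some $s_u\ge m$ or $t_v\ge m$, the label contains $P$ entirely; we treat that as a single-node match, or ignore it according to the exact problem definition.

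\textbf{Step 2: characterize all matching lengths.} A prefix $P[1\dd i]$ is a suffix of $\lambda(u)$ exactly when $P[1\dd i]$ is a border of $P[1\dd s_u]$ or equals it. So the valid lengths $i$ form the failure-function chain of $s_u$ in the KMP border tree of $P$; that is, they are the ancestors of node $s_u$. Symmetrically, the valid lengths $j$ of suffixes matched at $v$ are the ancestors of $t_v$ in the border tree of the reversed pattern. The edge $(u,v)$ yields an occurrence iff some ancestor $i$ of $s_u$ in the prefix tree has $m-i$ among the ancestors of $t_v$ in the suffix tree.

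\textbf{Step 3: answer each edge query in $\cO(1)$ amortized time.} This is the main obstacle. For each edge we must decide whether two root-to-node paths in two trees on $[0,m]$ intersect under the map $i\mapsto m-i$. My first attempt is to answer all edge queries offline in $\cO(m+\setsize{E})$ total time. We run a DFS over the prefix border tree, maintaining the set of ancestors of the current node. For the suffix tree we use Euler-tour intervals, so that the test ``$m-i$ is an ancestor of $t_v$'' becomes ``$t_v$ lies in the subtree interval of $m-i$''. Each query then asks whether the current root path, mapped into the second tree, contains a node whose subtree covers the position of $t_v$. Equivalently, we look for the deepest marked ancestor of $t_v$ in the second tree, where the marked nodes are those on the current path. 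This is an incremental/decremental marked-ancestor problem, which has a near-constant-time solution and yields $\cO((m+\setsize{E})\alpha)$ time.

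\textbf{Removing the overhead.} To get truly linear time, I would instead use the structural fact that borders form $\cO(\log m)$ arithmetic progressions. This allows a bucketing of queries by progression, combined with the Gawrychowski / Gu--Farach--Beigel prefix-suffix query structure. That structure answers ``does $P$ occur in $P[1\dd s]\cdot P[m-t+1\dd m]$ crossing the boundary'' in $\cO(1)$ time after $\cO(m)$ preprocessing, and this is exactly our per-edge query. I would therefore invoke that structure as a black box, with one query per edge, for a total time of $\cO(m+\setsize{E}+N)$.
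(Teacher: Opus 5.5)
Your final argument is exactly the route the paper sketches in its introduction for this cited result of Pissis, so it is correct: compute via \Cref{lem:KMP} the longest prefix of $P$ that is a suffix of each $\lambda(u)$, $u\in V_1$, and symmetrically for $V_2$, then answer one constant-time prefix--suffix query per edge with the Gu--Farach--Beigel/Gawrychowski structure. The offline marked-ancestor detour in Step~3 is unnecessary, and its near-constant-time claim is unjustified as stated, since your DFS both marks and unmarks (the known near-constant bounds cover only the mark-only or unmark-only versions, and the online mark/unmark version has an $\Omega(\log n/\log\log n)$ lower bound); nothing depends on it, because you replace it with the black box.
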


\section{Recap: State-of-the-Art Algorithm for \texorpdfstring{\bSMBG}{b-SMBG}}\label{sec:SOTA}

\Cref{alg:sota-overview} restates the state of the art for \bSMBG as, in many ways, it is the natural algorithm for this problem.
This is a version of the algorithm by Amir et al.~\cite{DBLP:journals/jal/AmirLL00} to specifically solve \bSMBG. We have adapted the algorithm's presentation to fit the rest of our paper; the algorithm itself is essentially unchanged. We thus omit a formal time or correctness analysis.
Our algorithms will follow the same framework, but speed up crucial operations performed by the algorithm to obtain our improvements.

\begin{algorithm}[ht]
   \textbf{Input:} Block graph $G=(V = V_1 \sqcup \dots \sqcup V_b, E, \lambda)$, pattern $P$ of length $m$
    \begin{enumerate}[noitemsep]
            \item For all $v \in V$, we maintain a set $M_v \subseteq [m]$ that represents the set of prefixes of $P$ that can be matched by a walk ending in $v$. In particular, $i\in M_v$ represents $P[1\dd i]$.
            \item For each node $v \in V_1$, we compute the borders of string $P \# \lambda(v)$, where $\#\notin\Sigma$ is a separator. We add the lengths of those borders to $M_v$.
            \item For each node $v\in V_j$, for $j\in [2,b)$, we compute the union of all the match sets of the predecessors of $v$, that is $M'_{v} \coloneqq \bigcup_{v' \in \indegree{v}} M_{v'}$. We compute all occurrences of $\lambda(v)$ in $P$. If there is an occurrence starting at position $i$ and $(i-1)\in M'_{v}$, we add $i+|\lambda(v)|-1$ to $M_{v}$.
            \item For each node $v \in V_b$, compute $M'_{v}$ as in the previous step. Compute the borders of string $\lambda(v)\# P$, where $\#\notin\Sigma$ is a separator. If there is a border of length $k$ and $(m-k)\in M'_{v}$, add $m$ to $M_{v}$.
            \item Report all nodes $v\in V_b$ such that $m \in M_{v}$. Every such node is the endpoint of a matching walk. 
    \end{enumerate}
    \caption{The state-of-the-art algorithm for \bSMBG.}
    \label{alg:sota-overview}
\end{algorithm}

\begin{theorem}[\cite{DBLP:journals/jal/AmirLL00}] \Cref{alg:sota-overview} solves the \bSMBG problem in $\cO(m\setsize{E} + N)$ time.
\end{theorem}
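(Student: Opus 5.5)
We establish correctness and running time separately. Both rest on the invariant that, after processing block $V_j$, every $v\in V_j$ satisfies
\[
M_v=\{\,i\in[m] : \text{some walk } v_1,\dots,v_j=v \text{ with } v_t\in V_t \text{ has } P[1\dd i]=\alpha\cdot\lambda(v_2)\cdots\lambda(v_j),\ \alpha \text{ a nonempty suffix of } \lambda(v_1)\,\}.
\]
For $j<b$ we additionally need $i<m$ whenever $j\ge 2$, since the walk must continue into $V_b$; positions with $i=m$ are harmless because they can never be extended.

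\textbf{Correctness (induction on $j$).}
\begin{itemize}
\item \emph{Base case $j=1$.} A border of $P\#\lambda(v)$ of length $k$ is exactly a prefix $P[1\dd k]$ that is also a suffix of $\lambda(v)$. The separator $\#$ guarantees $k\le\min(m,|\lambda(v)|)$. Hence step 2 computes $M_v$ correctly.
\item \emph{Inductive step $j\in[2,b)$.} By hypothesis, $M'_v$ is precisely the set of prefix lengths matchable by walks ending at some in-neighbour of $v$. Such a walk extends through $v$ iff $\lambda(v)$ occurs in $P$ starting at position $i$ with $i-1\in M'_v$. The new matched length is then $i+|\lambda(v)|-1$, which is exactly what step 3 adds.
\item \emph{Last block $j=b$.} An occurrence ends with a nonempty prefix $\beta$ of $\lambda(v)$ with $P[m-|\beta|+1\dd m]=\beta$, i.e.\ a border of $\lambda(v)\#P$ of length $k=|\beta|$. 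This requires $m-k\in M'_v$, which step 4 checks.
\end{itemize}
Soundness and completeness both follow by unfolding these equivalences. Consequently, step 5 reports $v$ iff some occurrence $v_1,\dots,v_b$ of $P$ ends at $v$.

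\textbf{Running time.} Represent each $M_v$ as a bit vector of length $m+1$.
\begin{itemize}
\item \emph{Steps 2 and 4.} The border computations take $\cO(m+|\lambda(v)|)$ time per node using a failure function. To avoid paying $m$ per node, we preprocess $P$ once in $\cO(m)$ time. We then stream $\lambda(v)$ through the KMP automaton, as in \cref{lem:KMP}, for $\cO(|\lambda(v)|)$ time; symmetrically, we use the automaton of the reversed pattern for the last block. The borders are then read off the failure chain; this chain may have length $\Theta(m)$, but filling a bit vector of size $m$ costs $\cO(m)$ anyway.
\item \emph{Unions.} Computing $M'_v$ as an OR of bit vectors costs $\cO(m)$ per in-edge, for $\cO(m\setsize{E})$ in total.
\item \emph{Step 3.} Finding $\Occ(\lambda(v),P)$ costs $\cO(m+|\lambda(v)|)$ by \cref{lem:KMP} (with roles swapped), followed by $\cO(m)$ to test the candidate positions.
\end{itemize}

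\textbf{Main obstacle.} The delicate point is charging the $\cO(m)$ per-node terms, including bit-vector initialisation, to edges. Nodes that lie on no walk spanning all $b$ blocks can be pruned in $\cO(\setsize{V}+\setsize{E})$ time by forward and backward reachability. After pruning, every remaining node has an incident edge (for $b\ge2$), so $m\setsize{V}\le 2m\setsize{E}$. Reading all labels once costs $\cO(N)$. Summing gives $\cO(m\setsize{E}+N)$ overall.
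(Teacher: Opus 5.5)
Your proof is correct. The paper gives no proof of this statement: it explicitly omits the time and correctness analysis and defers to Amir et al., so there is no competing argument to compare against. Your invariant is exactly the informal meaning of $M_v$ stated in Step~1 of \Cref{alg:sota-overview}, and charging $\cO(m)$ per in-edge for the bit-vector unions is evidently the intended source of the $m\setsize{E}$ term.

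Your pruning step is a real contribution, not a formality. As literally written, the algorithm can spend $\Theta(m)$ on every node: it allocates $M_v$, and Step~3 computes $\Occ(\lambda(v),P)$ even when $M'_v=\emptyset$. That gives only $\cO(m(\setsize{V}+\setsize{E})+N)$ in general, which is not $\cO(m\setsize{E}+N)$ when many nodes are isolated. Your reachability pruning is what makes the stated bound hold. Skipping nodes with empty $M'_v$ and allocating lazily would work equally well.

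One small wording point concerns the extra restriction $i<m$ for $2\le j<b$. Step~3 can insert $m$ into $M_v$, so with that restriction your invariant is not literally an equality for the algorithm as written. It is cleaner to state the invariant without the restriction. Then observe that, under the nonempty-$\beta$ convention you adopt for the last block, Step~4 only consults $m-k$ for $k\ge 1$, so whether $m\in M'_v$ never matters. Nonemptiness of $\alpha$ is already enforced by $M_v\subseteq[m]$. This is what your ``harmless'' remark is really saying.
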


Let us remark that the direct reduction to standard string matching by building the concatenation of all strings given by paths of length $(b-1)$, and then searching $P$, might give $\Omega(|V|^b)$ many such strings, making this strategy unattractive for any $b\geq 2$.

\section{Exploiting Periodicity Yields Near-Linear Time for Three Blocks}\label{sec:three-blocks}
In this section, we focus on $3$-block graphs $G=(V=V_1 \sqcup V_2 \sqcup V_3, E, \lambda)$. The core bottleneck in the state-of-the-art algorithm (\Cref{alg:sota-overview}) is maintaining a set of size $m$ to represent 
how each string $\lambda(v)$, for $v\in V$, interacts with the length-$m$ pattern $P$. This means that for some middle node $v \in V_2$, we explicitly consider all occurrences of $\lambda(v)$ in $P$.
Similar considerations apply for the suffixes and prefixes of $P$ matched by the nodes of the first and last block, respectively.
Fundamentally, an approach like that cannot directly lead to an algorithm faster than $\Theta(m\setsize{V})$ as this is the size of the data just described. Our goal in this section is to circumvent exactly this bottleneck. We do this by representing how each node interacts with $P$ more efficiently.
This will then allow us to combine that information more efficiently, solving \bSMBG for $b=3$ significantly faster than the state of the art.

\subsection{How the First and Last Block Interact with the Pattern}
We take a closer look at how we can represent prefixes, suffixes, and borders of $P$. A prefix of a string is characterized by its ending position,
and a suffix of a string is characterized by its starting position.
A set of prefixes of a string can thus be characterized as a set of natural numbers, which can always be written as a union of APs.
We next prove several combinatorial properties showing that, for the cases we care about, relatively few such APs suffice and they behave nicely, in a sense that we will formalize shortly.
By symmetry, the same holds for a set of suffixes of a string.
We will use this AP representation extensively.
We similarly treat sets of borders of a string.
As a border is both a prefix and a suffix, we (choose to) identify it with the ending position of the prefix, which coincides with its length.
We start with the following well-known result that underlies our algorithmic improvement.

\begin{lemma}
  \label{lem:ap-representation}
  For any $P \in \Sigma^m$, after $\cO(m)$-time preprocessing, given $S \in \Sigma^*$, all prefixes of $P$ that are a suffix of $S$ can be computed as $\cO(\log m)$ overlap-free APs in $\cO(\setsize{S})$ time.
\end{lemma}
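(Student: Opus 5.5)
The plan has two parts: computing the set of lengths, and then compressing it into few APs using the structure of borders of $P$.

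First, the computation. We run the KMP automaton of $P$ (\Cref{lem:KMP}) over $S$. After $\cO(m)$ preprocessing this gives, in $\cO(\setsize{S})$ time, the length $k$ of the longest prefix of $P$ that is a suffix of $S$. If $S$ has length at least $m$ and contains an occurrence of $P$ ending at its last position, we take $k=m$. Every other prefix of $P$ that is a suffix of $S$ is a suffix of $P[1\dd k]$, and hence a border of $P[1\dd k]$. So the sought set is $\{k\}$ together with the border lengths of $P[1\dd k]$, which is the chain $k, |\border(P[1\dd k])|, |\border(\border(P[1\dd k]))|, \dots$ in the failure tree of $P$.

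Second, the combinatorial claim: this chain decomposes into $\cO(\log m)$ APs whose intervals are pairwise disjoint. The standard argument takes a prefix $U=P[1\dd j]$ with period $p=\per(U)$ and considers two cases.

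\emph{Case $j\ge 2p$.} The borders of $U$ of length at least $j-\lfloor j/2\rfloor$ have lengths $j-p, j-2p, \dots$, down to the last value that is still at least $j/2$. This follows from the periodicity lemma: any border of length $\ell\ge j/2$ gives a period $j-\ell\le j/2$. Together with $p$, the periodicity lemma forces $j-\ell$ to be a multiple of $p$, since $(j-\ell)+p-\gcd\le j$. Hence these borders form a single AP with difference $p$. The next border below this AP has length less than $j/2$.

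\emph{Case $j<2p$.} The longest border has length $j-p<j/2$.

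In both cases, after emitting one AP (possibly a singleton) we move to a border of length at most about half the current one. So there are $\cO(\log m)$ groups. Consecutive groups occupy disjoint length intervals, which gives overlap-freeness.

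Algorithmically, during preprocessing we store for each prefix length $j$ its period $p_j=j-|\border(P[1\dd j])|$ from the KMP failure function. We also store a pointer $\mathrm{jump}(j)$ to the first border length outside the AP starting at $j$. This pointer is computed in $\cO(m)$ total by dynamic programming over the failure function: if $p_{j-p_j}=p_j$, then the AP continues and $\mathrm{jump}(j)=\mathrm{jump}(j-p_j)$; otherwise it ends. The AP starting at $j$ is then $(\ell, j, p_j)$ for the appropriate $\ell$, also stored. The query follows jumps from $k$, emitting one AP per jump, so it takes $\cO(\log m)$ time. This is absorbed in $\cO(\setsize{S})$ when $\setsize{S}\ge \log m$. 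When $\setsize{S}$ is small, the chain has at most $\setsize{S}+1$ elements, so the bound still holds.

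The main obstacle is making the groups produced by this jump recursion coincide with the halving argument. Grouping by equal period $p_j$ along the chain could in principle create many short groups. I would resolve this by showing that the period stays constant along the chain while the length is at least twice the period. A group of a new period therefore only begins after the length has dropped below twice the old period, which is where the halving happens. With that established, the $\cO(\log m)$ bound and overlap-freeness follow together.
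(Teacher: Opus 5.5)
Your proposal is correct. It shares the paper's skeleton: one KMP pass, the answer is $k$ together with the border chain of $P[1\dd k]$, and that chain splits into $\cO(\log m)$ APs. It differs in how the last two ingredients are handled. The paper lists every matching prefix length explicitly (there are at most $\min(m,\setsize{S})$ of them, so this stays within $\cO(\setsize{S})$), sorts them, and cites the classical fact that the periods of a string, and hence by the $\per+|\border|$ duality its borders, form $\cO(\log m)$ APs. You instead prove that fact with the halving argument. You also emit the APs in $\cO(\min(\log m,\setsize{S}))$ time through precomputed $\mathrm{jump}$ pointers, without ever listing individual borders. Your route is self-contained and makes explicit how the sorted list is actually cut into APs (runs of equal period), which the paper leaves implicit. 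The paper's route is shorter. Since the KMP scan already costs $\Theta(\setsize{S})$, your faster output phase gains nothing asymptotically.

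One step at the end of your last paragraph needs tightening. ``A new run begins only below $2p$'' is not the same as ``the length has halved relative to the run's top $j$''. Take $P[1\dd 21]=(\texttt{aabbbbbbba})^2\texttt{a}$: the chain is $21\to 11\to 2\to 1$ with $p_{21}=10$ and $p_{11}=9$. Your DP therefore produces the run $\{21\}$ followed by a run starting at $11>21/2$.

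The invariant must also be stated for the element you step to: if $p_c=p$ and $c-p\ge 2p$, then $p_{c-p}=p$. The condition $j\ge 2p_j$ alone does not give $p_{j-p_j}=p_j$, as $\texttt{abaaba}$ shows.

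To close the count, charge runs to your halving groups. Groups with top $j<2p_j$ are singletons. In a group with top $j\ge 2p$, every element of length at least $2p$ has period exactly $p$. The interval $[j/2,2p)$ contains at most one element of the progression $j,j-p,\dots$, so each group contains at most one internal period change. Hence the number of equal-period runs is at most twice the number of halving groups, which is $\cO(\log m)$.
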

\begin{proof}
    We employ \Cref{lem:KMP}. The prefixes of $P$ are encoded from $[\min(m,|S|)]$, and so they can be sorted in $\cO(|S|)$ time using bucket sort.
    It is well-known that the sorted sequence of periods of a length-$n$ string can be cut into $\cO(\log n)$ (overlap-free) APs~\cite{DBLP:journals/siamcomp/KnuthMP77}. By the duality between periods and borders, this also holds for the prefixes of $P$ that are a suffix of $S$.
\end{proof}

The symmetric claim (\Cref{cor:ap-representation-sym}) holds too \eg{by reversing the strings $P$ and $S$}.

\begin{corollary}
  \label{cor:ap-representation-sym}
  For any $P \in \Sigma^m$, after $\cO(m)$-time preprocessing, given $S \in \Sigma^*$, all suffixes of $P$ that are a prefix of $S$ can be computed as $\cO(\log m)$ overlap-free APs in $\cO(\setsize{S})$ time.
\end{corollary}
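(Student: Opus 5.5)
The plan is to reduce \Cref{cor:ap-representation-sym} to \Cref{lem:ap-representation} by reversing strings. Write $X^R$ for the reversal of a string $X$. A suffix $P[i\dd m]$ of $P$ is a prefix of $S$ if and only if its reversal $(P[i\dd m])^R = P^R[1\dd m-i+1]$ is a suffix of $S^R$. So the suffixes of $P$ that are prefixes of $S$ correspond bijectively to the prefixes of $P^R$ that are suffixes of $S^R$. Under this correspondence, the suffix starting at position $i$ is identified with the prefix of length $j = m-i+1$.

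The algorithm follows directly. In the preprocessing phase we build $P^R$ in $\cO(m)$ time and run the $\cO(m)$-time preprocessing of \Cref{lem:ap-representation} on $P^R$. Given $S$, we build $S^R$ in $\cO(|S|)$ time. We then query \Cref{lem:ap-representation}, which returns all prefixes of $P^R$ that are suffixes of $S^R$ as $\cO(\log m)$ overlap-free APs in $\cO(|S|)$ time.

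It remains to translate each AP back to starting positions in $P$. The map $j \mapsto m-j+1$ is an affine bijection that reverses order. It therefore sends an AP $(\ell, r, x)$ of prefix lengths to the AP $(m-r+1,\, m-\ell+1,\, x)$ of starting positions. The common difference is preserved, and this holds for singletons as well. Each translation costs $\cO(1)$. Because the map is a bijection that sends intervals to intervals, disjoint spanning intervals remain disjoint, so the overlap-free property is preserved. The number of APs stays $\cO(\log m)$.

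The only point needing care is the indexing convention, namely that suffixes are identified by their starting position. This is the reason the endpoints of each AP must be swapped when translating. Beyond that, no real obstacle is expected.
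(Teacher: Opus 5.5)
Your proposal is correct and is the same argument as the paper, which justifies the corollary in one line, by reversing $P$ and $S$ and invoking \Cref{lem:ap-representation}; you have additionally written out the index translation $j \mapsto m-j+1$ and explained why overlap-freeness is preserved. One assumption is left implicit: swapping endpoints to get $(m-r+1,\, m-\ell+1,\, x)$ is valid only if $r$ is the largest element of the AP. This can be enforced in $\cO(1)$ time per AP by setting $r \leftarrow \ell + x\lfloor (r-\ell)/x \rfloor$.
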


We will apply \Cref{lem:ap-representation} to the pattern $P$ paired with every node label in $V_1$ (and \Cref{cor:ap-representation-sym} with every node label in $V_3$). To efficiently work with their output, we need the following structural fact about the interaction of two APs \emph{for the same $P$ but a different $S$}.

\begin{lemma}
\label{lem:long-ap-overlap}
Given $P \in \Sigma^m$ and $S_1, S_2 \in \Sigma^*$, let $\mathcal{A}_1$ be the set of lengths of all prefixes of $P$ that are suffixes of $S_1$, and $\mathcal{A}_2$ be the set of lengths of all prefixes of $P$ that are suffixes of $S_2$. Given an integer $j \ge 0$,
let $(\ell_1, r_1, x_1)$ be an AP that contains exactly all elements from $\mathcal{A}_1 \cap [2^j, 2^{j+1}) \cap [\ell_1, r_1]$  
and $(\ell_2, r_2, x_2)$ be an AP that contains  exactly all elements from $\mathcal{A}_2 \cap [2^j, 2^{j+1}) \cap [\ell_2, r_2]$.
If both APs contain at least $3$ elements, then $x_1 = x_2$. 
\end{lemma}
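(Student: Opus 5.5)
The plan is to show that both common differences equal the period of the same prefix of $P$. Then $x_1=x_2$ follows directly.

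First, fix one AP, say $(\ell_1,r_1,x_1)$, with at least three elements $a<a+x_1<a+2x_1\le r_1$, all in $[2^j,2^{j+1})$. Since $a$ and $a+x_1$ both lie in $\mathcal{A}_1$, the prefixes $P[1\dd a]$ and $P[1\dd a+x_1]$ are both suffixes of $S_1$. Hence $P[1\dd a]$ is a suffix, and trivially a prefix, of $P[1\dd a+x_1]$, i.e., a border of it. So $x_1$ is a period of $P[1\dd a+x_1]$.

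The same argument with $a+x_1$ and $a+2x_1$ shows that $x_1$ is a period of $P[1\dd a+2x_1]$, and in fact of $P[1\dd r_1]$ if we chain along the AP. Since $x_1\le (2^{j+1}-2^j)/2<2^{j-1}$ and all elements are at least $2^j$, the period $x_1$ is small compared to the length $\ell_1\ge 2^j$ of the prefixes involved.

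Next, I show that $x_1=\per(P[1\dd \ell_1])$ and, more usefully, that $x_1=\per(P[1\dd 2^j])$. Here I use that the AP contains \emph{exactly} the elements of $\mathcal{A}_1$ in its interval. Suppose $P[1\dd a+x_1]$ had a smaller period $q<x_1$. By the periodicity lemma, since $q+x_1\le 2x_1<2^j\le a+x_1$, the number $g=\gcd(q,x_1)<x_1$ is also a period. Then $P[1\dd a+x_1-g]$ is a border of $P[1\dd a+x_1]$, which is a suffix of $S_1$. So $a+x_1-g\in\mathcal{A}_1$. This value lies strictly between $a$ and $a+x_1$, and hence inside $[\ell_1,r_1]\cap[2^j,2^{j+1})$ but not in the AP, a contradiction.

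So $x_1=\per(P[1\dd a+x_1])$. A prefix of length at least $2^j\ge 2x_1$ of a string with period $x_1$ has the same minimal period, again by the periodicity lemma. Therefore $x_1=\per(P[1\dd 2^j])$, and symmetrically $x_2=\per(P[1\dd 2^j])$, which gives $x_1=x_2$.

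The main obstacle is the step of pinning $x_1$ to the minimal period. It requires checking that the inserted element $a+x_1-g$ really falls in the range where exactness is assumed. It also requires verifying the length condition for the periodicity lemma. That condition is where the "at least three elements" hypothesis enters: it guarantees $2x_1<2^j$, so any two periods $\le x_1$ satisfy $p+q-\gcd(p,q)\le 2^j$. Passing to the common prefix $P[1\dd 2^j]$ uses the same bound.
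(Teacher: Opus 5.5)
Your proof is correct, but it takes a different route from the paper's.

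\textbf{The paper's route.} It couples the two APs first. Both $x_1$ and $x_2$ are periods of the common prefix $P[1\dd L]$ with $L=\min(u_3,v_3)\ge 2^j>x_1+x_2$, so the periodicity lemma gives the period $g=\gcd(x_1,x_2)$. Exhaustiveness is used only afterwards: $u_3-g$ is a border length lying in $[u_2,u_3)$, so it must belong to the AP. Hence $x_1\mid g$, which forces $g=x_1$, and symmetrically $g=x_2$.

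\textbf{Your route.} You use exhaustiveness inside each AP separately to show that $x_i$ is the minimal period of $P[1\dd a+x_i]$. You then transfer this to the fixed prefix, obtaining $x_i=\per(P[1\dd 2^j])$.

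\textbf{What each buys.} Your argument proves more. Every AP with at least three elements in the dyadic range $[2^j,2^{j+1})$ has one canonical common difference, independent of $S$, so the statement holds for any number of strings at once. It also makes the claim in \Cref{lem:ap-overlap-fix}, that there are only $\cO(\log m)$ distinct common differences, immediate. The paper's argument is a shorter pairwise proof that needs only one application of the periodicity lemma.

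\textbf{Minor points.}
\begin{itemize}
\item In your second step the periodicity lemma is unnecessary. A period $q<x_1$ of $P[1\dd a+x_1]$ directly gives the border length $a+x_1-q$, which lies strictly between $a$ and $a+x_1$.
\item Your third step (``a prefix of length at least $2x_1$ keeps the minimal period'') needs one more fact besides the periodicity lemma. A period $g\mid x_1$ of a prefix of length $\ge x_1$ propagates to the whole $x_1$-periodic string. This is worth stating explicitly. The paper silently uses the same fact when it passes from $P[1\dd L]$ to $P[1\dd u_3]$.
\item The chain $x_1\le (2^{j+1}-2^j)/2<2^{j-1}$ is misstated. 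It should read $2x_1\le 2^j-1$, i.e.\ $x_1<2^{j-1}$, which is what you actually use later.
\end{itemize}
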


We will make sure that in our algorithm, the APs match the preconditions of this lemma and we can thus apply this result.
The fact that overlapping APs from different $S_1, S_2$ with at least three elements share their common difference will allow us to restore overlap-freeness (which is the nice behavior mentioned above) after combining two sets of APs.

\begin{proof}
Let $u_1, u_2, u_3$ be three consecutive elements in $(\ell_1,r_1,x_1)$ with common difference $x_1$. 
Let $v_1, v_2, v_3$ be three consecutive elements in $(\ell_2,r_2,x_2)$ with common difference $x_2$.

These elements establish $P[1 \dd u_3]$ has period $x_1$ and $P[1 \dd v_3]$ has period $x_2$. 
Since $u_1, u_3 \in [2^j, 2^{j+1})$, $u_3 - u_1 = 2x_1 < 2^j$, so $x_1 < 2^{j-1}$. Similarly, $x_2 < 2^{j-1}$.
Let $L = \min(u_3, v_3)$. Since $L \ge 2^j$ and $x_1 + x_2 < 2^j$, the prefix $P[1 \dd L]$ 
satisfies the condition of the periodicity lemma $L \ge x_1 + x_2 - \gcd(x_1, x_2)$. 
Therefore, $P[1 \dd L]$ has period $g \coloneqq \gcd(x_1, x_2)$.

Since $P[1 \dd u_3]$ has period $g$, the prefix $P[1 \dd u_3 - g]$ is a suffix of $P[1 \dd u_3]$. 
Therefore, $P[1 \dd u_3 - g]$ is a suffix of $S_1$.
We verify that $u_3 - g$ lies within the interval $\mathcal{I} = [2^j, 2^{j+1})$:
\begin{enumerate}
    \item $u_3 - g < u_3 < 2^{j+1}$ (true since $g \ge 1$).
    \item $u_3 - g \ge u_3 - x_1 = u_2 \ge 2^j$ (true since $g = \gcd(x_1, x_2) \le x_1$).
\end{enumerate}

Since $u_3 - g \in \mathcal{A}_1 \cap \mathcal{I} \cap [\ell_1,r_1]$ and the AP $(\ell_1, r_1, x_1)$ is exhaustive for that interval \ie{it contains all elements from $\mathcal{A}_1 \cap [\ell_1,r_1]$}, $u_3 - g$ must be an element of the AP.
In an AP with common difference $x_1$, if two elements $A$ and $B$ exist, then $|A-B|$ must be a multiple of $x_1$. Here, $|u_3 - (u_3 - g)| = g$. Thus,
if $g$ is a multiple of $x_1$, then $g \ge x_1$. Combined with the fact that $g = \gcd(x_1, x_2) \le x_1$, it must be that $g = x_1$.
By symmetry for the second AP, $g = x_2$. We conclude that $x_1 = x_2 = \gcd(x_1, x_2)$.
\end{proof}

In the following lemma, we show how, by iterative application of \Cref{lem:long-ap-overlap},  we combine the sets of APs for many different strings into one large set of APs that is still ``well-behaved''.

\begin{lemma}
\label{lem:ap-overlap-fix}
   Given $P \in \Sigma^m$ and $S_1, \dots, S_{k} \in \Sigma^*$, let $\{A_i\}_{i \in [k]}$ be such that each $A_i$ is the set of APs from \Cref{lem:ap-representation} for $P$ and $S_i$.
   
  Given all $A_i$ as input, we can compute a synchronized set of $\cO(k \log m)$ APs representing all elements in the union of all AP elements over all $A_i$
  in $\cO(k \log m)$ total time.
  Moreover, the non-singleton APs have at most $\cO(\log m)$ unique common differences.
\end{lemma}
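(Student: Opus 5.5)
Every AP is first cut along the dyadic intervals $[2^j,2^{j+1})$, for $j=0,\dots,\lfloor\log m\rfloor$, and each dyadic interval is then handled on its own. So the first step is to cut every AP in every $A_i$ at the dyadic boundaries. An AP with common difference $x$ whose elements fall into several dyadic intervals splits into one piece per interval. Each piece is again an AP with the same difference, computed in $\cO(1)$ time by clamping $\ell$ and $r$. I will argue that the total number of pieces per $A_i$ remains $\cO(\log m)$. The reason is that the $\cO(\log m)$ APs of \Cref{lem:ap-representation} can be taken, up to regrouping, as the standard border-group decomposition, where each group lies in a single dyadic range or can be refined to do so at constant overhead. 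Failing that, I would argue directly that the number of (AP, dyadic interval) incidences is $\cO(\log m)$, since the APs are overlap-free and there are only $\cO(\log m)$ intervals. Every piece with at least $3$ elements is then exactly an AP as in \Cref{lem:long-ap-overlap}: it contains exactly the elements of $\mathcal{A}_i\cap[2^j,2^{j+1})\cap[\ell,r]$. This holds because the original AP was exhaustive on its span, which is the content of the representation from \Cref{lem:ap-representation} and its overlap-freeness. Pieces with at most $2$ elements are replaced by at most two singleton APs.

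Next, fix a dyadic interval $I_j$. By \Cref{lem:long-ap-overlap}, all pieces inside $I_j$ with at least $3$ elements, coming from any of the $S_i$, share one common difference $x^{(j)}$. Pieces from the same $S_i$ already share it as well, since the lemma applies with $S_1=S_2$. Hence the non-singleton pieces of the whole collection use at most one difference per dyadic interval, which gives the claimed bound of $\cO(\log m)$ distinct differences. Synchronization follows at once: two non-singleton pieces either lie in different dyadic intervals, and then their spans are disjoint, or they lie in the same interval and have equal difference. Singletons are disregarded by definition. The output therefore has $\cO(k\log m)$ APs, its union equals the union of all the $A_i$, and it is produced in $\cO(k\log m)$ time, because each piece costs $\cO(1)$. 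No sorting or merging is required, since the statement does not ask for overlap-freeness.

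The main obstacle is the counting in the first step. I must make sure that dyadic cutting does not blow up the $\cO(\log m)$ count per string. I must also check the exhaustiveness hypothesis of \Cref{lem:long-ap-overlap}, namely that no element of $\mathcal{A}_i$ inside $[\ell,r]$ is missed by the piece. I would first establish both through the classical fact that the borders of a prefix of length in $[2^j,2^{j+1})$ that are at least $2^{j-1}$ long form a single AP. If that becomes delicate, I would instead recompute the decomposition per dyadic interval directly from the sorted border list produced by \Cref{lem:KMP}. That computation costs $\cO(|S_i|)$ time and is charged to the earlier step, not to this lemma.
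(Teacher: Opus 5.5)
Your proposal is correct and follows essentially the same route as the paper: cut every AP at the dyadic boundaries (each $A_i$ then grows to only $\cO(\log m)$ pieces, because overlap-freeness means each boundary is crossed by at most one AP of $A_i$), and turn two-element pieces into singletons. Then apply \Cref{lem:long-ap-overlap} inside each dyadic interval, which gives both synchronization and the $\cO(\log m)$ bound on distinct common differences. Your overlap-freeness argument for the exhaustiveness hypothesis is slightly more explicit than the paper's, and the hedged fallbacks (regrouping into border groups, recomputing from the KMP output) are not needed.
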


\begin{proof}
The size of our input is $\cO(k \log m)$ by \Cref{lem:ap-representation}. Our only operation will be splitting APs. By this, we mean replacing one AP $(\ell, r, x)$ by two APs $(\ell_1, r_1, x)$ and $(\ell_2, r_2,x)$ that represent the same elements and have the property $\ell_1 \le r_1 < \ell_2 \le r_2$.

Let $\cA \coloneqq \bigcup_{i \in [k]} A_i$.
First, split all APs so that each resulting AP is a subset of an interval of the form $[2^j, 2^{j+1})$, for $j=0,1,\ldots,\lfloor\log m \rfloor$.
That is, split each AP at every dyadic boundary $2^j$ it intersects, so that each
resulting AP is fully contained in a single interval $[2^j,2^{j+1})$. For each $A_i$, this increases its size to at most $2\setsize{A_{i}} + \log m$ because $A_i$ was overlap-free.
Therefore, in total, the size of $\cA$ increases to at most $(2\setsize{\cA} + k\lceil \log m \rceil) \in \cO(k \log m)$.
Second, we split each AP that now contains exactly two elements into two singletons, again at most doubling the size of $\cA$.
The total size remains in $\cO(k \log m)$.
As we spent constant time per split, the time is bounded by the input and output size.

Now we want to check the property that the non-singleton APs in $\cA$ do not overlap unless they share their common difference.
Although each $A_i$ is overlap-free, APs originating from different sets
$A_i$ may overlap. To address this, we treat overlaps in two cases:

Case (1): if two APs overlap and at least one of them contains at most two
elements, then by our splitting, it is a singleton. By definition, overlaps between singletons and any other AP do not violate the synchronized definition.

Case (2): assume that two overlapping APs $(\ell_i, r_i, x_i)$, $(\ell_j, r_j, x_j)$ each have length at least three. 
By \Cref{lem:ap-representation}, both APs exhaustively cover all elements in the range $[\ell_i, r_i], [\ell_j, r_j]$ of their respective $A_i, A_j$.
Thus, we apply \Cref{lem:long-ap-overlap}, proving that $x_i = x_j$, and do nothing further.
In particular, \Cref{lem:long-ap-overlap} also shows that if any two APs of length at least three appear in the same dyadic interval, they share the same common difference, proving that there can be at most $\cO(\log m)$ unique common differences.
\end{proof}

The symmetric result holds for the APs from \Cref{cor:ap-representation-sym}.

\subsection{How to Match with the Middle Block}  
Now that we have shown how to represent the parts of the pattern matched by the first and last block and how to access that information efficiently, we use this tool for solving \bSMBG for $b=3$ faster than $\cO(m\setsize{E} + N)$ time.
Our next lemma shows that we can efficiently do the following:
Given AP positions for the prefixes of the pattern $P$ matched by the in-neighbors of a middle-block node, check which of these are followed by the label of the middle node.
In the standard \Cref{alg:sota-overview}, this step requires, for all $v^2 \in V_2$, explicitly computing all occurrences of $\lambda(v^2)$ in $P$ and then comparing this to the prefixes matched by nodes in $V_1$.
We cannot afford this as the string matching between all the labels in $V_2$ and $P$ requires $\Omega(m\setsize{V_2})$ time.
Fortunately, we can use the periodicity information about $P$ encoded in the APs representing the prefixes matched by $V_1$ to perform this important step in the algorithm much more efficiently.
The core fact underlying the following lemma is this.
\begin{observation}
\label{obs:ap-period}
    If $(\ell,r,x)$ is an AP of prefixes of $P$ that are suffixes of some string $S$ and if the AP has at least 2 elements, then $x$ is a period of $P[1\dd r]$. We define the \emph{extended periodic region} of $(\ell,r,x)$ as the longest prefix $P[1\dd r']$ of $P$, with $r'\geq r$,
    that has period $x$.
\end{observation}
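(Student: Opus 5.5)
The statement to prove is \Cref{obs:ap-period}: if $(\ell,r,x)$ is an AP of prefix lengths of $P$ that are suffixes of a string $S$, and the AP has at least two elements, then $x$ is a period of $P[1\dd r]$. The second sentence only defines the extended periodic region. I would argue directly from the border/period duality recalled in the preliminaries.

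First, take the two largest elements of the AP, namely $r-x$ and $r$. Both belong to the AP, since it has at least two elements and $r$ is its last element; in particular $r-x\ge \ell\ge 1$. By assumption, $P[1\dd r]$ and $P[1\dd r-x]$ are both suffixes of $S$. The shorter one is therefore a suffix of the longer one, so $P[1\dd r-x]$ is a suffix of $P[1\dd r]$.

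It is also trivially a prefix of $P[1\dd r]$. Since $r-x<r$, it is a border of $P[1\dd r]$. A border of length $r-x$ corresponds to the period $r-(r-x)=x$: for every $i\in[1,r-x]$, the prefix/suffix equality gives $P[i]=P[i+x]$. This is exactly the definition of $x$ being a period of $P[1\dd r]$. Here we use the general border-to-period correspondence, not only the fact about the longest border, but it follows from the same one-line index argument.

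For the definition part, note that $x$ is a period of $P[1\dd r]$. Periods of prefixes are monotone: if $x$ is a period of $P[1\dd t]$, then it is a period of every shorter prefix. Hence the set of $r'\ge r$ for which $P[1\dd r']$ has period $x$ is a nonempty interval $[r,r^*]$, and the longest such prefix is well defined. There is no real obstacle here. The only point to be careful about is that the AP may be listed with $r$ not itself of the form $\ell+kx$ under the tuple convention. If so, I would replace $r$ by the actual largest element before applying the argument; I would note this explicitly, since the AP produced by \Cref{lem:ap-representation} is stored with $r$ equal to its last element.
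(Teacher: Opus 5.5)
Your proof is correct and follows the reasoning the paper relies on. The paper states this as an observation without proof, and uses the same step in the proof of \Cref{lem:long-ap-overlap}: the two largest elements $r-x$ and $r$ give a border of $P[1\dd r]$ of length $r-x$, hence period $x$. You are also right that $r$ must be the AP's actual last element, since otherwise the claim can fail; this matches the paper's convention (``Assume $r$ is the AP's largest element'' in \Cref{lem:ap-then-match}).
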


The symmetric version holds for an AP of suffixes of $P$ that are prefixes of some $S$.

\begin{lemma}
  \label{lem:ap-then-match}
  We can preprocess a string $P \in \Sigma^m$ and a block graph $G=(V, E, \lambda)$ in time $\Oish(m+N_G)$ into a data structure  that can answer the following queries in $\cO(1)$ time:
  
  Given an AP $(\ell, r, x)$, representing the prefixes of $P$ that are a suffix of a label $\lambda(v^1)$, and some (other) label $\lambda(v^2)$, output all occurrences $i$ of $\lambda(v^2)$ in $P$ such that $i-1 \in (\ell, r, x)$.
  These occurrences are output as $\cO(1)$ APs, each of which is a sub-AP with common difference $x$ of the input AP $(\ell, r, x)$.
  Also, any occurrence of $\lambda(v^2)$ that would end after the extended periodic region is output as a singleton AP.
\end{lemma}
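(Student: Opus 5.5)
We treat the two AP shapes separately. If $(\ell,r,x)$ has at most two elements, we test each element $i-1$ directly. A single test asks whether $\lambda(v^2)$ occurs in $P$ at position $i$, and it is answered in $\cO(1)$ time as follows.

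During preprocessing we build the suffix tree of $P\#\Lambda$, where $\Lambda$ is the concatenation of all node labels; its total length is $\cO(m+N)$. We equip it with LCP queries as in \Cref{lem:lcp}, extended to the concatenated string. For each label $\lambda(v)$ we record one position where it starts in the concatenation.

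A candidate occurrence at position $i$ is then verified by checking that the LCP of $P[i\dd m]$ and $\lambda(v^2)$ (read from its copy in $\Lambda$) has length at least $|\lambda(v^2)|$. This yields $\cO(1)$ singleton APs.

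The main case is an AP with at least three elements, so $x$ is a period of $P[1\dd r]$ by \Cref{obs:ap-period}. In preprocessing we precompute, for every prefix-period pair that can arise, the end $r'$ of the extended periodic region: $r' = x + \mathrm{LCP}(P[1\dd m], P[x+1\dd m])$. One LCP query gives this in $\cO(1)$ time at query time, so no table is needed.

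Let $L=|\lambda(v^2)|$ and consider a candidate start $i=p+1$ with $p\in(\ell,r,x)$. We split according to whether the occurrence $[p+1,p+L]$ ends inside the region, that is, whether $p+L\le r'$.

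\emph{Occurrences inside the region.} Since $P[1\dd r']$ has period $x$ and all candidate $p$ are congruent modulo $x$, the substrings $P[p+1\dd p+L]$ for such $p$ are all identical. They equal $P[\ell+1\dd \ell+L]$, provided $\ell+L\le r'$. A single LCP test at $p=\ell$ therefore decides every such candidate at once. If the test succeeds, we output the sub-AP $(\ell,\min(r, r'-L'),x)$ with $L'=L$, rounded down to the progression; otherwise nothing is output from this part.

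\emph{Occurrences crossing $r'$.} These are the candidates with $p+L>r'$, and we claim there is at most one of them. Suppose $p<q$ are two such candidates, both in the AP, so $q-p$ is a multiple of $x$. An occurrence at $p+1$ with $p+L>r'$ means $\lambda(v^2)$ covers position $r'+1$.

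First assume the overlap with the periodic part has length at least $x$, i.e.\ $r'-p\ge x$. Then $\lambda(v^2)$ has period $x$ on its prefix of length $r'-p$. At position $r'-p+1$ it must read $P[r'+1]$, which differs from $P[r'+1-x]$ by maximality of $r'$. Aligning the same label at $q$ instead of $p$ shifts this break point by $q-p\equiv 0 \pmod x$. The label would then have to agree with the periodic continuation at the point where it must break, which is a contradiction.

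This leaves only the candidate(s) closest to $r'$, namely those with $r'-p<x$. Since consecutive candidates differ by $x$, there is at most one $p$ in $(r'-x, r']$ with $p+L>r'$. A simpler way to organize this: only the largest $p\le r$ whose occurrence crosses $r'$ must be tested, and it is tested explicitly with one LCP query and output as a singleton AP.

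The delicate point, which I expect to be the main obstacle, is making the uniqueness claim airtight, in the standard ``mismatch at $r'+1$'' style. The argument is that two crossing occurrences at $p<q$, both starting in the periodic region and both at least $x$ long inside it, would force $\lambda(v^2)$ to have its period-$x$ break at two offsets $r'-p$ and $r'-q$. Because the label itself has period $x$ up to its first break, it has a unique first break, so these offsets cannot differ. The short-overlap case, $r'-q<x$, is then handled by the single explicit test.

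Altogether we output at most two APs, one sub-AP with common difference $x$ and one singleton, using $\cO(1)$ LCP queries and arithmetic. The preprocessing is dominated by the suffix tree, $\Oish(m+N_G)$.
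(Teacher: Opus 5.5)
Your preprocessing (LCP queries over $P\#\Lambda$), the formula $r'=x+\mathrm{LCP}(P[1\dd m],P[x+1\dd m])$, and your handling of occurrences that end inside the periodic region are all fine. Your first-break argument also correctly shows that at most one crossing occurrence has $r'-p\ge x$. The gap is in \emph{locating} the crossing occurrence. Knowing it is unique does not tell you which AP element to test, and the rule you settle on is false: you test only the crossing candidate closest to $r'$, i.e., the largest $p\le r$, or the one with $r'-p<x$. Take $P=\texttt{ababababcdddd}$ and $\lambda(v^1)=\texttt{ababab}$, so the AP is $(2,6,2)$ and $r'=8$, and let $\lambda(v^2)=\texttt{ababc}$. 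Its only occurrence is $P[5\dd 9]$, i.e., $p=4$, which crosses $r'$ with $r'-p=4\ge x$. Your inside test at $p=\ell=2$ fails and your single crossing test at $p=6$ fails. No candidate has $r'-p<2$, so the occurrence is missed. Falling back to testing every crossing candidate is not $\cO(1)$ per query.

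The missing idea is to read the position of the break off the label itself. This is what the paper does when it counts the repetitions of the period string at the start of $\lambda(v^2)$ to deduce the alignment. Compute $e=\mathrm{LCP}(\lambda(v^2),P[\ell+1\dd m])$ with one query; your $P\#\Lambda$ structure supports this. Suppose a crossing occurrence sits at some $p>\ell$. Then $\lambda(v^2)[1\dd r'-p]=P[p+1\dd r']=P[\ell+1\dd \ell+r'-p]$ by periodicity. At the next position, $\lambda(v^2)[r'-p+1]=P[r'+1]\ne P[r'+1-x]=P[\ell+r'-p+1]$. Hence $e=r'-p$ exactly. So if $e<|\lambda(v^2)|$, the only possible crossing occurrence is at $p=r'-e$: check that $r'-e\in(\ell,r,x)$ and verify it with one more LCP query. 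If $e\ge|\lambda(v^2)|$, the label occurs at $\ell+1$ and no crossing occurrence at any $p>\ell$ is possible. This single observation also gives uniqueness for your long- and short-overlap cases simultaneously.
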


\begin{proof}
As preprocessing, we construct the LCP data structure from \Cref{lem:lcp} on $P$. In addition, for each node label $\lambda(v)$ we use \Cref{lem:suffix} to compute one occurrence of $\lambda(v)$ in $P$, if such an occurrence exists. The resulting position is stored for $\cO(1)$-time access. 
This information enables us to verify (other) occurrences of $\lambda(v)$ in $P$ in $\cO(1)$ time
by asking LCP queries on $P$.
The total preprocessing time is thus $\Oish(m+N_G)$.

\begin{figure}[tbhp]
    \centering
    \includegraphics[page=7]{block-graphs-sketches.pdf}
    \caption{Illustration of \Cref{lem:ap-then-match}. Orange lines depict an AP of prefixes of $P$.
    As purple, further repetitions of the \emph{period string}.
    Below, all possible occurrences of a string $\lambda(v^2)$ directly \emph{after} an AP element. These fall into the two categories used in the proof: entirely within the periodic region (green), and partially or entirely after (blue). For green, we only need to check a single starting position (indicated by an arrow) to deduce all occurrences. For blue, we only need to check one of the starting positions in the indicated range.}
    \label{fig:ap-then-match}
\end{figure}

Now consider a query consisting of an arithmetic progression $(\ell,r,x)$ and a label $\lambda(v^2)$.
Assume $r$ is the AP's largest element and let $\alpha \coloneqq \lfloor(r-\ell)/x\rfloor +1$ denote the number of its elements. 
Let $\alpha' \ge \alpha$ be the maximum integer such that we have $\alpha'$ full repetitions in the extended periodic region (which is defined as long as $\alpha\geq 1$), that is, the fragments $P[(\ell+t x +1) \dd (\ell + (t+1)x)]$ are equal
for every $t=0,1,\ldots,\alpha'-1$, 
but $P[(\ell+\alpha' x +1) \dd (\ell + (\alpha'+1)x)]$
is either different or undefined. If $\alpha\le 2$, then these positions can be checked directly using LCP queries (\Cref{lem:lcp}). Henceforth, assume that $\alpha\ge 3$. Using one LCP query, we can check how far this periodic region extends to the right (computing $\alpha'$).
We distinguish possible occurrences of $\lambda(v^2)$ into two categories; see \Cref{fig:ap-then-match}.

First (green in the figure), consider possible occurrences of $\lambda(v^2)$ that start at or before $(r + 1)$ and are fully contained within the periodic region. We check whether $\lambda(v^2)$ occurs at position $(\ell + 1)$. If so, then by periodicity, it also occurs starting at every AP position $\ell + t x + 1$ as long as the occurrence is fully contained in the periodic region $P[\ell+1 \dd \ell+\alpha' x]$.
These occurrences therefore form a single AP.

Second (blue in the figure), consider possible occurrences of $\lambda(v^2)$ that start at or before $(r + 1)$ and end after the periodic region.
If such an occurrence succeeds the $i$-th-before-last AP position \ie{$(r+1)$ succeeds $i=0$ and so on}, then $\lambda(v^2)$ must start with $\min(i+\alpha'-\alpha+1, \lfloor \setsize{\lambda(v^2)}/x \rfloor)$ repetitions of the period string.
By computing the maximum number of period-string repetitions at the start of $\lambda(v^2)$ using an LCP query, we can deduce $i$ from $\alpha$, $\alpha'$, and $\setsize{\lambda(v^2)}$.
Knowing $i$, we verify the occurrence using a single additional LCP query.
Since there are exactly $\alpha'$ repetitions of the period string, 
there is at most one such alignment $i$.
Thus, we add either nothing or one singleton AP to the output.

Each case contributes at most one AP and all checks take $\cO(1)$ time after preprocessing.
\end{proof}

\subsection{Putting Everything Together}
Now that we have shown how to combine multiple sets of APs into a single well-behaved set, we define the intersection operation.
We provide an intuition right after the statement.

\begin{lemma}
  \label{lem:ap-two-set-intersection}
  Let $P \in \Sigma^m$, and let $S_1, \dots, S_k \in \Sigma^*$ with corresponding AP sets $A^1_i$ that come from applying \Cref{lem:ap-then-match} to \Cref{lem:ap-representation} for some $v^2 \in V_2$.
  Also, let $S'_1, \dots, S'_{k'} \in \Sigma^*$ with corresponding AP sets $A^3_j$ from \Cref{cor:ap-representation-sym}.
  Given all $A^1_i$, all $A^3_j$ and $v^2$ as input, we can compute in $\cO((k+k') \log^{2} m)$ time if there is a value $z$ such that $z$ is in some AP in some $A^1_i$ and $z+\setsize{\lambda(v^2)}+1$ is in some AP in some $A^3_j$. 
\end{lemma}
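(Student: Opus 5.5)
First I will normalise both families so that they can be merged cheaply. Each $A^1_i$ is the output of \Cref{lem:ap-then-match} applied to the $\cO(\log m)$ overlap-free APs of \Cref{lem:ap-representation} for $S_i$. Every output AP is therefore either a singleton or a sub-AP, with the same common difference, of an AP from \Cref{lem:ap-representation}. I will shift the $A^1$ side by $\setsize{\lambda(v^2)}+1$, i.e.\ map $z\mapsto z+\setsize{\lambda(v^2)}+1$, so that the question becomes: does some element of $\bigcup_i A^1_i$ (shifted) lie in $\bigcup_j A^3_j$. I then run the splitting procedure from the proof of \Cref{lem:ap-overlap-fix} on the $A^1$ side and, symmetrically, on the $A^3$ side. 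This cuts each AP at dyadic boundaries and turns two-element APs into singletons. The argument of \Cref{lem:long-ap-overlap} still applies to the non-singleton APs on the $A^1$ side, because they are contiguous sub-APs of exhaustive APs inside one dyadic interval. Consequently, within one dyadic interval all non-singletons of the $A^1$ side share a single common difference. The same holds for the $A^3$ side, using the suffix version with dyadic intervals measured from the right end of $P$. The result is two synchronized families $\mathcal{B}^1$ and $\mathcal{B}^3$ of sizes $\cO(k\log m)$ and $\cO(k'\log m)$, each using only $\cO(\log m)$ distinct common differences.

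Next I will make each family overlap-free and nearly disjoint. Within one dyadic interval, all non-singleton APs of $\mathcal{B}^1$ have the same difference $x$. I sort them by $(\ell \bmod x,\ell)$ using bucket sort or radix sort over indices in $[m]$. APs in the same residue class merge into a union of disjoint APs by a linear sweep. APs in different residue classes are disjoint as sets. Singletons are deduplicated by sorting. This costs $\cO((k+k')\log m)$ up to sorting, and since all values are at most $m$, radix sort keeps it within the budget.

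The main step is the cross intersection: checking whether some shifted $\mathcal{B}^1$-AP meets some $\mathcal{B}^3$-AP. I handle it by cases.

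\emph{Singletons against anything.} A singleton $z$ from one side is looked up in the other side. For each of the $\cO(\log m)$ differences $x$, I store the non-singleton APs of that difference in a table indexed by residue mod $x$, sorted by starting position, and answer the query by binary search. This gives $\cO(\log m)$ differences times $\cO(\log m)$ search time, that is $\cO(\log^2 m)$ per singleton. Singleton against singleton is a merge of two sorted lists.

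\emph{Non-singleton against non-singleton.} Consider a dyadic interval $I$ on the prefix side with common difference $x$, and the $\cO(\log m)$ suffix-side dyadic intervals $J$ with difference $y$ that the shifted $I$ can overlap. For each such pair I sweep the APs of both sides in order of position. Within each side the APs are pairwise disjoint intervals modulo residue, so the number of overlapping pairs between the two sides is linear in their sizes if I sweep per residue class when $x=y$. When $x\ne y$, each overlapping pair is tested with \Cref{lem:ap-intersection} in $\cO(1)$ time. To bound the number of such pairs, I expect to use the periodicity argument again: if both APs have at least three elements and overlap over a span of length at least $x+y$, the periodicity lemma forces a common period $\gcd(x,y)$ of the relevant fragment of $P$. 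Exhaustiveness then gives a contradiction, as in \Cref{lem:long-ap-overlap}, unless $x=y$. Hence, for $x\ne y$, only short overlaps (fewer than about $x+y$ positions) are possible. Each AP can then take part in only $\cO(1)$ such overlaps per pair $(I,J)$, which yields $\cO((k+k')\log^2 m)$ pairs in total.

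Bounding the number of overlapping pairs with different common differences is the step I expect to be the real obstacle. If the periodicity argument does not go through cleanly, my fallback is to charge each overlapping pair to an endpoint of one of its APs via the sweep, so that every AP endpoint is charged $\cO(\log m)$ times, once per dyadic interval on the other side.
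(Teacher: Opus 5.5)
\textbf{The gap.} It sits in the step you flag yourself: a non-singleton AP $\pi=(\ell_\pi,r_\pi,x)$ from the prefix side, shifted by $\setsize{\lambda(v^2)}+1$, against a non-singleton AP $\rho=(\ell_\rho,r_\rho,y)$ from the suffix side, with $y\neq x$.

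Your short-overlap fact is true. It does need the clause of \Cref{lem:ap-then-match} that non-singleton outputs keep the whole occurrence of $\lambda(v^2)$ inside the extended periodic region; otherwise the shifted $\pi$ does not lie on an $x$-periodic fragment of $P$.

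However, the fact does not imply that an AP takes part in only $\cO(1)$ overlaps per pair $(I,J)$, and that claim is false. One dyadic interval of the other family may contain many non-singleton APs with the same difference $y$ and pairwise distinct residues modulo $y$, one per third-block label. All of them can start inside the last $y$ positions of a single $\pi$. For example, take
\begin{itemize}
\item $P=\texttt{a}^K(\texttt{a}^{y-1}\texttt{b})^t$, with $K,t$ chosen away from dyadic boundaries;
\item first-block label $\texttt{a}^{K+y}$ and $\lambda(v^2)=\texttt{a}$;
\item third-block labels $P[c\dd m]$ for $c=K+1,\dots,K+y$.
\end{itemize}
The shifted difference-$1$ AP ends at $K+y$ and overlaps all $y$ suffix-side APs, and each overlap is shorter than $x+y$.

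It gets worse. Around a place where an $x$-periodic and a $y$-periodic region of $P$ share $x+y-2$ positions (e.g.\ $x=k+1$, $y=k$, with $\lambda(v^2)$ a single letter), one can arrange $\Theta(k\cdot k')$ pairs that overlap but do not intersect. So testing overlapping pairs one by one with \Cref{lem:ap-intersection} breaks the $\cO((k+k')\log^2 m)$ bound. Your fallback fails for the same reason: one endpoint can lie inside the intervals of arbitrarily many APs of the other family, so it is not charged only $\cO(\log m)$ times.

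\textbf{The missing idea.} Such pairs never need to be enumerated. If $x\neq y$, any common element $z$ of (shifted) $\pi$ and $\rho$ must be the last element of $\pi$ or the first element of $\rho$. Otherwise $\pi$ ends at or after $z+x$ and $\rho$ starts at or before $z-y$. Then the $x$-periodic region $P[1\dd r_\pi+\setsize{\lambda(v^2)}]$ and the $y$-periodic region $P[\ell_\rho\dd m]$ share at least $x+y$ positions, and your periodicity-plus-exhaustiveness argument forces $x=y$. Hence it suffices to look up these endpoints in the other family.

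This is exactly the paper's device. It extracts as singletons the last element of every AP of $\cA^1$ and the first element of every AP of $\hat\cA^3$. It then applies the periodicity lemma to the overlap of the extended periodic regions, which has length at least $x_1+x_3$, and uses that the suffix set is closed under adding multiples of $\gcd(x_1,x_3)$. This shows that any intersection of two remaining non-singletons, whether or not $x_1=x_3$, forces the extracted element $r'_1$ into $\hat\cA^3$. After that, only two kinds of checks remain:
\begin{itemize}
\item singleton--singleton checks, done by sorting;
\item singleton--non-singleton checks, done by a sweep with a search tree keyed by residue modulo the unique active common difference.
\end{itemize}
Your per-residue sweep for $x=y$ is then unnecessary as well.

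\textbf{A further accounting issue.} Your singleton lookup scans all $\cO(\log m)$ differences, which gives $\cO((k+k')\log^3 m)$ overall. Since APs were cut at dyadic boundaries, only the common difference of the dyadic interval containing the query point is relevant. Using just that difference restores $\cO(\log m)$ per lookup.
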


One should think about the strings $S_1, \dots, S_k$ as node labels from the first block of $G$, and the strings $S_1', \dots, S'_{k'}$ as node labels from the last block of $G$.
This middle-block node is connected to the respective nodes from the first and last block.
We may also assume that the label of the middle node appears in $P$ directly after every position listed in all $A^1_i$ by \Cref{lem:ap-then-match}.
We now want to determine the positions in $P$ where this is completed into a full match of $P$ by the connected first and last block nodes.

\begin{proof}
We may assume \Cref{lem:ap-overlap-fix} had been applied to $\{A^1_i\}_{i \in [k]}$ and the symmetric version of the lemma to $\{A^3_j\}_{j \in [k']}$.
Let
\(
\cA^1 \coloneqq \bigcup_{i \in [k]} A^1_i\) and \(
\cA^3 \coloneqq \bigcup_{j \in [k']} A^3_j.
\)
Next, shift all APs in $\cA^3$ by $(-\setsize{\lambda(v^2)}-1)$; that is, decrease both the lower and upper bounds of each AP by $\setsize{\lambda(v^2)}+1$. To distinguish between the shifted and unshifted version, we add a $\hat{}$ symbol on the shifted version. 
This preserves the common difference
and the number of APs, and can be done in $\cO(|\cA^3|) = \cO(k' \log m)$
time.
After this transformation, the desired output is
\(
\bigcup \cA^1 \;\cap\; \bigcup \hat{\cA}^3.
\)
To determine if this intersection is empty efficiently, proceed as follows.

We extract as singletons the last element of every AP in $\cA^1$ and the first element of each AP in $\hat \cA^3$. If, after this, any AP has fewer than three elements, 
split it into singleton APs.
This increases the number of APs by a constant factor.
Sort the singleton APs by value.
Since $|\cA^1| + |\hat{\cA}^3| = \cO((k+k') \log m)$, sorting
takes $\cO((k+k') \log^2 m)$ time.
In this sorted list, check if there are two singleton APs that intersect.

Afterwards, we do a second pass in which we check whether a singleton AP of $\cA^1$ intersects a non-singleton AP of $\hat \cA^3$. We then repeat the process with switched roles of $\cA^1$ and $\hat{\cA}^3$.
We do a sorted scan of the APs and maintain a balanced search tree (BST)~\cite{DBLP:books/daglib/0046305} with the currently active non-singleton APs $(\hat \ell_3, \hat r_3,x_3)$ \ie{those with active interval $[\hat \ell_3, \hat r_3]$}, indexed by $\hat{\ell}_3 \bmod x_3$. 
If there are collisions in the BST, we only need to keep the AP that extends the furthest to the right.
Since all non-singleton APs are synchronized by \Cref{lem:ap-overlap-fix}, all APs active at the same time have the same common difference $x_3$. 
When scanning a singleton AP, consisting of a single element $\ell_1$, we query the BST with the value for $\ell_1 \bmod x_3$.
Observe that $\ell_1$ intersects an active AP $(\hat \ell_3,\hat r_3, x_3)$ if and only if $\ell_1 \equiv \hat \ell_3 \pmod{x_3}$, thus we find an intersection if one exists.
Since all keys are in $\cO(m)$, BST operations take $\cO(\log m)$ time, the sorting
takes $\cO((k+k') \log^2 m)$ time, and so the total time is $\cO((k+k')\log^2 m)$.

We are left to argue that if there is an intersection between two non-singleton APs, then there is an intersection involving a singleton AP.
For this assume that $a^1 = (\ell_1, r_1, x_1)$ from $\cA^1$ and $\hat a^3 = (\hat \ell_3, \hat r_3, x_3)$ from $\hat \cA^3$ are non-singleton APs that intersect.
If $a^1$ and $\hat a^3$ intersect, they must overlap.
Also, since we extracted the last element of $a^1$ and first element of $\hat a^3$, the original APs overlapped by at least $x_1 + x_3 + 1$.
Let $r'_1$ be the last element of $a^1$ before the extraction.
We know the extended periodic region of $a^1$ extends at least $\setsize{\lambda(v^2)}$ to the right after $r'_1$ as otherwise $r'_1$ would have been output as a singleton by \Cref{lem:ap-then-match}.
Thus, the extended periodic region of $a^1$ overlaps with $\hat a^3$ by at least $x_1+x_3+\setsize{\lambda(v^2)}+1$.
After undoing the shift, the extended periodic regions overlap by at least $x_1 + x_3$.
To apply the periodicity lemma, we now also need that $x_1$ and $x_3$ are periods of this overlap.
This follows from \Cref{obs:ap-period}, which we can apply since all APs we have are cut up APs of those originally produced using \Cref{lem:ap-representation}.
In conclusion, $\gcd(x_1,x_3)$ is a period of the overlap of the extended periodic regions and thus of $P$.

Now, let $y$ be an element of the intersection of $a^1$ and $\hat a^3$, then any $y+z\cdot \gcd(x_1, x_3)$ must be in $\hat{\cA}^3$.
This holds since if $P[s\dd m]$ is a suffix of $P$ that is a prefix of some $S'_j$, the same is true for $P[s + z \cdot p\dd m]$, where $p$ is any period of $P[s\dd m]$, so in particular for $s = y+\setsize{\lambda(v^2)}+1$ and $p = \gcd(x_1,x_3)$.
Clearly, $r'_1$ can be written as $\ell_1 + z \cdot x_1$ and thus as $y+z'\cdot \gcd(x_1, x_3)$. Therefore, a singleton from $\cA^1$ intersects an AP from $\hat \cA^3$.
\end{proof}

Together, \Cref{lem:ap-two-set-intersection,lem:ap-then-match} enable the core matching step of our approach.
Our improved algorithm (stated as \Cref{alg:ap-overview}) retains the overall structure of \Cref{alg:sota-overview} for $b=3$, but leverages the data structures and combinatorial insights developed in this section to carry out these core steps significantly more efficiently.

\begin{algorithm}[ht]
    \textbf{Input:} Block graph $G=(V_1 \sqcup V_2 \sqcup V_3, E, \lambda)$, pattern $P$ of length $m$
    \begin{enumerate}
        \item Preprocess $P$ and $G$ using \Cref{lem:ap-then-match}. Preprocess $P$ using \Cref{lem:ap-representation,cor:ap-representation-sym}.
        \item For each $v^1 \in V_1$, compute $A_{v^1}$ using \Cref{lem:ap-representation} on $P$ and $\lambda(v^1)$.
        \item For each $v^3 \in V_3$, compute $A_{v^3}$ using \Cref{cor:ap-representation-sym} on $P$ and $\lambda(v^3)$.
        \item For each $v^2 \in V_2$:\label{step:for-loop}
        \begin{enumerate}[noitemsep,label=(\alph*)]
            \item Let $\cA^1 \coloneqq  \bigcup_{v^1 \in \indegree{v^2}} A_{v^1}$ represent the set of prefixes matched by in-neighbors of $v^2$.\label{step:left-match}
            \item Let $\cA^3 \coloneqq  \bigcup_{v^3 \in \outdegree{v^2}} A_{v^3}$ represent the set of suffixes matched by out-neighbors of $v^2$.
            \item For each AP from $\cA^1$, use the query from \Cref{lem:ap-then-match} to restrict to the prefixes of $P$ that are directly followed by an occurrence of $\lambda(v^2)$.\label{step:ap-the-match}
            \item Apply \Cref{lem:ap-two-set-intersection} to (the restricted) $\cA^1$ and to $\cA^3$ to compute if there is a position in $P$ where an occurrence of $\lambda(v^2)$ can be extended to a full match of $P$ in $G$.
        \end{enumerate}
    \end{enumerate}
    \caption{Our improved algorithm for \bSMBG and $b=3$.}
    \label{alg:ap-overview}
\end{algorithm}

\thmThreeBlocks

\begin{proof}
  For \bSMBG, we analyze \Cref{alg:ap-overview}.
  The implication for \hSMLG for $h=3$ follows from the reduction outlined in the introduction.

  First, observe that in Step~\ref{step:for-loop}\ref{step:ap-the-match}, every AP queried using \Cref{lem:ap-then-match} is a subset of an AP computed in Step~\ref{step:for-loop}\ref{step:left-match}.  Hence, all elements of these APs are prefixes of $P$ that are suffixes of some label from the first block, satisfying the precondition of \Cref{lem:ap-then-match}.

    The running time follows by simply adding the running times from \Cref{lem:ap-representation,cor:ap-representation-sym,lem:ap-overlap-fix,lem:ap-two-set-intersection,lem:ap-then-match}.
  For correctness, observe that \Cref{alg:ap-overview} computes exactly the same sets of indices as the \Cref{alg:sota-overview} does, just that here we represent them as APs.
\end{proof}

\subparagraph{Reporting All Occurrences.} We extend \Cref{alg:ap-overview} to report all occurrences of $P$ in $G$ in the form $(v_1, v_2, v_3, i_1, i_3)$, where $P = \lambda(v_1)[i_1\dd |\lambda(v_1)|] \cdot \lambda(v_2) \cdot \lambda(v_3)[1\dd i_3]$, making it more useful for the applications outlined in the introduction (as opposed to a purely decision algorithm).
In the proof of \Cref{lem:ap-two-set-intersection}, we already find all occurrences that are the result from the intersection of two singleton APs.
To find intersections between singletons and non-singleton APs, we modify the algorithm to no longer discard collisions in the BST.
This incurs a $+\mathrm{output}$ overhead (think of a graph with many nodes with identical labels in the first and last block).
To also get all occurrences for two non-singleton APs $a^1,\hat a^3$, recall that we proved that there must be some corresponding singleton $r_1'$ from $\cA^1$ that intersects an AP from $\hat \cA^3$ (this, we find in the other cases).
From our correctness argument, it follows that if, for each match involving at least one singleton, we also compute the intersection of their corresponding original, unsplit APs, we are guaranteed to find all occurrences. We use \Cref{lem:ap-intersection} to intersect non-singleton APs in $\cO(1)$ time. Adding the required bookkeeping to make this association with the original versions of the APs is a bit tedious, but not too difficult.
In particular, we will take care to find the $\cA^3$ AP as originally computed using \Cref{cor:ap-representation-sym} (we then repeat the shift) and the $\cO(1)$ APs that result from applying \Cref{lem:ap-then-match} to the original $\cA^1$.

\subparagraph{Why not $b\geq 4$?} A natural question arising is whether the same scheme can be extended to more than three blocks.
Since \Cref{alg:sota-overview} works for an arbitrary number of blocks and \Cref{alg:ap-overview} can be seen as an AP-based implementation of the same general scheme, we can also derive a correct AP-based algorithm for an arbitrary number of blocks.
However, we only get the additional structure that bounds the size of the AP-based representation for the part of the pattern matched by the first and last block.
We make the idea work for three blocks by matching the middle block based on the relatively few positions we need to check after computing where the first and last block align to even allow a match in the middle block.
For $b = 4$, this idea does not work anymore.

This intuition is made formal by the lower bounds we show in \Cref{sec:lower-bounds}, where we show that for four blocks it is conditionally impossible to have a near-linear time algorithm (see \Cref{thm:td-lb}).

 \subparagraph{Solving \SMBG for $b=3$.}
 To solve \SMBG, the more general problem where matches are not required to span all $b=3$ blocks, we do the following.
 Let $G=(V=V_1 \sqcup V_2 \sqcup V_3, E, \lambda)$ and $P$ be the input to \SMBG.
We apply \Cref{lem:KMP} using $P$ and all the node labels $\lambda(v)$ for $v\in V_1 \sqcup V_2 \sqcup V_3$, such that $|\lambda(v)|\geq m $. The total time is $\cO(m+N)$.
We then apply \Cref{lem:b2} two times: on the graph induced by $V_1\sqcup V_2$ from $G$; and on the graph induced by $V_2\sqcup V_3$ from $G$. 
The total time is $\cO(m+\setsize{E}+N)$.
Finally, we apply \Cref{thm:three-blocks} on $G$.
The time is $\Oish(m + \setsize{E} + N)$.
We have arrived at the following result.

\begin{corollary}\label{coro:three-blocks}
The \SMBG problem for $b=3$ can be solved in $\Oish(m + \setsize{E} + N)$ time. 
\end{corollary}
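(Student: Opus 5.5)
The corollary reduces to a case split on which blocks the occurrence touches, with each case handled by an earlier result. In \SMBG on a $3$-block graph, an occurrence $v_1,\ldots,v_h$ is a walk in $G$. Every edge goes from $V_i$ to $V_{i+1}$, so such a walk has $h\in\{1,2,3\}$ and visits consecutive blocks. There are therefore exactly four types of occurrences:
\begin{enumerate}
\item $h=1$: $P$ lies inside a single label $\lambda(v)$;
\item $h=2$: a walk $V_1\to V_2$;
\item $h=2$: a walk $V_2\to V_3$;
\item $h=3$: a walk $V_1\to V_2\to V_3$.
\end{enumerate}
I will decide each type separately and output the disjunction of the four answers.

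For type (1), an occurrence requires $|\lambda(v)|\ge m$. I preprocess $P$ once in $\cO(m)$ time and run \Cref{lem:KMP} on every such label. Each run takes $\cO(|\lambda(v)|)$, so this costs $\cO(m+N)$ in total. For types (2) and (3), I take the subgraphs induced by $V_1\sqcup V_2$ and by $V_2\sqcup V_3$. Each is a $2$-block graph, and its edge set is a subset of $E$. An occurrence spanning both blocks of such a subgraph is exactly an occurrence of the corresponding type in $G$. Applying \Cref{lem:b2} to each subgraph therefore costs $\cO(m+\setsize{E}+N)$. For type (4), \Cref{thm:three-blocks} applies to $G$ directly and costs $\Oish(m+\setsize{E}+N)$. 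Summing the four costs gives the claimed bound.

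The only point needing care is that the case split is exhaustive and that each subroutine solves exactly the restricted problem for its type. In particular, in the $h=2$ and $h=3$ cases the definition of a match allows the end pieces $\alpha$ and $\beta$ to be arbitrary suffixes and prefixes of the end labels. This agrees with the occurrence notion used by \Cref{lem:b2} and \Cref{thm:three-blocks}, since both are stated with the same definition of an occurrence. Beyond this check, the corollary is routine bookkeeping.
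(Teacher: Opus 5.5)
Your proposal is correct and follows the paper's proof essentially step for step: \Cref{lem:KMP} on every label of length at least $m$ for single-node occurrences in $\cO(m+N)$ time, two applications of \Cref{lem:b2} to the subgraphs induced by $V_1\sqcup V_2$ and by $V_2\sqcup V_3$, and \Cref{thm:three-blocks} on $G$ for occurrences spanning all three blocks. Your explicit check that these four cases are exhaustive, since a walk in a $3$-block graph has at most three nodes in consecutive blocks, is a small justification that the paper leaves implicit.
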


\section{An Algorithm Based on Matrix Multiplication}\label{sec:mm-algo}
In this section, we give an alternative algorithm for the \bSMBG problem that works for an arbitrary number $b$ of blocks.

When solving the \bSMBG problem, a core challenge arises. For each node $v$ in our block graph $G$, we need to merge the occurrences of $\lambda(v)$ in $P$ with the prefixes of $P$ matched by the predecessors of $v$ in $G$. \Cref{alg:sota-overview} considers all predecessors of $v$ in $G$ individually, introducing the $\cO(\setsize{E})$ factor to the algorithm's running time.
We avoid performing this costly operation individually for each node and compute them for a complete block simultaneously using matrix multiplication.

We use $\omega \in [2, 2.372)$ as the running time exponent of matrix multiplication of square matrices~\cite{DBLP:conf/soda/WilliamsXXZ24}. We will also make use of the notation $\omega(a,b,c)$ to denote the running time exponent of rectangular matrix multiplication of an $n^a \times n^b$ by an $n^b \times n^c$ matrix, and we use~\cite{Complexity} to calculate the concrete value of this function.
We first state the algorithmic result which we prove in this section.

\begin{theorem}
  \label{thm:mm-generality}
  The \bSMBG problem can be solved in  $\cO\left(\left(\sum_{i \in [b-1]} n^{\omega(x, y_i, y_{i+1})}\right) +N\right)$ time, where $n^x = m$ and $n^{y_i} = \setsize{V_i}$, for each $i \in [b]$.
\end{theorem}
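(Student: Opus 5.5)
We follow the block-by-block scheme of \Cref{alg:sota-overview} and replace the per-edge union $M'_v=\bigcup_{v'\in\indegree{v}}M_{v'}$ by one Boolean matrix product per pair of consecutive blocks. For each block $V_i$ we keep a $\setsize{V_i}\times m$ Boolean matrix $\matchMat{i}$, where $\matchMat{i}[v,p]=1$ iff $p\in M_v$, that is, iff $P[1\dd p]$ is matched by a walk from $V_1$ ending in $v$ (matching a suffix of $\lambda(v)$ when $i=1$, or ending exactly at the end of $\lambda(v)$ when $i>1$).

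\textbf{Initialization and one step.} For $i=1$ we fill $\matchMat{1}$ as in Step~2 of \Cref{alg:sota-overview}, using \Cref{lem:KMP} on $P$ and each $\lambda(v)$, $v\in V_1$. Writing this into a zero-initialized matrix costs $\cO(m\setsize{V_1}+N)$, and $m\setsize{V_1}=n^{x+y_1}$ is dominated by $n^{\omega(x,y_1,y_2)}$, since any matrix product must at least read its input. For the transition from $V_i$ to $V_{i+1}$ we let $\adjMat{i}{i+1}$ be the $\setsize{V_{i+1}}\times\setsize{V_i}$ adjacency matrix of the edges between the blocks. We compute
\[
  \interMat{i+1}=\adjMat{i}{i+1}\cdot\matchMat{i}
\]
over the Boolean semiring, using integer rectangular matrix multiplication followed by thresholding. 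Then $\interMat{i+1}[v,p]=1$ iff $p\in M'_v$. This product has dimensions $n^{y_{i+1}}\times n^{y_i}$ times $n^{y_i}\times n^{x}$, so it costs $n^{\omega(y_{i+1},y_i,x)}$ time. This equals $n^{\omega(x,y_i,y_{i+1})}$ because $\omega(\cdot,\cdot,\cdot)$ is invariant under permutation of its arguments (transpose, and use the standard symmetry of the rectangular exponent).

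\textbf{Combining with the label occurrences.} We must form $\matchMat{i+1}[v,p+\setsize{\lambda(v)}]=\interMat{i+1}[v,p]\wedge[\lambda(v)\text{ occurs in }P\text{ at }p+1]$. This is the step I expect to be the main obstacle. Computing $\Occ(\lambda(v),P)$ with \Cref{lem:KMP} costs $\cO(m+\setsize{\lambda(v)})$ per node, so over the block it costs $\cO(m\setsize{V_{i+1}}+N)$. Again $m\setsize{V_{i+1}}\le n^{\omega(x,y_i,y_{i+1})}$, because the product writes an output of that size. The subtlety is that the $\cO(m)$ preprocessing of \Cref{lem:KMP} is for a fixed pattern, whereas here the roles are swapped. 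I would instead run KMP with pattern $\lambda(v)$ and text $P$ directly, which takes $\cO(\setsize{\lambda(v)}+m)$ time per node. Labels longer than $m$ never occur and are skipped after reading their length. The shift-and-AND is then a linear scan of row $v$, costing $\cO(m)$ per node.

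\textbf{Last block and total cost.} For the last block we compute $\interMat{b}$ in the same way. For each $v\in V_b$ we then check whether some border length $k$ of $\lambda(v)\#P$ with $k\ge1$ satisfies $\interMat{b}[v,m-k]=1$ (Step~4 of \Cref{alg:sota-overview}), which costs $\cO(m+\setsize{\lambda(v)})$ per node. Correctness follows by induction on $i$: the invariant $\matchMat{i}[v,p]=1\iff p\in M_v$ holds exactly as in \Cref{alg:sota-overview}. Summing over the $b-1$ transitions gives the claimed bound $\cO\bigl(\sum_{i\in[b-1]}n^{\omega(x,y_i,y_{i+1})}+N\bigr)$. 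The only remaining care is to absorb all the $\cO(m\setsize{V_i})$ linear work into the matrix-product terms, which works since every $V_i$ appears in at least one adjacent product (this needs $b\ge2$; $b=1$ is plain string matching).
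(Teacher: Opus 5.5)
Your proposal is correct and follows essentially the same route as the paper: one Boolean rectangular product per pair of consecutive blocks, then a shift-and-AND with the occurrences of each label computed by KMP in $\cO(m+\setsize{\lambda(v)})$ time, with the first and last blocks handled by prefix/suffix matching. The transposed $\setsize{V_i}\times m$ orientation is immaterial, since $\omega(a,b,c)=\omega(c,b,a)$ by transposition, and your explicit absorption of the $\cO(m\setsize{V}+N)$ linear work via $n^{\omega(x,y_i,y_{i+1})}\ge n^{x+y_i},\,n^{x+y_{i+1}}$ is slightly more careful than the paper's own accounting.
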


The algorithm is given in \Cref{alg:fmm-overview}. For easier understanding, we give the following looser bound.

\thmMMEasy

\begin{proof}
    Let $n \coloneqq \max(\setsize{V}, m)$. We may assume that $m \geq \setsize{V}$, else we increase $m$.
    For $b \in \cO(1)$ blocks, the statement follows directly. For more blocks, 
    we need to show that $\sum_{j \in [b-1]}n^{\omega(1,y_{j}, y_{j+1})} \in \cO(n^\omega)$ where $n^{y_j} = |V_j|$ for all $j \in [b]$.
    Assuming $y_j \leq y_{j+1}$, note that we can decompose a multiplication of rectangular matrices with size $n \times n^{y_j}$ and $n^{y_j} \times n^{y_{j+1}}$ into $n/n^{y_j} \cdot n^{y_{j+1}}/n^{y_j}$ square matrix multiplications of size $n^{y_j} \times n^{y_j}$. The remaining proof follows by convexity.
    Note that we have for all $j \in [b]$ that $y_j \leq 1$ and $\sum_{j \in [b]} n^{y_j} = \setsize{V}$ ($\star$). For all $j \in [b-1]$, we set $y^*_j = \max(y_j, y_{j+1})$.
    \begin{align*}
        \sum_{j \in [b-1]}n^{\omega(1, y_j, y_{j+1})}
        & <\sum_{j \in [b-1]}n^{\omega(1, y^*_j, y^*_j)} \\
        & \leq \sum_{j \in [b-1]} \frac{n}{n^{y^*_j}} (n^{y^*_j})^\omega\\
        & \leq n\sum_{j \in [b-1]} n^{(\omega-1) y^*_j} \\
        & < 2n \sum_{j \in [b]} n^{(\omega-1)y_j} \tag{convexity, ($\star$)} \\
        & \leq 2n^{\omega}
    \end{align*}
\end{proof}

\Cref{alg:fmm-overview} iteratively computes the matching prefix of the pattern $P$ to the first $i$ blocks of the input graph. Inside the algorithm, we use the matrix $\resultMat{j} \in \{0,1\}^{m \times |V_j|}$ for which $\resultMat{j}[i,v] = 1$ if and only if there exists a path in $V_1 \times \ldots \times V_j$ ending in $v$ that matches the prefix $P[1\dd i]$ of $P$ (as a suffix).
The matrix $\matchMat{j} \in \{0,1\}^{m \times |V_j|}$ at cell $(i,v)$ indicates whether the label $\lambda(v)$  matches $P$ at $P[i\dd i+|\lambda(v)| - 1]$.
The intermediate matrix $\interMat{j} \in \{0,1\}^{m \times |V_j|}$ stores for each prefix $P[1\dd i]$ and each node $v$ whether a predecessor of $v$ already matches $P[1\dd i]$. 
We denote binary matrix multiplication using the notation $(\cdot)$.

\begin{algorithm}
    \textbf{Input:} Block graph $G=(V = V_1 \sqcup \dots \sqcup V_b, E, \lambda)$, pattern $P$ of length $m$
    \begin{enumerate}[noitemsep]

    \item For every node $v$ in $V_1$, compute a binary column vector $\matchVec{v}$ of length $m$ s.t. $\matchVec{v}[i] = 1$ if $P[1\dd i]$ is a suffix of $\lambda(v)$. Assemble these into a $\{0,1\}^{m \times |V_1|}$ matrix $\matchMat{1}$. \label{step:one}
    
    \item Construct the adjacency matrix $\adjMat{1}{2}$ between $V_1$ and $V_2$. Compute $\interMat{2} \coloneqq  \matchMat{1} \cdot \adjMat{1}{2}$. \label{step:two} 

    \item For every node $u$ in $V_2$, compute a binary column vector $\matchVec{u}$ of length $m$ s.t. $\matchVec{u}[i] = 1$ if $i \in \Occ(\lambda(u),P)$. Assemble these into a match matrix $\matchMat{2} \in \{0,1\}^{m \times |V_2|}$. \label{step:middle-check}
    
    \item Compute the matrix $\resultMat{2}$ of prefix matches of $P$ by $V_1$ and $V_2$. Set
    $\resultMat{2}[i, u] \coloneqq \interMat{2}[i', u] \land \matchMat{2}[i' + 1, u]$ for all $i \in [m], u \in V_2$ with $i' = i - |\lambda(u)|$. If $i' \leq 0$, the matrix value is treated as $0$. \label{step:compose}

    \item Repeat Steps \ref{step:two} to \ref{step:compose} for all blocks $j \in [3, b)$, using $\resultMat{j-1}$ in place of $\matchMat{1}$.
    
    \item For the last block $b$, similar to the first, construct $\matchMat{b} \in \{0,1\}^{m \times |V_b|}$ such that $\matchMat{b}[i, v] = 1$ for any $v \in V_b$ if $P[i \dd m]$ is a prefix of $\lambda(v)$.
    Repeat Steps \ref{step:two} and \ref{step:compose}, using this $\matchMat{b}$ instead in Step \ref{step:compose}.
    \label{step:pattern-ending-check}

    \item If $\resultMat{b}$ contains at least one $1$, we have found a matching walk $v_1,\ldots,v_b$.  \label{step:match}
\end{enumerate}
    \caption{Our algorithm for \bSMBG based on matrix multiplication.}
    \label{alg:fmm-overview}
\end{algorithm}

\begin{proof}[Proof of \Cref{thm:mm-generality}]
    Towards correctness, we notice that we compute the same sets of indices as in \Cref{alg:sota-overview}.
    Assume that a match needs to span all $b$ blocks of the graph and all matrices $\matchMat{j-1}, \interMat{j}, \resultMat{j}$ for $j \in [2, b]$ are computed as stated.
    For any $u \in V_{j}, i \in [m]$, the value $\interMat{j}[i, u]$ indicates that $P[1\dd i]$ is a suffix of the combined string of a path ending in a predecessor of $u$.
    The matrix $\resultMat{j}[i, u] = 1$ if and only if both, the prefix $P[1 \dd i - |\lambda(u)|]$ of $P$ is a suffix of a path ending in a predecessor of $u$, as given by $\interMat{j}$, and the label of $u$ lies in the correct place in the pattern, i.e., $P[i - |\lambda(u)| + 1 \dd i] = \lambda(u)$, as given by $\matchMat{j}$.

    Note that the special handling of blocks $V_1, V_b$ is identical to \Cref{alg:sota-overview}, since it suffices that $P$ is a suffix (prefix) of a node in $V_1$ ($V_b$), the computation of the corresponding match matrices is correct.

    For the running time, \Cref{lem:KMP} gives the computation of Steps \ref{step:one} and \ref{step:pattern-ending-check} in time $\cO(m(|V_1| + |V_b|))$. By \Cref{lem:KMP}, Step \ref{step:middle-check} runs in time $\cO(N+m\setsize{V})$ over the course of the algorithm.
    Steps \ref{step:compose} and \ref{step:match} are constant-time look-ups for each cell of the $m \times |V_j|$ matrix.
    Step \ref{step:two} dominates the running time with the rectangular (binary) matrix multiplication of a $m \times |V_{j-1}|$ with a $|V_{j-1}| \times |V_j|$ matrix, giving a running time of $n^{\omega(x, y_{j-1}, y_{j})}$ for $n^x = m$, $n^{y_{j-1}} = |V_{j-1}|$, $n^{y_{j}} = |V_{j}|$. 
\end{proof}

Note that if our block graph is sufficiently sparse and our pattern sufficiently small with $m\setsize{E} \ll \max(\setsize{V},m)^{\omega}$, \Cref{alg:sota-overview} is faster than \Cref{thm:mm-generality}. However, using sparse (rectangular) matrix multiplication instead~\cite{gustavson1978,abboud2024TimeComplexityFully}, we always match or improve upon the $\cO(m\setsize{E})$ running time.

\subparagraph{Reporting All Occurrences.}
\Cref{alg:fmm-overview} decides whether an occurrence exists. To output any one such walk, we trace back which values of $\interMat{j}$ and thus $\resultMat{j-1}$ were non-zero, thus constructing an occurrence.

Furthermore, \Cref{alg:fmm-overview} allows us to count the number of occurrences. 
Note that the intermediate matrix $\interMat{j}$ does not need to be binary. 
Using the ordinary matrix product instead of the binary matrix multiplication, 
it instead computes the number of prefix block matches for each node. Further using multiplication instead of a logical operation in Step (4) yields the number of prefix matches of $P$ for $\resultMat{b}$. 

\subparagraph{Solving \SMBG.} In fact, with some extra care, the algorithm underlying \Cref{thm:mm-generality}, solves the more general \SMBG problem where matches must not span all blocks within the same time complexity.

\begin{corollary}
  \label{coro:mm-generality}
  The \SMBG problem can be solved in  $\cO\left(\left(\sum_{i \in [b-1]} n^{\omega(x, y_i, y_{i+1})}\right) +N\right)$ time, where $n^x = m$ and $n^{y_i} = \setsize{V_i}$, for each $i \in [b]$.
\end{corollary}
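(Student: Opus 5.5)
We prove \Cref{coro:mm-generality} by running a modified version of \Cref{alg:fmm-overview}. In this version, a partial match is allowed to start in any block and to end in any block. The changes affect only the vectors and matrices that are combined entrywise. The matrix products themselves do not change, so the sum $\sum_{i\in[b-1]} n^{\omega(x,y_i,y_{i+1})}$ still bounds the running time.

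The first step handles occurrences that lie inside a single node label. For every $v\in V$ with $|\lambda(v)|\ge m$, we run \Cref{lem:KMP} to test whether $P$ occurs in $\lambda(v)$. This takes $\cO(m+N)$ time in total, and a single-node occurrence immediately answers True.

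The second step lets a match start in any block. For every block $j\in[1,b)$, we build a ``start'' matrix $S_j\in\{0,1\}^{m\times|V_j|}$ with $S_j[i,v]=1$ iff $P[1\dd i]$ is a suffix of $\lambda(v)$. This is exactly $\matchMat{1}$, computed by \Cref{lem:KMP} in $\cO(m+|\lambda(v)|)$ time per node, which sums to $\cO(m|V|+N)$. In the iteration for block $j$, we replace the left operand $\resultMat{j-1}$ of the multiplication by $\resultMat{j-1}\lor S_{j-1}$, with $\resultMat{1}:=S_1$. Then $\interMat{j}[i,u]=1$ iff some predecessor of $u$ ends a walk, starting in any earlier block, whose concatenated label has $P[1\dd i]$ as a suffix. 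The composition in Step~\ref{step:compose} stays as it is. Adding this OR costs only $\cO(m|V_{j-1}|)$ per block. This is dominated by the cost of the product, since writing down an $m\times|V_{j-1}|$ operand already costs that much.

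The third step lets a match end in any block. For each $j\in[2,b]$, we form the ``end'' matrix $\matchMat{b}$ of Step~\ref{step:pattern-ending-check} for $V_j$, that is, $E_j[i,v]=1$ iff $P[i\dd m]$ is a prefix of $\lambda(v)$. We compute it by \Cref{cor:ap-representation-sym}/\Cref{lem:KMP}. We then check whether $\interMat{j}[i-1,v]\land E_j[i,v]$ holds for some $i,v$, which is an entrywise test in $\cO(m|V_j|)$ time. Correctness follows by induction on $j$. The invariant is that $\resultMat{j}$ records exactly the pattern prefixes that are fully consumed by walks ending in $v\in V_j$ and starting in any block $\le j$, where the walk passes through middle nodes that are matched completely. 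This is the argument from the proof of \Cref{thm:mm-generality} with the starting point relaxed. Every occurrence of length at least $2$ has a first node, a last node, and full middle nodes, so the appropriate $S$, $\resultMat{}$ and $E$ test detects it.

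The main obstacle is keeping the running time from growing. Two cost terms must be absorbed by the matrix-product terms. The first is the extra $\cO(m|V|)$ work for the $S_j$ and $E_j$ matrices. It is absorbed because the product for block $j$ already takes $\Omega(m(|V_j|+|V_{j+1}|))$ time to read its input and write its output. The second is the term $N$, which the statement already allows.
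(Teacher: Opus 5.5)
Your proof is correct and follows the paper's argument. OR-ing $S_{j-1}$ into the left operand is exactly the paper's rule of also setting $\resultMat{j}[i,u]=1$ whenever $P[1\dd i]$ is a suffix of $\lambda(u)$, and your per-block end test is the paper's running of Steps~(6)--(7) for every block. You also make explicit two points the paper leaves implicit: occurrences inside a single label (handled via \Cref{lem:KMP}), and why the extra $\cO(m\setsize{V})$ work is dominated by the product terms, namely because $\omega(x,y_j,y_{j+1})\ge x+\max(y_j,y_{j+1})$.
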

\begin{proof}
    In the proof of \Cref{thm:mm-generality}, we have assumed that the occurrences of $P$ span all $b$ blocks of the graph.
    To get rid of this assumption, we may simultaneously assume that a block $V_j$ is either the first, the last, or some regular intermediate block. 
    To handle the pattern starting in block $V_j$, we simultaneously compute the matching matrix $\matchMat{j}$ according to Step (3), detecting internal occurrences of a node's label in $P$, and according to Step (1), matching a prefix of $P$ to a suffix of a node's label. 
    When combining the prefix block matching of predecessors in Step (4), we also set $\resultMat{j}[i,u] = 1$ if $P[1\dd i]$ is a suffix of $\lambda(u)$.  
    To handle the pattern ending in block $V_j$, we simultaneously run Steps (6) and (7) for each block. Both of these extra measures introduce only a constant factor overhead to the running time of \Cref{thm:mm-generality}. 
\end{proof}

In the following section, we establish that \Cref{thm:mm-generality} is optimal (up to subpolynomial factors) if the number of blocks $b$ is at least four for sufficiently dense block graphs.

\section{Lower Bounds}\label{sec:lower-bounds}

Comparing our results for $b \leq 3$ and $b \geq 4$ blocks, we see a jump in the time complexity.
While for $b \leq 3$ we have near-linear-time algorithms (\cite{DBLP:conf/sosa/Pissis25} and \Cref{alg:ap-overview}), we only have an algebraic $\cO(\max(\setsize{V}, m)^\omega + N)$-time or a combinatorial $\cO(m\setsize{E} + N)$-time algorithm for $b \geq 4$ (\Cref{thm:mm-easy} and~\cite{DBLP:journals/jal/AmirLL00}), which for $\omega > 2$ represents a significant increase.

We show that these algorithms are conditionally optimal by giving a matching lower bound, implying that the observed difference in complexity is inherent to \bSMBG.
We further give a second lower bound conditioned on OVH showing that the $\cO(m\setsize{E} + N)$-time algorithm is conditionally optimal already for polylogarithmically many blocks.
Both lower bounds already hold for alphabets of size at least $3$ (so in particular, for the DNA alphabet).

We start in \Cref{sec:hardness} with the hardness assumptions that we rely on for our conditional lower bounds. We then prove \Cref{thm:td-lb,thm:lb-seth} in \Cref{sec:TD,sec:OV}, respectively.

\subsection{Hardness Assumptions}\label{sec:hardness}

Let us consider the well-known $k$-SAT problem.
Given a propositional logic formula in conjunctive normal form which has at most $k$ literals in each clause,
determine whether there exists an interpretation that satisfies the input formula.
\begin{seth} For any constant $\smol > 0$, there exists $k \geq 3$ such that the $k$-SAT problem cannot be solved in $\cO(2^{(1-\smol)n})$ time.
  \end{seth}
The Orthogonal Vector (OV) problem is defined as follows.
Given two sets of $n$ vectors $A,B \subset \{ 0,1\}^d$,
determine whether $\exists \alpha \in A, \exists \beta \in B: \langle \alpha, \beta \rangle = \sum^d_{i=1}{\alpha[i]\beta[i]} = 0$.
It is well-known that SETH implies the following hypothesis~\cite{williams2005new}.
\begin{ovh}
    For $d \in \omega(\log n)$, there is no constant $\smol > 0$ such that the OV problem can be solved in $\cO(n^{2-\smol})$ time.\footnote{For simplicity, we use $d \in \omega(\log n)$. However, a slightly weaker assumption is already implied by SETH, namely: for any constant $\smol > 0$, there is a $c > 1$ such that the OV problem cannot be solved in time $\cO(n^{2-\smol})$ for $d = c \log n$~\cite{williams2005new}.}
\end{ovh}
We further use the Triangle Detection (TD) problem, which consists of finding a triangle in an (undirected) graph or concluding that there is no triangle in the graph. 
\begin{tdh}
        Let $\smol > 0$ be a constant. There is no combinatorial algorithm that solves the TD problem on $n$-node graphs in $\cO(n^{3-\smol})$ time. Likewise, there is no (algebraic) algorithm which solves the TD problem in $\cO(n^{\omega-\smol})$ time.
\end{tdh}
The Triangle Detection hypothesis is a special case of the $k$-Clique hypothesis (for $k =3$) and is fine-grained equivalent to the well-studied and long-standing (combinatorial) BMM conjecture~\cite{roditty2011DynamicShortestPaths,abboud2014PopularConjecturesImply}, stating that (combinatorial) matrix multiplication over the Boolean semiring cannot be solved faster than the standard (combinatorial) matrix multiplication over integers, see \cite{williams2019FineGrainedQuestionsAlgorithms} for an overview.
Note that an $\cO(n^\omega)$-time and a combinatorial $\cO(n^3)$-time algorithm are known for the TD problem~\cite{DBLP:journals/siamcomp/ItaiR78}. The definition of \emph{combinatorial} is a subject of debate. For our purposes, it suffices to forbid the use of algebraic tools such as fast matrix multiplication or Fast Fourier Transform.\footnote{While there is some progress towards formalizing the notion, another common view in the fine-grained complexity community is that we should rather see the conjecture as a tool to show that problems are hitting a barrier in our current toolkit. 
In essence, for these problems, reducing the time complexity below a certain threshold acts as a direct indicator that the problem has been reduced to fast matrix multiplication.
Regardless of the precise meaning of the conjecture, showing that TD reduces to a problem shows that it is among a set of hundreds of problems where nobody knows an algorithm polynomially faster than $\cO(n^3)$ except for Strassen's algorithm and its successors. It also shows that solving this problem faster with a different algorithm would also immediately give a different truly subcubic algorithm for BMM. This has eluded us for a long time. See Section 1.1 of \cite{DBLP:conf/stoc/AbboudFKLM24} for a longer discussion about this topic and why we should care about combinatorial algorithms in the first place.} 

\subsection{Constant Number of Blocks}\label{sec:TD}
Our first lower bound applies to the case where $b \geq 4$ by a reduction 
from the TD problem. The corresponding definition and hardness assumption can be found in \Cref{sec:hardness}. In summary, the problem of finding a triangle in a graph is fine-grained equivalent to BMM.

\begin{theorem}
    \label{thm:td-lb}
    If we can solve \bSMBG with $b \geq 4$ blocks on a binary alphabet in time $T_4(\setsize{V}, N, m)$, we can solve TD in a graph with $n$ nodes in time $\cO(T_4(n, n^2, n) + n^2)$. 
\end{theorem}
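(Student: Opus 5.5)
The plan is to reduce Triangle Detection to \bSMBG with $b=4$ blocks. Each block is a copy of the vertex set of the TD instance, and the edges of the TD graph are copied between consecutive blocks. The pattern and the labels of the first and last block are chosen so that a walk $v_1,v_2,v_3,v_4$ can match $P$ only if $v_1$ and $v_4$ are copies of the same vertex. The middle labels are all equal, so that every walk through the two middle blocks is "transparent". For $b>4$, I would pad with additional middle blocks in the same way.

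Concretely, let $H=([n],E_H)$ be the TD instance, and assume w.l.o.g.\ $n\ge 1$, with vertices numbered $1,\dots,n$. Over the binary alphabet $\{0,1\}$ I set
\[
P \coloneqq 0^n\,1^{n+2}\,0^n ,\qquad m=3n+2 .
\]
For every vertex $a$ I create the following nodes:
\begin{itemize}
\item $a^{(1)}\in V_1$ with label $0^n1^a$;
\item $a^{(2)}\in V_2$ and $a^{(3)}\in V_3$, both with label $1$;
\item $a^{(4)}\in V_4$ with label $1^{n-a}0^n$.
\end{itemize}
For every $\{a,c\}\in E_H$ I add the edges $a^{(1)}\to c^{(2)}$, $a^{(2)}\to c^{(3)}$ and $a^{(3)}\to c^{(4)}$, in both orientations. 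Then $|V|=4n$, $N=\cO(n^2)$, and the construction takes $\cO(n^2)$ time. This gives the claimed $\cO(T_4(n,n^2,n)+n^2)$ bound, up to the constant-factor blowups in the arguments of $T_4$, which are harmless in the fine-grained sense.

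For correctness, the key step is to pin down exactly which prefixes of $P$ can be suffixes of $\lambda(a^{(1)})=0^n1^a$. Since $P$ starts with $0^n$, such a suffix must either be empty or contain the whole run $0^n$. Shorter runs of zeros are excluded because they would be followed by a $1$ in the suffix but by a $0$ in $P$. The empty case I would exclude by the convention that $\alpha$ is nonempty; alternatively, it is ruled out automatically because the rest of the walk contributes at most $2+n+n<m$ characters. Hence the only usable suffix is the whole label, so $\alpha=0^n1^a$ and exactly $n+a$ characters are consumed.

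The two middle nodes then consume the next two characters. These are both $1$, since $a+2\le n+2$. The remaining suffix of $P$ is therefore $1^{n-a}0^n$, and it must be a prefix of $\lambda(c^{(4)})=1^{n-c}0^n$. Since this remaining suffix has the same length as the label, it must equal the whole label, which holds iff $c=a$. Therefore $P$ occurs along a walk $a^{(1)},x^{(2)},y^{(3)},a'^{(4)}$ iff $a'=a$ and $a\text{–}x\text{–}y\text{–}a$ is a closed walk of length $3$ in $H$. In a simple graph without self-loops, such a closed walk is exactly a triangle.

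For $b>4$ blocks, I would insert $b-4$ further middle blocks. Each of them contains one copy of every vertex, connected to its own copy in the next block by an identity edge and labeled $1$, and I lengthen the middle run of $P$ to $1^{n+b-2}$. The same argument goes through, with $|V|=\cO(bn)$; since $b$ is a constant, this is fine.

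I expect the main obstacle to be the label design itself. The pattern is fixed and the middle labels cannot depend on the endpoint vertex, so the equality $v_1=v_4$ has to be enforced purely by the forced split point of $P$. The two delicate points are (i) that $\lambda(a^{(1)})$ admits exactly one matching suffix, and (ii) that the middle run of ones is long enough for every value of $a$, which holds because $a+2\le n+2$. Once these are checked, the rest is bookkeeping on the sizes.
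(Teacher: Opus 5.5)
Your reduction is correct and is essentially the paper's proof. The paper uses $P = 1\,0^{n+2}\,1$ with labels $10^{i}$, $0$, $0$, $0^{n-i}1$, which is your construction with $0$ and $1$ swapped and a single delimiter symbol in place of $0^n$; your identity-copy padding for $b>4$ also spells out a case the paper leaves implicit.

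One small fix is needed in the last-block step: the lengths $2n-a$ and $2n-c$ are equal only once $c=a$ is known, so the length argument is circular. Argue instead that $1^{n-a}0^n$ has a $0$ right after its leading run of $n-a$ ones, and hence can be a prefix of $1^{n-c}0^n$ only if $n-a=n-c$.
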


\begin{proof}
Let $G'=(V',E')$, with $\setsize{V'} = n$, be the graph in which a triangle is to be detected. Assume that $G'$ does not contain self-loops.
We define a block graph $G_B = (V, E)$ with $b=4$ blocks $V = V_1 \sqcup \ldots \sqcup V_4$, each containing a copy of the nodes $V'$. 
We number the nodes $V'=\{v_1, \dots, v_n\}$ and write $v^j_i$ for the copy of $v_i$ in $V_j$.
For each $i \in [n]$, we set the label of each node to
\begin{align*}
    &\ell(v_i^1) \coloneqq 10^{i}, \\ 
    &\ell(v_i^2) = \ell(v_i^3) \coloneqq 0, \text{ and} \\ 
    &\ell(v_i^4) \coloneqq 0^{n-i}1. 
\end{align*}
The edges between blocks $V_j, V_{j+1}$, for $j \in [3]$, correspond to the edges from $E'$, that is, for each $(v_{i_1},v_{i_2}) \in E'$ we add the edges $(v^j_{i_1}, v^{j+1}_{i_2})$ to $E$ for each $j \in [3]$.
We set the pattern to be $P \coloneqq 1 \cdot 0^{n+2} \cdot 1$.
See \Cref{fig:td-lb-dup} for an illustration.

\begin{figure}[ht]
    \centering
    \includegraphics[width=0.65\linewidth,page=1]{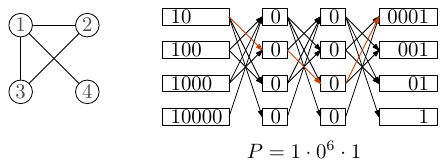}
    \caption{Left, an undirected graph with $n=4$. Right, the corresponding construction of the block graph from the proof of \Cref{thm:td-lb}. One block match for the triangle $123$ is indicated in red.
    Intuitively, we find triangles like this as all paths through the graph that start and end at copies of the same node will have $1\,0^{n+2}\, 1$ as their concatenation of labels. }
    \label{fig:td-lb-dup}
\end{figure}

We have a match in $G_B$ if there are four nodes \[(v^1_{i_1}, v^2_{i_2}, v^3_{i_3}, v^4_{i_4}) \in V_1 \times V_2 \times V_3 \times V_4\] such that: (1) every two consecutive nodes share an edge; and (2) $v_{i_1}$ and $v_{i_4}$ are copies of the same node from $V'$. As $G'$ does not have self-loops, all other nodes cannot be copies of the same node in $V'$.
In that case, the incidences imply a triangle in $G'$.
For the reverse direction, a triangle $C \in (V')^3$ in $G'$ directly gives us a block match. 

For the running time, observe that we only use a constant factor more nodes and edges in $G_B$ than in our input graph $G'$. 
Further, we have $m = \setsize{V'}+4$ and $N \in \Theta(\setsize{V'}^2)$ because we unary encode the indices of the $\setsize{V'}$ nodes in the labels.
\end{proof}

Under the Triangle Detection hypothesis, we can infer that for dense block graphs \ie{$\setsize{E} \in \Theta(\setsize{V}^2)$} the $\cO(\max(\setsize{V}, m)^\omega)$-time algorithm of \Cref{thm:mm-easy} is conditionally optimal (up to subpolynomial factors). We get similar results for combinatorial algorithms using the combinatorial BMM conjecture.

\begin{corollary}
\label{cor:actual-LB-TD}
    There is no $\cO(\max(\setsize{V}, m)^{\omega - \varepsilon} + N)$-time algorithm for \bSMBG for $b \geq 4$ blocks and a constant $\smol > 0$ under the Triangle Detection hypothesis.
\end{corollary}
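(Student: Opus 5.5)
The plan is to derive \Cref{cor:actual-LB-TD} as a contrapositive of \Cref{thm:td-lb}. Suppose, for contradiction, that some algorithm solves \bSMBG with $b\ge 4$ blocks in time $T_4(\setsize{V},N,m)=\cO(\max(\setsize{V},m)^{\omega-\smol}+N)$ for a constant $\smol>0$. Given a TD instance on $n$ nodes, we apply the reduction in the proof of \Cref{thm:td-lb}. It produces a block graph with $\setsize{V}=4n$ nodes, pattern length $m=n+4$, and total label length $N\in\Theta(n^2)$. Hence $\max(\setsize{V},m)\in\Theta(n)$, and the hypothetical algorithm runs in $\cO(n^{\omega-\smol}+n^2)$ time. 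Adding the $\cO(n^2)$ construction time, \Cref{thm:td-lb} then gives a TD algorithm running in $\cO(n^{\omega-\smol}+n^2)$ time.

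The main obstacle is the additive $n^2$ term coming from $N$. If $\omega>2$, we set $\smol'=\min(\smol,\omega-2)>0$. Then $n^2\le n^{\omega-\smol'}$, so TD is solved in $\cO(n^{\omega-\smol'})$ time, which contradicts the algebraic part of the Triangle Detection hypothesis.

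If $\omega=2$, the statement as written requires an $\cO(n^{2-\smol})$ TD algorithm to reach a contradiction, and the $n^2$ term blocks this. Since $\omega$ is only known to lie in $[2,2.372)$, I would make the assumption $\omega>2$ explicit. I would also note that the hypothesis itself implicitly presumes $\omega>2$: for $\omega=2$, TD cannot be solved in sub-quadratic time anyway, because the input alone has size $\Theta(n^2)$. In that regime the corollary is vacuous or trivially consistent.

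The remaining steps are routine. One is padding to $b>4$ blocks, which \Cref{thm:td-lb} already covers by its statement. Another is checking that a binary alphabet and a constant-factor blow-up in $\setsize{V}$ do not affect the exponent.
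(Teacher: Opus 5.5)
Your proposal is correct and follows the paper's route: the corollary is the contrapositive of \Cref{thm:td-lb}, since the $4$-block construction has $\setsize{V}=4n$, $m=n+4$ and $N\in\Theta(n^2)$, so a fast \bSMBG algorithm would solve TD in $\cO(n^{\omega-\varepsilon}+n^2)$ time. Your explicit absorption of the additive $n^2$ term via $\varepsilon'=\min(\varepsilon,\omega-2)$ is a step the paper silently skips, and your $\omega>2$ caveat is legitimate; for $\omega=2$ the claim is not vacuous, but it then holds unconditionally by an input-size argument, because suitably chosen instances with $\setsize{E}\in\Theta(\setsize{V}^2)$ and $N,m\in\cO(\setsize{V})$ force any algorithm to read $\Omega(\setsize{V}^2)$ edges.
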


\begin{remark}
There is no combinatorial $\cO(\max(\setsize{V}, m)^{3 - \varepsilon} + N)$-time or $\cO((\setsize{E}m)^{1-\smol} + N)$-time algorithm for \bSMBG for $b \geq 4$ and a constant $\smol > 0$ under the combinatorial BMM conjecture.
\end{remark}
  
We also remark that one can infer similar lower bounds for sparse underlying block graphs based on the Strong Triangle Conjecture \cite{abboud2014PopularConjecturesImply}, stating that triangle detection needs time at least $\setsize{E}^{2\omega /(\omega + 1) - o(1) }$. 

This shows that \Cref{alg:fmm-overview} is optimal up to subpolynomial factors for $b \geq 4$ blocks. 
We note that our reduction uses \emph{duplicate} labels in the constructed block graph.
While a construction without duplicate labels that does not blow up the number of nodes remains unknown for $b=4, 5$ blocks, we give a simple alternative construction with $b=6$ blocks that avoids duplicate labels. 

\begin{lemma}
    \label{lem:td-lb-nodup}
    If we can solve \bSMBG for $b \geq 6$ blocks on a block graph with unique labels on a binary alphabet in time $T_6^{\text{uniq}}(\setsize{V}, N, m)$, we can solve TD in a graph with $n$ nodes in time $\cO(T_6^{\text{uniq}}(n, n^2, n) + n^2)$. 
\end{lemma}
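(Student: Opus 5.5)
We reduce from TD with six blocks, keeping all labels within a block distinct. As in \Cref{thm:td-lb}, the first and last blocks encode the identity of a node, in unary, using labels $10^{i}$ and $0^{n-i}1$; labels of this form are already unique within their block. The remaining work is to make the middle blocks carry unique labels while still contributing a fixed total length to every walk.

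Concretely, we take six copies $V_1,\dots,V_6$ of $V'=\{v_1,\dots,v_n\}$ and label them as follows:
\begin{itemize}
\item $\ell(v_i^1)\coloneqq 10^{i}$ and $\ell(v_i^6)\coloneqq 0^{n-i}1$;
\item $\ell(v_i^2)\coloneqq 0^{n-i}$ and $\ell(v_i^3)\coloneqq 0^{i}$;
\item $\ell(v_i^4)\coloneqq 0^{n-i}$ and $\ell(v_i^5)\coloneqq 0^{i}$.
\end{itemize}
To keep block-$2$ and block-$4$ labels nonempty and distinct, we shift every exponent by $1$, using $0^{n-i+1}$ and $0^{i}$ (both $i$ ranges are then injective). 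The pair $(V_2,V_3)$ then acts as one ``thick'' copy of the node $v_i$, joined by the edge $(v_i^2,v_i^3)$ only, i.e.\ a perfect matching. Likewise $(V_4,V_5)$ is a second thick copy joined by the identity matching. Each pair contributes exactly $0^{n+1}$ along any walk, and this holds only through the matching edges. The adjacency of $G'$ is placed between $V_1\to V_2$, $V_3\to V_4$ and $V_5\to V_6$. We set the pattern to $P\coloneqq 1\,0^{3n+2}\,1$, adjusting the constant to the exact shift chosen.

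For correctness, note first that every walk $v^1_{a},v^2_{b},v^3_{b},v^4_{c},v^5_{c},v^6_{d}$ spells $1\,0^{a}\,0^{n+1}\,0^{n+1}\,0^{n-d}\,1$. The matching edges force the second index in each pair to equal the first, so the walk has this form. Since $P$ has exactly two $1$s at its ends, any occurrence must use the full labels of $V_1$ and $V_6$. The number of zeros then equals $3n+2$ if and only if $a=d$. Hence an occurrence corresponds to edges $(a,b),(b,c),(c,a)\in E'$. Because $G'$ has no self-loops, this is a triangle, and conversely every triangle yields such a walk. For $b>6$, we pad with extra matching-connected blocks whose paired labels $0^{n-i+1},0^{i}$ are unique and add a fixed length, or alternatively reuse the trick of the SMBG variant. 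The sizes are $|V|=6n$, $m=\Theta(n)$ and $N=\Theta(n^2)$, with the edge count proportional to $|E'|+n$, so the running time $\cO(T_6^{\text{uniq}}(n,n^2,n)+n^2)$ follows.

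The main obstacle is ensuring that the zero counts from the middle blocks cannot be traded between nodes. Without the matching edges, a walk could pair $0^{n-b+1}$ with $0^{b'}$ for $b\neq b'$ and produce spurious equalities. This is exactly why two blocks per ``middle vertex'' are needed: we need $2\cdot 2$ middle blocks plus the two end blocks, which explains the threshold $b=6$. I will carefully verify that the complementary lengths are unique within each block and that no alternative split of $P$ at the boundary $1$s exists.
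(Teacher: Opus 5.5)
Your construction is correct and is essentially the paper's proof: pairs of middle blocks joined only by identity matchings and carrying complementary unary lengths, with the edges of $E'$ placed between $(V_1,V_2)$, $(V_3,V_4)$ and $(V_5,V_6)$, so that contracting each pair recovers the four-block reduction of \Cref{thm:td-lb}. The differences are cosmetic. The paper chooses globally distinct lengths ($0^{i},0^{2n-i+1},0^{2n+i},0^{4n-i+1}$ with $P=1\,0^{9n+2}\,1$), whereas reusing the same label sets in both pairs suffices because uniqueness is only required within a block; also, for odd $b-6$ you would lengthen one matching chain rather than add pairs, a case the paper does not discuss either.
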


\begin{proof}
The idea follows closely the proof of \Cref{thm:td-lb}. We focus on the changes in the construction.

Let $G'=(V',E')$ be the graph in which a triangle is to be detected. Assume that $G'$ does not have self-loops. 
We define a block graph $G_B = (V, E)$ with $b=6$ blocks $V = V_1 \sqcup \ldots \sqcup V_6$, each containing a copy of the nodes $V'$. 
We number the nodes $V'=\{v_1, \dots, v_n\}$ and write $v^j_i$ for the copy of $v_i$ in $V_j$.
We set the label of each node $v_i^j \in V^j$ to
\begin{align*}
    &\ell(v_i^1) \coloneqq 10^{i}, \\ 
    &\ell(v_i^2) \coloneqq 0^i, \\ 
    &\ell(v_i^3) \coloneqq 0^{2n-i+1}, \\ 
    &\ell(v_i^4) \coloneqq 0^{2n+i}, \\ 
    &\ell(v_i^5) \coloneqq 0^{4n-i+1}, \\ 
    &\ell(v_i^6) \coloneqq 0^{n-i}1.
\end{align*}
Clearly, there are no duplicate labels in the block graph.

The edges between blocks $V_1, V_2$ and $V_3, V_4$ as well as $V_5, V_6$ correspond to edges from $E'$.
For blocks $V_2, V_3$ and $V_4, V_5$, we add edges $(v^j_i, v^{j+1}_i)$ to $E$ for all $i \in [n], j \in \{2,4\}$ \ie{we only connect a node in $V_2$ ($V_4$) to its respective copy in $V_3$ ($V_5$)}.
We set the pattern to be $P \coloneqq 1\cdot 0^{9n+2} \cdot 1$.

For correctness, note that we can contract blocks $V_2, V_3$ and $V_4, V_5$, as there is only a single edge for each node, so that each resulting node within a block has the same label. The remaining proof works along the lines of the proof of \Cref{thm:td-lb}.
\end{proof}

\Cref{cor:actual-LB-TD} accordingly holds for algorithms on block graphs with distinct labels if we consider graphs with $b \geq 6$ blocks.
As we otherwise settle the complexity of the problem, this leaves the following narrow, yet enticing open question.

\begin{openquestion}
    Is there an $\cO(\max(\setsize{V}, m)^{\omega-\smol} + N)$-time algorithm, for any constant $\smol > 0$, that solves \bSMBG for $b \in \{4, 5\}$ exploiting unique labels? Or, is there a lower bound avoiding duplicate labels?
\end{openquestion}

\subsection{Logarithmic Number of Blocks}\label{sec:OV}
While our previous lower bound of \Cref{thm:td-lb} for graphs of $b \geq 4$ blocks rules out combinatorial improvements over the state-of-the-art $\cO(m\setsize E +N)$-time algorithm~\cite{DBLP:journals/jal/AmirLL00}, the SETH-based lower bound of~\cite{DBLP:journals/algorithmica/EquiNACTM23} also rules out non-combinatorial improvements.
However, their SETH-reducing instances require $b\in \Theta(|V|)$ many blocks. 
We give an improved lower bound, conditional on OVH and thus SETH, that uses only polylogarithmically many blocks.
Thus, we show that the problem becomes hard, to any type of algorithms, for much fewer blocks than previously known.

The formal definition and hardness assumption underlying OVH can be found in \Cref{sec:hardness}. 

\thmLogLB

\begin{proof}
    The proof is based on the ideas of \cite{DBLP:journals/siamcomp/BernardiniGPPR22} and \cite{DBLP:journals/algorithmica/EquiNACTM23}.
    In contrast to \cite{DBLP:journals/algorithmica/EquiNACTM23}, whose construction has unique node labels, we are able, for the more general setting, to construct an alignment gadget which uses $\cO(\log n)$ many blocks via a Fenwick-tree construction.\footnote{As Fenwick trees are well-known, we omit a full formal treatment. See \cite{fenwick1994} for the original paper and \cite{DBLP:journals/jfp/Yorgey25} for an exposition.}

    Consider an OV instance with two sets $A,B \subset \{0,1\}^d$ with $n$ elements each and ${d \in \omega(\log n)}$. 
    Without loss of generality, we may assume that $n$ is a power of $2$.
    We construct our pattern $P$ and block graph $G$ over the alphabet $\Sigma= \{0, 1, \$\}$.
    We first construct a gadget $G_A$, a block graph, for the set $A$ such that a single string $\beta \in B$ can be matched in the block graph $G_A$ if and only if there exists $\alpha \in A$ such that $\langle \alpha, \beta\rangle = 0$.
    Afterward, we add a prefix and a suffix alignment gadget to the block graph such that the resulting block graph includes any prefix and suffix of $P$ that consists of at most $n-1$ elements of $B$.
    
    We first define the pattern as \[
    P \coloneqq \$^{x} \cdot \beta_1 \cdot \$^x \cdot \beta_2 \cdot \$^x \cdot \ldots \cdot \beta_n \cdot \$^x,
    \]
    with $x = \log{n}$. 
    We now construct the set-gadget $G_A$. 
    First, we define a coordinate gadget $c$ with \[c(z) \coloneqq \begin{cases}
    \{0\}, & \text{if } z = 1, \\
    \{0, 1\}, & \text{if } z = 0, \\
    \end{cases}\] as shown on the left in \Cref{fig:G_A-gadgets}.
    We compose $d$ coordinate gadgets to construct an element gadget $g$, as shown in the middle of \Cref{fig:G_A-gadgets}, by setting $g(\alpha) \coloneqq c(\alpha_1)~c(\alpha_2) \ldots c(\alpha_d)$, where for each $i \in [d]$ we construct a block of nodes with labels from $c(\alpha_i)$, fully connected to its preceding block.
    For the set-gadget $G_A$, we set the element gadgets $\{g(\alpha) \mid \alpha \in A\}$ in parallel, as can be seen on the right in \Cref{fig:G_A-gadgets}.

    \begin{figure}
        \centering
        \includegraphics[width=0.9\linewidth,page=5]{block-graphs-sketches.pdf}
        \caption{The coordinate, element, and set gadget on an example set $A$.}
        \label{fig:G_A-gadgets}
    \end{figure}
    
    \begin{figure}[ht]
        \centering
        \includegraphics[width=\linewidth,page=2]{block-graphs-sketches.pdf}
        \caption{
        The construction of the prefix alignment gadget. Left, the subdivision of the pattern into ranges, forming a Fenwick tree, right, the resulting block graph. The dashed edges as well as the root and isolated nodes can be removed and are only shown for illustrative purposes. Note that any prefix $\$^3 \beta_1 \$^3 \ldots \beta_i \$^3$ of $P$ is included in the resulting block graph for $i \in \{0,\ldots, 7\}$. For $i=0$, such a path may start at the last shown block. As we must only allow proper prefixes of $P$ to be matched, we make it impossible to match $\beta_8$.
        }
        \label{fig:prefix-gadget}
    \end{figure}
    
    Next, we turn towards the prefix and suffix alignment gadgets. We describe here the prefix alignment gadget; the suffix gadget is constructed and works analogously.
    The gadget's idea is to include a path for any prefix \ie{for any of the first $\ell \in [n - 1]$ elements in $B$ in $P$}.
    This path will match the string $\$^x \beta_1 \$^x \ldots \beta_\ell \$^x$.
    We do so via a Fenwick tree (also known as Binary Indexed Tree); the basic idea is depicted in \Cref{fig:prefix-gadget}.

    To connect the prefix construction with the set-gadget $G_A$, we fully connect the last block of the prefix construction with the starting block of $G_A$.

    Analogously to the prefix alignment gadget, reversing the pattern and the resulting block graph, we construct the suffix alignment gadget and connect it to $G_A$.

    For correctness, assume that there exists $i, j \in [n]$ such that $\langle \alpha_i, \beta_j \rangle =0$. 
    We first take a path in the prefix gadget that covers the part $\$^x \beta_1 \ldots \beta_{j-1} \$^x$ in the pattern. By the Fenwick tree construction, such a path always exists.
    By construction of the gadget $G_A$, more specifically the element gadget for $\alpha_i$, the string $\beta_j$ is included in the block graph of $g(\alpha_i)$. 
    By the construction of the suffix gadget, there is a path that includes the string $\$^x \beta_{j+1} \ldots \beta_{n} \$^x$.
    As all three parts -- prefix, $G_A$, and suffix gadget -- are fully connected, there exists a path in $G$ that has a concatenated label including $P$, and the \bSMBG instance is a YES instance.

    Assume that there is a path in $G$ such that $P$ is included in its concatenated label.
    As both the prefix and the suffix alignment gadget are necessary for matching the full pattern, since both gadgets are too short, and $G_A$ does not include $\$^x$, any such path must span all three parts.  
    Because $G_A$ does not include the character $\$$ and the prefix and suffix gadgets start and end with a $\$$, the string included by $G_A$ must exactly match a $\beta \in B$. By construction of $G_A$, $\beta$ must be orthogonal to an element in $A$, thus there exists an OV witness.

    We analyze the size of our constructed instance. The pattern has length $nd+(n+1)x \in \Theta(n d)$.
    For the block graph, the prefix and suffix alignment gadgets both include $\log n$ many blocks of $\log n$ many nodes each. For their labels, their combined length does not exceed $|P| \log{n}$, as each part of the pattern is not included in more than $\log n$ many nodes by standard results on Fenwick trees.
    By the same standard results, each node has an out-degree of at most $\log n$, so we have $\Theta(\log^2n)$ edges in the prefix and suffix gadget.
    For the set-gadget $G_A$, each element gadget $g(\alpha)$ for $\alpha \in A$ can be considered independently. Each gadget, connected in parallel, introduces $d$ blocks with at most $2$ nodes and a single character per node. Thus in $G_A$ we have $d$ blocks with at most $2n$ nodes and combined $2dn$ label length.
    As the three parts are fully connected, in total $G$ has $|E| \in \Theta(nd + n\log n)$ many edges, as well as $b \in \Theta(d)$ many blocks, and $|V| \in \Theta(n d)$ many nodes with a combined label length of $N \in \Theta(n d\log n)$.

    Thus, assuming \bSMBG can be solved in time $\cO((m|E|)^{1-\varepsilon} + N)$ for a constant $\varepsilon > 0$ and the specific block number $b$ as constructed here, we can solve OV in time $\cO((n^2d^2 \log n)^{1- \varepsilon}) \leq \Oish(n^{2- \varepsilon})$ for $d \in \polylog(n)$, thus contradicting OVH. 
    Note that the complexity of \bSMBG is monotone in the number of blocks as we can always add additional blocks to our graph without asymptotically increasing the number of used nodes, edges, the pattern length, or the total length of the node labels. Thus, this gives us the same lower bound for \bSMBG for all $b \in \omega(\log n)$.
\end{proof}

\section*{Acknowledgments}
Paweł Gawrychowski was partially supported by the Polish National Science Centre grant number 2023/51/B/ST6/01505.

\bibliographystyle{alphaurl}
\bibliography{biblio}

\end{document}